\documentclass[12pt]{article}

\usepackage[letterpaper,top=2cm,bottom=2cm,left=3cm,right=3cm,marginparwidth=1.75cm]{geometry}

\usepackage{amsmath,amssymb, amsthm}
\usepackage{graphicx}
\usepackage{ytableau}
\usepackage{booktabs}
\usepackage[colorlinks=true, allcolors=blue]{hyperref}
\usepackage{enumerate}
\usepackage[shortlabels]{enumitem}

\usepackage[numbers,sort&compress]{natbib}

\usepackage{tikz,fullpage}
\usetikzlibrary{arrows,
        petri,
        topaths,
        positioning,
        shapes.geometric,
        calc,arrows.meta}
\usepackage{tkz-berge}

\usepackage{pgf}

\usetikzlibrary{arrows,automata}
\usepackage[utf8]{inputenc}

\usepackage{caption}
\usepackage{subcaption}
\usepackage{xcolor}

\usepackage{algorithm}
\usepackage{algpseudocode}

\usepackage{authblk}

\newtheorem{Theorem}{Theorem}[section]

\newtheorem{Lemma}[Theorem]{Lemma}
\newtheorem{Corollary}[Theorem]{Corollary}

\newtheorem{Problem}[Theorem]{Problem}

\theoremstyle{definition}
\newtheorem{Definition}[Theorem]{Definition}
\newtheorem{Observation}[Theorem]{Observation}
\newtheorem{Remark}[Theorem]{Remark}

\DeclareMathOperator{\supp}{supp}

\title{Tree Buckets and the Reconstruction of Pairs of Phylogenetic Trees}
\author[1]{Sky Basire}
\author[1, *]{Michael Hendriksen}
\affil[1]{Department of Mathematics and Statistics, University of New South Wales}
\affil[*]{Corresponding author: Michael Hendriksen, m.hendriksen@unsw.edu.au}

\begin{document}

\maketitle

\begin{abstract}
Phylogenetic trees are used in evolutionary biology to represent the evolutionary history of a collection of taxa. As we have incomplete information about any evolutionary history, recovering trees from partial information is a focus of phylogenetic combinatorics. However, in some cases the available data does not describe a single phylogenetic tree. We consider recovery of pairs of phylogenetic trees from their combined subtrees with $k$ leaves, which we call a $k$-bucket. We establish the exact cases in which these pairs of trees are recoverable from their subtrees with a single leaf removed, both when just considering the structure of the trees, and when additionally considering the set of taxa on the leaves. We also consider recovery of pairs of trees with labelled leaves from their rooted triples, and establish that they are recoverable up to a sequence of subtree swaps.
 \smallskip \newline
\noindent \textbf{Keywords:} 
phylogenetic tree, buckets, reconstruction, multideck, subtrees

\end{abstract}

\section{Introduction}

Reconstruction of evolutionary histories in phylogenetics often requires combining partial information from multiple sources \cite{kapli2020phylogenetic} \cite{delsuc2005phylogenomics}. For instance, when reconstructing a phylogenetic tree, it is well-known that a single labelled tree can be reconstructed if each of its labelled $3$-leaf subtrees are known \cite{aho1981inferring}, and a single unlabelled tree can be reconstructed if each of its $(n-1)$-leaf subtrees are known \cite{Decks_of_rooted_binary_trees}. Recent work has considered the general issue of reconstructing a phylogenetic tree from partial information, for instance if only some of the subtrees with a particular number of leaves are known \cite{Decks_of_rooted_binary_trees}. However, numerous reconstruction techniques can result in conflicting subtrees, e.g. sampling of different genes under the presence of incomplete lineage sorting or from histories with horizontal gene transfer or other recombination events \cite{felsenstein2003inferring}. Furthermore, in settings such as metagenomic analyses \cite{delsuc2005phylogenomics} or supertree reconstruction \cite{bininda2002super}, one may have to reconcile conflicting partial information arising from different reconstruction techniques.

A natural question, therefore, is how to determine whether, given partial information, say the set of subtrees with a given number of leaves, some of which may conflict with each other, is consistent with a set of multiple trees.

In the present manuscript, we develop the systematic study of this problem. In particular, we consider the problem of whether, given a set of subtrees that may disagree for some or all of their structure, when one can tell whether these subtrees are consistent with a set of two trees. We model this by taking a pair of trees, combining their sets of $k$-leaf subtrees into one multiset, which we call the $k$-bucket, and asking when those original two trees can be recovered.

In contrast to the unique reconstruction possible in the classical case of a single tree, in the case of mixed signals from two trees various phenomena can arise that make it impossible to tell from which initial tree a signal has arisen. For instance, mixed signals from pairs of labelled trees may admit non-trivial transformations that preserve their $k$-bucket. In particular, a swap of a suitable subtree between the two may result in an identical multiset of $k$-leaf subtrees.

Similar problems also have a strong historical context in combinatorics, and in particular the present problem may be viewed as a natural analogue of the well-known graph reconstruction conjecture \cite{graph_reconstruction_survey}, which hypothesises that a graph can be reconstructed from its induced subgraphs. The primary differences between our problem and classical graph reconstruction is twofold; firstly we consider partial information generated by two objects rather than one, and secondly, the information is pooled, so further difficulty is imposed by the loss of provenance of the partial objects.

In the present paper, we obtain three main results in this context. We show that for a pair of unlabelled trees, aside from a finite number of small pathological cases with $8$ or fewer leaves (which we characterise), one can always reconstruct the original trees. We further show that reconstruction can be impossible for arbitrarily large pairs of trees from their rooted triples, but characterise the phenomenon from which this arises - the aforementioned so-called subtree swap. Finally, we show that, again, aside from some small pathological cases with six or fewer leaves (again characterised), reconstruction is always possible for labelled trees from their $(n-1)$-leaf subtrees.

We anticipate that the present results will provide a foundation for future extensions of this problem, such as mixed signals from a larger number of trees, or pairs of trees with only a partial $k$-bucket provided. In particular, our present results provide extremal cases, as we assume that all subtrees with $k$ leaves are known for both trees, and still there exist cases of pairs of trees that cannot be recovered. This shows that if one does not know from which tree each obtained subtree has come from, even with the maximum possible information in terms of subtree structure it may be impossible to recover the pair of input trees. 

In Section \ref{s:prelim} we establish the background theory of phylogenetic trees and formally introduce the concept of $k$-buckets. In Section \ref{s:general} we define reconstructibility of pairs of trees, and consider reconstruction of a number of tree statistics - in particular, root balance, which forms a key part of later proofs. In Section \ref{s:unlabelledn-1}, we show that reconstruction of pairs of unlabelled trees from their $(n-1)$-buckets is always possible, aside from some small pathological counterexamples. In Section \ref{s:label3}, we characterise reconstruction of pairs of labelled trees from their $3$-bucket, and classify exactly when reconstruction is impossible (although this can occur for arbitrarily large trees). In Section \ref{s:labeln-1}, we then show that reconstruction of pairs of labelled trees from their $(n-1)$-buckets is always possible, again aside from some small pathological counterexamples. We then conclude in Section \ref{s:discuss} with some potential pathways for future research.

\section{Preliminaries}\label{s:prelim}

In the present paper we exclusively consider rooted, binary phylogenetic trees, which are permitted to be either labelled or unlabelled. We will generally omit some or all of the adjectives ``rooted, binary, phylogenetic'', but all trees considered are of this form. However, we do consider both labelled and unlabelled trees and this will always be specified.

\begin{Definition}
  A \emph{(rooted) binary phylogenetic tree} $T$ on $n$ leaves is a directed acyclic graph whose vertex set satisfies the following:
  \begin{itemize}
    \item There is one vertex of in-degree 0 and out-degree $2$, called the \emph{root};
    \item There are $n$ vertices of in-degree 1 and out-degree 0, called \emph{leaves}; and
    \item All other vertices, the \emph{internal} vertices, have in-degree  $1$ and out-degree $2$.
  \end{itemize}
  If, furthermore, there exists a set $X$ such that $|X|=n$ and the leaves of $T$ are bijectively labelled by the elements of $X$, then $T$ is referred to as a \emph{labelled} phylogenetic tree. Denote the set of labelled binary phylogenetic trees on $X$ by $T(X)$, and the set of unlabelled binary phylogenetic trees on $n$ leaves by $UT(n)$.
\end{Definition}

We will require some standard definitions used to describe trees as well.

\begin{Definition}
    Let $T$ be a phylogenetic tree and $u,v \in V(T)$. If there is an arc $(u,v) \in E(T)$, then $u$ is referred to as the parent of $v$, and $v$ is referred to as a child of $u$. If there is some pair of leaves $\ell_1,\ell_2$ such that there is a third distinct vertex $v$ that is the parent of both $\ell_1$ and $\ell_2$, then $\ell_1$ and $\ell_2$ are referred to as a \emph{cherry}. A phylogenetic tree that contains precisely one cherry is referred to as a \emph{caterpillar} tree. Suppose a caterpillar with $k$ leaves has cherry $(a_1,a_2)$ with parent $v$, and the leaf child of the parent of $v$ is $a_3$, the leaf child of the parent of the parent of $v$ is $a_4$, and so on. We will denote such a caterpillar by $a_1a_2a_3 \dots a_k$ (noting this is unique up to the order of $a_1$ and $a_2$).
\end{Definition}

 As an example of the above notation for caterpillar trees, $T_2$ in Figure \ref{fig:bucket_example} would be denoted $abecd$ (or $baecd$).

\begin{Definition}
    Let $T$ be a phylogenetic tree, and $P=\big((v_0,v_1),(v_1,v_2)\dots,(v_{k-1},v_k) \big)$ a sequence of distinct edges in $T$. Then $P$ is referred to as a \emph{path}. The number of edges $k$ in $P$, is referred to as the \emph{length} of $P$. The \emph{height} of a tree $T$ is the length of the longest path in $T$ such that $v_0$ is the root of $T$ and $v_k$ is a leaf. 
\end{Definition}

For example, $T_2$ in Figure \ref{fig:bucket_example} has a height of $4$, given by the path from the root to leaf $a$.

\begin{Definition}
    Let $T$ be a phylogenetic tree and $v$ a degree-$2$ vertex in $V(T)$, with parent $u$ and child $w$. Then the operation of deleting $v$ and its adjacent edges, and adding the edge $(u,w)$ is referred to as \emph{suppressing} $v$.
\end{Definition}

\begin{Definition}
    Let $T$ be a labelled phylogenetic tree and $Y \subseteq X$. Then the \emph{subtree of $T$ induced by $Y$} is the minimal connected subgraph of $T$ that contains the leaves labelled by $Y$, with any vertices of both in-degree and out-degree $1$ suppressed. If $S$ is a subtree of $T$ induced by $Y$, we may simply refer to $S$ as a subtree of $T$ without specification of $Y$.
\end{Definition}

Recent research has considered the \emph{$k$-multideck} of a given tree $T$, which is the multiset of subtrees of $T$ with $k$ leaves. In particular, Clifton et al. \cite{Decks_of_rooted_binary_trees} consider problems regarding their cardinality and extremal values, as well as similar problems for a related concept, the \emph{$k$-deck}, which ignores multiplicity and will not be considered in this manuscript.

\begin{Definition}
    Let $T$ be a phylogenetic tree. If $T$ is labelled, then the \emph{$k$-multideck} of $T$, denoted $M_k(T)$ is the multiset of subtrees of $T$ induced by $Y \subseteq X$ where $|Y|=k$. If $T$ is not labelled, then the \emph{$k$-multideck} of $T$ is defined to be the multiset of minimal connected subgraphs on $T$ that contain $k$ leaves, where those $k$ leaves are leaves in $T$, and vertices of both in-degree and out-degree $1$ are suppressed.
\end{Definition}

The multideck is defined for a single phylogenetic tree, but in the present manuscript we wish to consider pairs of trees.

\begin{Definition}
    Let $T_1$ and $T_2$ be two phylogenetic trees and define $B_k(\{T_1,T_2\}) = M_k(T_1) \cup M_k(T_2)$ (defined as a multiset). We refer to this as the \emph{$k$-bucket} of $T_1$ and $T_2$.
\end{Definition}

Motivated by recoverability of pairs of phylogenetic trees, we have the following natural question.

\begin{Problem}
    Let $T_1$ and $T_2$ be a pair of phylogenetic trees. When does $B_k(\{T_1,T_2\})$ have a unique pre-image?
\end{Problem}
We note in particular that we consider $T_1$ and $T_2$ as a set, so order does not matter in their recovery. Furthermore, although we define $k$-buckets generally, we will be focusing on the $k=3$ and $k=n-1$ cases.

\begin{Observation}
    The function $B_k$ never has a unique pre-image for unlabelled trees if $k=3$, as there is a unique unlabelled tree on $3$ leaves. Thus, if $T_1,T_2$ are any pair of unlabelled trees on $n$ leaves, $B_k(\{T_1,T_2\})$ is simply the multiset of $\binom{n}{3}$ copies of the unlabelled $3$-leaf tree.
\end{Observation}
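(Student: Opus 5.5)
The argument is a direct computation, and it turns on exhibiting two distinct pre-images with the same bucket. I will first show that every unlabelled tree on $n\ge 3$ leaves has only one isomorphism type of $3$-leaf subtree. Let $S$ be a $3$-leaf subtree of some $T\in UT(n)$, obtained as the minimal connected subgraph spanning three leaves with degree-$2$ vertices suppressed. By the definition of a rooted binary phylogenetic tree, $S$ has a root of out-degree $2$, three leaves, and every other vertex has in-degree $1$ and out-degree $2$.

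Counting edges gives $2(\#\text{internal}+1) = \#\text{vertices}-1$. Hence $S$ has exactly one non-root internal vertex, which is the parent of a cherry. The root's other child is the third leaf. So $S$ is the unique caterpillar shape on three leaves, and unlabelled, all such trees are isomorphic.

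It follows that $M_3(T)$ consists of $\binom{n}{3}$ copies of this tree, one for each $3$-subset of the $n$ leaves. Therefore $B_3(\{T_1,T_2\})$ is the multiset of $2\binom{n}{3}$ copies of it (the statement's ``$\binom n3$'' presumably counting per tree). This depends only on $n$, not on the shapes of $T_1$ and $T_2$.

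Non-uniqueness of the pre-image then requires exhibiting two different sets $\{T_1,T_2\}\ne\{T_1',T_2'\}$ of trees on $n$ leaves. For $n\ge 4$, $UT(n)$ contains at least two trees: the caterpillar, and a tree with two cherries (for example the balanced tree on four leaves, with extra leaves attached to make larger examples). So $\{C,C\}$ and $\{C,D\}$ have the same bucket.

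The only real obstacle is the degenerate small cases. For $n=3$ there is a single tree, so the pre-image $\{T,T\}$ is in fact unique. I will therefore note that the observation is meant for $n\ge4$, or for the $k=3$ setting in which the trees are genuinely varied. Beyond that caveat, the proof is the counting argument above.
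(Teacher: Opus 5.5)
Your proof is correct and follows the paper's own one-line justification: every $3$-leaf subtree has the unique $3$-leaf shape, so the bucket depends only on $n$. You also supply what the paper leaves implicit, namely the edge count, an explicit pair $\{C,C\}$ versus $\{C,D\}$ for $n\ge 4$, and two valid corrections: the paper uses additive multiset union (see the multiplicity-$2$ remark in Figure~\ref{fig:bucket_example} and the ``eight copies'' in the $n=4$ case), so the bucket has $2\binom{n}{3}$ copies, and for $n=3$ the pre-image $\{T,T\}$ is unique, so ``never'' must be read as $n\ge 4$.
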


Due to this, while we consider $(n-1)$-buckets for both labelled and unlabelled trees, we will only consider $3$-buckets of labelled trees. Finally, we note that throughout this manuscript we assume that a $k$-bucket has been obtained from \emph{some} pair of trees, and is not just an arbitrary multiset of trees on $k$ leaves. We leave the full characterisation of whether a multiset is such a $k$-bucket to future research.

\begin{figure}[ht]
    \centering
    \begin{subfigure}[b]{0.45\textwidth}
        \centering
        \caption*{$T_1$}
        \includegraphics[width=\textwidth]{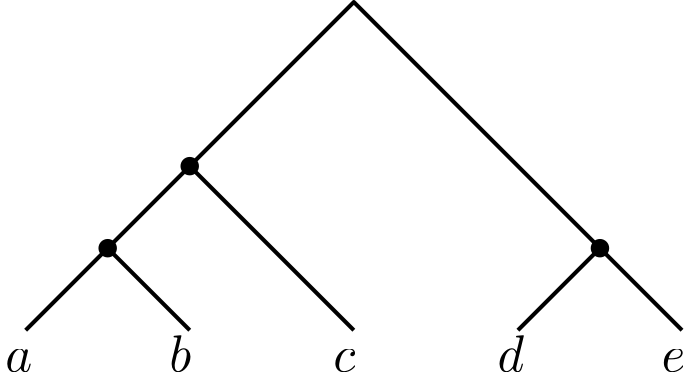}
    \end{subfigure}
    \hfill
    \begin{subfigure}[b]{0.45\textwidth}
        \centering
        \caption*{$T_2$}
        \includegraphics[width=\textwidth]{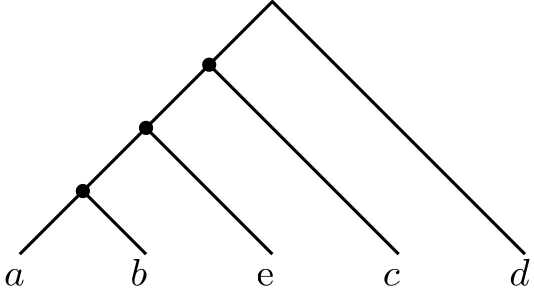}
    \end{subfigure}
    \newline
    \newline
    \begin{subfigure}[b]{0.18\textwidth}
        \centering
        \includegraphics[width=\textwidth]{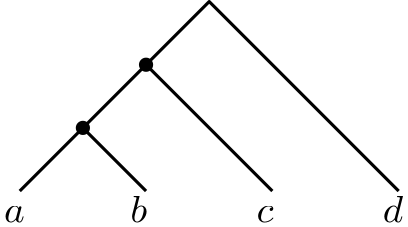}
    \end{subfigure}
    \hfill
    \begin{subfigure}[b]{0.18\textwidth}
        \centering
        \includegraphics[width=\textwidth]{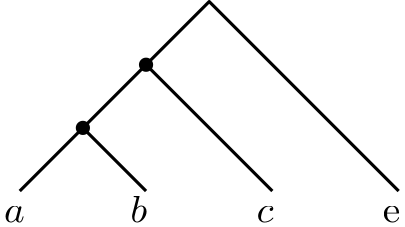}
    \end{subfigure}
    \hfill
    \begin{subfigure}[b]{0.18\textwidth}
        \centering
        \includegraphics[width=\textwidth]{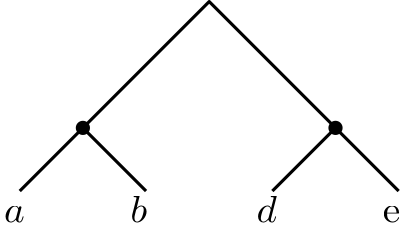}
    \end{subfigure}
    \hfill
    \begin{subfigure}[b]{0.18\textwidth}
        \centering
        \includegraphics[width=\textwidth]{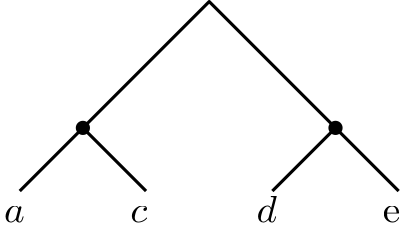}
    \end{subfigure}
    \hfill
    \begin{subfigure}[b]{0.18\textwidth}
        \centering
        \includegraphics[width=\textwidth]{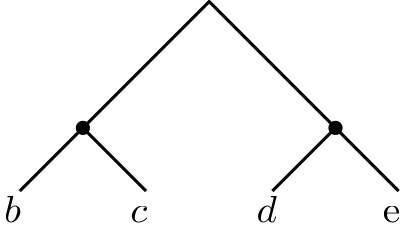}
    \end{subfigure}
    \newline
    \newline
    \begin{subfigure}[b]{0.18\textwidth}
        \centering
        \includegraphics[width=\textwidth]{bucket_example_subtree_1_1.png}
    \end{subfigure}
    \hfill
    \begin{subfigure}[b]{0.18\textwidth}
        \centering
        \includegraphics[width=\textwidth]{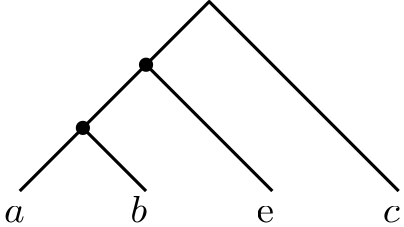}
    \end{subfigure}
    \hfill
    \begin{subfigure}[b]{0.18\textwidth}
        \centering
        \includegraphics[width=\textwidth]{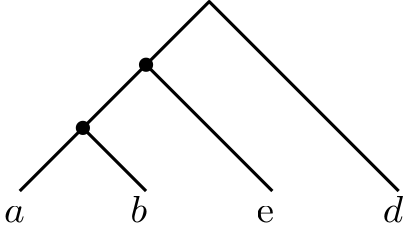}
    \end{subfigure}
    \hfill
    \begin{subfigure}[b]{0.18\textwidth}
        \centering
        \includegraphics[width=\textwidth]{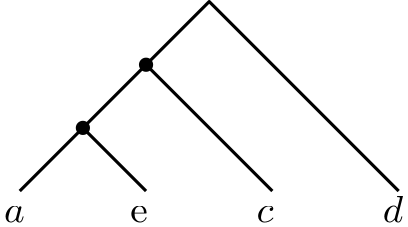}
    \end{subfigure}
    \hfill
    \begin{subfigure}[b]{0.18\textwidth}
        \centering
        \includegraphics[width=\textwidth]{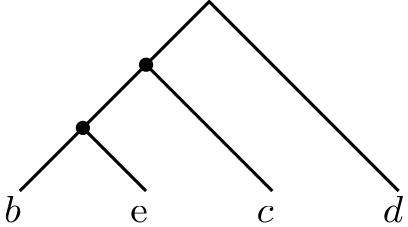}
    \end{subfigure}
    \caption{Labelled $5$ leaf phylogenetic trees $T_1$ and $T_2$ (above), and their $4$-bucket $B_4(\{T_1, T_2\})$ (below). The first row of subtrees form the $4$-multideck of $T_1$, and the second row form the $4$-multideck of $T_2$. Note that the first subtree of each row, $abcd$, is the same, as it is a subtree of both $T_1$ and $T_2$, and thus has multiplicity $2$ in their $4$-bucket.}
    \label{fig:bucket_example}
\end{figure}

We are now in a position in which we can begin to prove a number of theorems for recoverability.

\section{General results}\label{s:general}

We will first formalise the definition of what it means to recover a pair of trees when presented with their $k$-bucket.

\begin{Definition}
        Let $T_1$ and $T_2$ be rooted phylogenetic trees on $n$ leaves, and fix $0<k\le n$. If, for any pair of rooted phylogenetic trees $T_3$ and $T_4$ such that $B_k(\{T_1, T_2\}) = B_k(\{T_3, T_4\})$, then $\{T_1,T_2\}=\{T_3,T_4\}$, we say that $T_1$ and $T_2$ are \emph{recoverable from their $k$-bucket}. Similarly, if $B_k(\{T_1, T_2\}) = B_k(\{T_3, T_4\})$ implies that $T_1 \in \{T_3,T_4\}$, we say that $T_1$ is recoverable from their $k$-bucket.
\end{Definition}

Of course, it may be the case that full recoverability of a pair of trees is not possible, but some structural information about the input trees can still be deduced - for instance, the number of cherries in the input trees, or their heights. We define this possibility in a general way.

\begin{Definition}
    Let $A \in \{T(X),UT(n)\}$ and $f: A \rightarrow S$ be a function. Let $T_1, T_2 \in A$. We say $f$ is \emph{recoverable} from $B_k(\{T_1,T_2\})$ if, for any other pair $T_3,T_4 \in A$ such that $B_k(\{T_3,T_4\})=B_k(\{T_1,T_2\})$, we have that $\{f(T_1),f(T_2)\}= \{f(T_3),f(T_4)\}$.  
\end{Definition}

In the above definition, we intend to consider $f$ as a function that returns some structural information about the tree. For instance, if $c(T)$ is a function that returns the number of cherries in $T$, if $T_1$ has four cherries, $T_2$ has two cherries, and $c$ is recoverable for the $k$-bucket of $T_1,T_2$, then any pair of trees that has the same $k$-bucket as $T_1$ and $T_2$ would also have two and four cherries in some order.

Note that if $f$ is the identity on $A$, this is equivalent to the previous definition of recoverability for $\{T_1,T_2\}$. 

\subsection{General results for \texorpdfstring{$k$}{k}-bucket recoverability}

We first show that a $k$-bucket consisting of larger trees is at least as informative as a $k$-bucket consisting of smaller trees.

\begin{Theorem}\label{smaller_bucket_from_larger}
    Let $T_1$ and $T_2$ be either labelled or unlabelled rooted phylogenetic trees. Then $B_j(\{T_1,T_2\})$ is uniquely determined by $B_k(\{T_1,T_2\})$ if $j \leq k$.
\end{Theorem}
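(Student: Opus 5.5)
The plan is a double-counting argument of Kelly-lemma type, applied one tree at a time and then summed over the pair. Fix $j \le k \le n$, and write $S \preceq T$ to mean that $S$ is a subtree of $T$.

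First I would record the basic fact that taking induced subtrees is transitive and well behaved. If $S$ is the subtree of $T$ induced by a $k$-set $Y$, and $Z \subseteq Y$ with $|Z| = j$, then the subtree of $S$ induced by $Z$ equals the subtree of $T$ induced by $Z$. The reason is that both are the minimal connected subgraph spanning $Z$ with degree-$2$ vertices suppressed. In the labelled case this is immediate from the definition. In the unlabelled case I would work with an arbitrary labelling of $T$ fixed for the argument, and then pass to isomorphism classes.

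Second, I would count pairs. For a single tree $T$, consider the pairs $(Z, Y)$ with $Z \subseteq Y \subseteq X$, $|Z| = j$ and $|Y| = k$. Each $j$-set $Z$ lies in exactly $\binom{n-j}{k-j}$ sets $Y$. Counting these pairs in two ways, and using the first step, gives the multiset identity
\[
\binom{n-j}{k-j}\, M_j(T) \;=\; \biguplus_{S \in M_k(T)} M_j(S).
\]
Here the left side means each element of $M_j(T)$ repeated $\binom{n-j}{k-j}$ times, and the right side is a multiset union over $S$ taken with multiplicity. For labelled trees the multisets are taken literally. For unlabelled trees they are multisets of isomorphism classes, and the identity still holds because $M_j(S)$ depends only on the isomorphism class of $S$.

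Third, I would apply the identity to $T_1$ and $T_2$ and take the multiset union of the two results. This gives
\[
\binom{n-j}{k-j}\, B_j(\{T_1,T_2\}) \;=\; \biguplus_{S \in B_k(\{T_1,T_2\})} M_j(S).
\]
The right-hand side depends only on $B_k(\{T_1,T_2\})$. The number $n$ is also determined: in the unlabelled case by $|B_k| = 2\binom{n}{k}$, and in the labelled case by $X$, which is the union of the leaf sets appearing in the bucket (assuming $k \ge 1$). Since $\binom{n-j}{k-j} \ge 1$, dividing every multiplicity by it recovers $B_j(\{T_1,T_2\})$.

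The only delicate point is the first step in the unlabelled case, namely making precise that the subtree relation composes correctly on isomorphism classes. I would handle this by fixing labellings of $T_1$ and $T_2$ and arguing entirely in the labelled setting before forgetting the labels.
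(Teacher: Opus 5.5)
Your proposal is correct and follows the paper's proof. Both arguments count, for each $j$-leaf subset, its $\binom{n-j}{k-j}$ many $k$-leaf supersets, so that the multiset of $j$-leaf subtrees of the trees in $B_k(\{T_1,T_2\})$ is $B_j(\{T_1,T_2\})$ with every multiplicity scaled by $\binom{n-j}{k-j}$, and then divide by that factor. Your per-tree multiset identity, the transitivity of induced subtrees, and your remarks on the unlabelled case and on recovering $n$ make explicit details the paper leaves implicit; in particular you count leaf-set occurrences rather than isomorphism classes, which the paper's wording glosses over.
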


\begin{proof}
    Consider some tree $T \in B_j(\{T_1,T_2\})$ on a set of leaves $Y \subseteq X$. This tree is a subtree of some number of trees in $B_k(\{T_1,T_2\})$. In particular, any such tree will contain each leaf in $Y$, as well as $k-j$ other leaves. Therefore there are $\binom{n-j}{k-j}$ trees in $B_k(\{T_1,T_2\})$ with $T$ as a subtree. Consequently, if we consider the multiset of subtrees with $j$ leaves of trees in $B_k(\{T_1,T_2\})$, we will obtain $B_j(\{T_1,T_2\})$ with each occurring with a multiplicity of $\binom{n-j}{k-j}$. Therefore by dividing each multiplicity by this factor we can obtain $B_j(\{T_1,T_2\})$ from $B_k(\{T_1,T_2\})$, and so $B_j$ is uniquely determined by $B_k(\{T_1,T_2\})$.
\end{proof}

The following theorem is well-known, due to \cite{aho1981inferring}.

\begin{Theorem}[\cite{aho1981inferring}]\label{t:3-recoverable}
    Let $T$ be a labelled tree. Then $T$ has a unique $3$-multideck.
\end{Theorem}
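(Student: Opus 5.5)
The statement says that the map $T \mapsto M_3(T)$ is injective on labelled trees in $T(X)$: the multiset of rooted triples of $T$ determines $T$. Since $T$ is labelled, each $3$-subset $\{a,b,c\}\subseteq X$ contributes exactly one triple to $M_3(T)$, so the multideck is really a function assigning to each $3$-subset its induced rooted triple, say $ab|c$. The plan is to reconstruct the cluster set of $T$ from these triples. A rooted binary tree on $X$ is determined by its clusters, where the cluster of a vertex $v$ is the set of leaves below $v$. I would give the argument by induction on $n=|X|$, in the style of the Aho et al.\ construction \cite{aho1981inferring}. The cases $n\le 2$ are trivial, because there is a unique tree.

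For $n\ge 3$, first recover the root split. Let the root of $T$ have children whose clusters are $A$ and $B$, with $A\sqcup B = X$. Build the graph $G$ on $X$ in which $a$ and $b$ are adjacent exactly when some triple $ab|c$ lies in $M_3(T)$.

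The key step is to show that the connected components of $G$ are exactly $A$ and $B$. This is the step I expect to need the most care. One direction is easy: if $a\in A$ and $b\in B$, then for any third leaf $c$ the lowest common ancestor of $a$ and $b$ is the root. So $a,b$ can never form the cherry of an induced triple, and $G$ has no edges between $A$ and $B$.

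For the other direction I show that $A$ is connected (and likewise $B$, by symmetry). If $|A|\ge 2$, take any $a,a'\in A$ and choose any $b\in B$, which exists because $B\neq\emptyset$. Then the induced triple on $\{a,a',b\}$ is $aa'|b$, since $\mathrm{lca}(a,a')$ lies strictly below the root. Hence $a$ and $a'$ are adjacent, and $A$ is in fact a clique. If $|A|=1$, it is trivially connected. So the components of $G$ are exactly $\{A,B\}$, and $G$ is computed from $M_3(T)$ alone.

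To finish the induction, note that the triples of $T|_A$ are precisely the triples in $M_3(T)$ whose three leaves all lie in $A$, and similarly for $B$. By the induction hypothesis these determine $T|_A$ and $T|_B$. Since $T$ is obtained by joining $T|_A$ and $T|_B$ at a new root, $T$ is determined.

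It follows that any two labelled trees with the same $3$-multideck coincide. That is precisely the claim that $M_3$ has a unique preimage, which is how the statement is used later in the paper.
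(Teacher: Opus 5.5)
Your proof is correct and follows essentially the approach the paper relies on. The paper gives no proof of its own and cites \cite{aho1981inferring}; your key step (the graph on $X$ has exactly the two cliques $A$ and $B$ as components) is the BUILD-graph fact the paper later restates as Theorem~\ref{t:buildistwocliques}, and your recursion on $T|_A$ and $T|_B$ completes the standard Aho et al.\ induction, with ``unique $3$-multideck'' read as injectivity of $T \mapsto M_3(T)$, exactly as the paper uses it.
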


Inspired by Theorem \ref{smaller_bucket_from_larger}, we can immediately generalise this theorem to obtain the following result.

\begin{Theorem}\label{t:one_tree_labelled_recovery}
    Let $T$ be a labelled phylogenetic tree. Then $T$ has a unique $k$-multideck for $k \geq 3$.
\end{Theorem}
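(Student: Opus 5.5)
The plan is to reduce to the $k=3$ case, using Theorem \ref{smaller_bucket_from_larger} together with Theorem \ref{t:3-recoverable}. Here ``$T$ has a unique $k$-multideck'' means that the map $T \mapsto M_k(T)$ is injective on $T(X)$. Equivalently, if $T, T' \in T(X)$ satisfy $M_k(T)=M_k(T')$, then $T=T'$. We assume throughout that $3 \le k \le n$; for $k=n$ the claim is trivial, since $M_n(T)=\{T\}$.

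The first step is to note that the argument of Theorem \ref{smaller_bucket_from_larger} never uses the fact that the bucket comes from two trees. It applies verbatim to a single multideck:
\begin{itemize}
\item each $3$-leaf subtree of $T$ on a leaf set $Y$ lies inside exactly $\binom{n-3}{k-3}$ of the $k$-leaf subtrees in $M_k(T)$;
\item these are the subtrees induced by supersets $Z \supseteq Y$ with $|Z|=k$;
\item the subtree of $T|_Z$ induced by $Y$ equals $T|_Y$, because inducing subtrees is transitive (restriction then suppression commutes).
\end{itemize}
We should state this transitivity explicitly, since it is the one structural fact used. Consequently, taking all $3$-leaf subtrees of all members of $M_k(T)$ and dividing every multiplicity by $\binom{n-3}{k-3}$ recovers $M_3(T)$. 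So $M_3(T)$ is a function of $M_k(T)$. Alternatively, one can formally apply Theorem \ref{smaller_bucket_from_larger} to the pair $\{T,T\}$, whose bucket is $M_k(T)$ with every multiplicity doubled.

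The second step is to conclude. Suppose $M_k(T)=M_k(T')$. By the first step, $M_3(T)=M_3(T')$. By Theorem \ref{t:3-recoverable}, distinct labelled trees have distinct $3$-multidecks, which is the content of the result of \cite{aho1981inferring}: the rooted triples determine the tree, for example via the BUILD algorithm. Hence $T=T'$.

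The only point needing care is the interpretation of Theorem \ref{t:3-recoverable} as injectivity of $M_3$, as opposed to mere well-definedness. Beyond that there is no real obstacle; the proof is a short composition of the two earlier results.
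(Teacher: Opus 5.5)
Your proposal is correct and follows the paper's proof: recover $M_3(T)$ from $M_k(T)$ by the counting procedure of Theorem~\ref{smaller_bucket_from_larger}, then apply Theorem~\ref{t:3-recoverable}. Your extra remarks are sensible clarifications of points the paper leaves implicit: transitivity of induced subtrees, the restriction $3 \le k \le n$, and reading Theorem~\ref{t:3-recoverable} as injectivity of $M_3$.
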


\begin{proof}
    Using a similar procedure to that of Theorem \ref{smaller_bucket_from_larger}, we can obtain the $3$-multideck of $T$ from its $k$-multideck. Then, by Theorem \ref{t:3-recoverable} there is a unique $T$ corresponding to this $3$-multideck. Hence there must have been a unique tree that produced the $k$-multideck.
\end{proof}

We additionally also observe that if one input tree is identified in the $k$-bucket, we can immediately identify the other as well.

\begin{Lemma}\label{labelled_buckets_one_from_other}
    Let $T_1$ and $T_2$ be labelled rooted phylogenetic trees on $n$ leaves. If $T_1$ is recoverable from $B_k(\{T_1, T_2\})$, then both $T_1$ and $T_2$ are recoverable from their $k$-bucket.
\end{Lemma}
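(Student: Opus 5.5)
Suppose $B_k(\{T_1,T_2\}) = B_k(\{T_3,T_4\})$. Since $T_1$ is recoverable, $T_1 \in \{T_3,T_4\}$, and after relabelling the pair we may assume $T_3 = T_1$. The aim is to prove $T_4 = T_2$, which gives $\{T_1,T_2\}=\{T_3,T_4\}$ and hence recoverability of the pair.

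The proof is by multiset subtraction. By definition $B_k(\{T_1,T_2\}) = M_k(T_1) \cup M_k(T_2)$ as a multiset union, so multiplicities add. Also $M_k(T_1)$ depends only on $T_1$. Subtracting $M_k(T_1)$ from both sides of
\[
M_k(T_1) \cup M_k(T_2) = M_k(T_1) \cup M_k(T_4)
\]
therefore gives $M_k(T_2) = M_k(T_4)$. This subtraction is legitimate because the multiset sum is cancellative.

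It remains to deduce $T_2 = T_4$ from $M_k(T_2)=M_k(T_4)$, and this is exactly Theorem \ref{t:one_tree_labelled_recovery}, provided $k \geq 3$. The only real obstacle is the range of $k$. For $k \le 2$ every $k$-leaf labelled subtree is uniquely determined by its leaf set, so the $k$-multideck is the same for all trees on $X$ (when $n \ge 3$). In that case $T_1$ is recoverable only in degenerate situations, such as $n \le 2$ where $T(X)$ has a single element, and then the conclusion $T_4 = T_2$ is immediate.

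I therefore plan to state explicitly that the substantive case is $k \ge 3$, the range used throughout the paper. In the edge cases I would check directly either that the hypothesis forces the trees to be unique or that $k=n$, where $M_n(T)=\{T\}$ and the conclusion is trivial.
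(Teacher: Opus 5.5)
Your proof is correct and is essentially the paper's argument: subtract $M_k(T_1)$ from the bucket to isolate $M_k(T_2)$, then invoke Theorem \ref{t:one_tree_labelled_recovery}; your explicit cancellation with $T_3=T_1$ simply formalises the paper's ``remove the multideck'' step. Your treatment of $k\le 2$ is a sensible addition that the paper leaves implicit. For $n\ge 3$ the hypothesis fails outright there, since any pair $T_3=T_4\ne T_1$ has the same bucket, so the lemma is vacuous in that range.
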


\begin{proof}
    As $T_1$ is recoverable from $B_k(\{T_1, T_2\})$, we can completely determine the $k$-multideck of $T_1$ (by simply considering the $k$-leaf subtrees of $T_1$). As the $k$-bucket of $T_1$ and $T_2$ is formed from the union of the $k$-multidecks of the two trees, if we remove the $k$-multideck of $T_1$ from the $k$-bucket of $T_1$ and $T_2$, we obtain the $k$-multideck of $T_2$, which by Theorem \ref{t:one_tree_labelled_recovery} allows us to recover $T_2$. Thus $T_1$ and $T_2$ are recoverable from their $k$-bucket.
\end{proof}

We also recall the useful fact that any tree can be reconstructed from its $(n-1)$-multideck. The following Theorem is a rewording of Corollary 3.13 from \cite{Decks_of_rooted_binary_trees}.

\begin{Theorem}[\cite{Decks_of_rooted_binary_trees}]\label{reconstruct_from_n-1_multideck}
    Let $T$ be an unlabelled phylogenetic tree on $n \neq 4$ leaves. Then $T$ is the unique phylogenetic tree with its $(n-1)$-multideck.
\end{Theorem}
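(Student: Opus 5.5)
The plan is to read off the root split of $T$ directly from its $(n-1)$-multideck, with no induction, and to handle the small cases by hand. Write $T=(A,B)$ for the two pendant subtrees at the root, with $a=|A|\ge b=|B|$ leaves and $a+b=n$. Every card in $M_{n-1}(T)$ arises by deleting one leaf. Deleting a leaf $x$ of $A$ gives $(A-x,B)$, and deleting a leaf $y$ of $B$ gives $(A,B-y)$. The exception is $b=1$, where deleting the single leaf of $B$ gives the card $A$ itself.

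\textbf{Step 1: determine $a$ and $b$.} For each card record the unordered pair of its root-subtree sizes. When $b\ge 2$ these pairs are $\{a-1,b\}$, occurring $a$ times, and $\{a,b-1\}$, occurring $b$ times.
\begin{itemize}
\item If $a\ge b+1$, the largest part size seen on any card is $a$, and $b=n-a$.
\item If $a=b$, every card has split $\{a,a-1\}$, which again identifies the sizes.
\end{itemize}
The case $b=1$ is recognised because $n-2$ cards have a single leaf hanging from the root (the cards $(A-x,\text{leaf})$). For $n\ge 5$, at most one card can fail to have this form. I will check that this pattern cannot be produced by any $b\ge2$ except in tiny trees. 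This counting argument is where I expect the real obstacle to lie: separating $b=1$, $b=2$ and the near-balanced splits, and seeing exactly why $n=4$ fails. For $n=4$, the caterpillar and the balanced tree both have 3-multideck consisting of four copies of the unique 3-leaf tree. I would treat all $n\le 5$ by direct enumeration.

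\textbf{Step 2: recover $A$ and $B$ when $b\ge2$.}
\begin{itemize}
\item If $a\ge b+2$, the cards $(A-x,B)$ are exactly those whose smaller part has size $b$ and larger part has size $a-1>b$. Their $b$-part is literally $B$. The cards $(A,B-y)$ are exactly those with a part of size $a$, and that part is $A$.
\item If $a=b+1$, the $a$ cards $(A-x,B)$ have split $\{b,b\}$. The subtree $B$ appears as a part of every one of them, so $B$ is determined as the part common to all these cards; if each card has two candidates, a short multiplicity count resolves it. The cards of split $\{a,b-1\}$ again display $A$.
\item If $a=b$, all size-$a$ parts of cards are $A$ or $B$. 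If $A\ne B$, exactly two distinct trees occur and these are $A$ and $B$. If only one tree occurs, then $A=B$.
\end{itemize}
In every case $T=(A,B)$ is determined, since an unlabelled tree is determined by its unordered pair of root subtrees.

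\textbf{Step 3: recover $A$ when $b=1$.} Here the special card $A$ may coincide in shape with the other cards. Instead I would use the $n-2$ cards $(A-x,\text{leaf})$: their big parts form the $(a-1)$-multideck of $A$, with $a=n-1$. I could then invoke the result recursively (so the actual proof is by induction on $n$), taking care when $a=4$. Simpler still, the unique card that is not of the form $(\,\cdot\,,\text{leaf})$, if one exists, is $A$. If every card has a pendant leaf at the root, then $A$ itself has a pendant leaf at its root, and I would identify $A$ as the card occurring with the appropriate extra multiplicity.

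This last bookkeeping, in the case $b=1$ with a recursively caterpillar-like shape, is the second place I expect difficulty. Checking that $n=4$ is the only genuine failure will require careful enumeration there.
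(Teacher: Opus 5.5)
The paper does not prove this statement itself; it imports it from Clifton et al. Your Steps 1 and 2 are sound: for $b\ge 2$ the root subtrees can be read directly off the cards, exactly as you describe. Step 3, however, has a genuine gap, in the sub-case you flagged: $b=1$ and the major subtree $A$ itself has a leaf child at its root. In that sub-case every card has a pendant leaf at the root, and neither of your suggestions works.

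\emph{The recursive route.} It needs $M_{n-2}(A)$, i.e.\ the cards with the special card $A$ removed. But if $A=(A',\ell')$, the special card is isomorphic to the card $(A',\ell)$ obtained by deleting $\ell'$. So ``remove one copy of $A$'' already presupposes knowing $A$, and the argument is circular.

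\emph{The multiplicity criterion.} ``The appropriate extra multiplicity'' is never defined, and natural readings of it fail. Take $T=((F,\ell'),\ell)$ with $F$ the balanced four-leaf tree. Its deck consists of four copies of the five-leaf caterpillar and two copies of $(F,\ell)$. The card you need is $(F,\ell)$, which is the \emph{less} frequent one, and it is not the only card with multiplicity at least two. Choosing the most frequent card would reconstruct the six-leaf caterpillar instead.

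A minor slip in Step 1: when $b=1$ there are $n-1$ cards of the form $(A-x,\text{leaf})$, not $n-2$. Also, when $b=2$ and $n\ge 5$ there are exactly two cards of root balance $1$, and when $b\ge 3$ there are none. So recognising $b=1$ is not really an obstacle.

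The missing idea is pendant depth, which the paper itself develops in Lemma~\ref{pendant_depth_lemma} and uses in Theorem~\ref{unlabelled_buckets_theorem5}. Write $T$ as a root path $v_1,\dots,v_k$, where each $v_i$ has a leaf child $\ell_i$, followed by a subtree $C$ of root balance at least $2$. If $T$ is a caterpillar, every card is $A$ and there is nothing to do.

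\begin{itemize}
\item Deleting any $\ell_i$ gives a tree isomorphic to $A=T-\ell_1$, which has pendant depth $k-1$.
\item Deleting a leaf of $C$ never suppresses the root of $C$, since both of its children have at least two leaves. So $v_1,\dots,v_k$ survive with their leaf children, and that card has pendant depth at least $k$.
\end{itemize}

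Hence any card of minimum pendant depth is $A$, and $T=(A,\text{leaf})$. This settles the whole case $b=1$ directly, with no induction. It covers your easy sub-case (where $A$ is the unique card of pendant depth $0$) and caterpillars uniformly. Combined with your Steps 1 and 2, it gives a complete proof for $n\ge 5$.
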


Note that in the case where $n < 4$, this is trivially true, as only a single unlabelled rooted tree with $n$ leaves exists in these cases. When $n=4$, there are 2 trees on $n$ leaves, but only a single tree on $n-1=3$ leaves, so the two trees on $4$ leaves have identical $(n-1)$-multidecks.
We will also frequently use the fact that if we can reconstruct one of the input trees for a $k$-bucket, finding the other input tree is a simple matter of collecting the remaining subtrees and then reconstructing the second tree from them.

\begin{Lemma}\label{unlabelled_buckets_one_from_other}
    Let $T_1$ and $T_2$ be unlabelled phylogenetic trees on $n \ne 4$ leaves. If $T_1$ is recoverable from $B_{n-1}(\{T_1, T_2\})$, then $T_1$ and $T_2$ are recoverable from their $(n-1)$-bucket.
\end{Lemma}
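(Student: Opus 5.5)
The plan is to follow the proof of Lemma \ref{labelled_buckets_one_from_other}, using Theorem \ref{reconstruct_from_n-1_multideck} in place of Theorem \ref{t:one_tree_labelled_recovery}. Suppose $T_3,T_4$ are unlabelled trees on $n$ leaves with $B_{n-1}(\{T_3,T_4\}) = B_{n-1}(\{T_1,T_2\})$. We must show $\{T_3,T_4\}=\{T_1,T_2\}$. Since $T_1$ is recoverable, $T_1\in\{T_3,T_4\}$; relabel so that $T_3=T_1$.

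Next we cancel multidecks. The $(n-1)$-multideck of $T_1$ is determined by $T_1$ itself: it is the multiset of $n$ subtrees obtained by deleting each leaf in turn and suppressing the resulting degree-2 vertex. Write the bucket equality as the multiset identity
\[
M_{n-1}(T_1) \uplus M_{n-1}(T_2) = M_{n-1}(T_1) \uplus M_{n-1}(T_4).
\]
Multiset union is cancellative, since multiplicities add. Subtracting $M_{n-1}(T_1)$ from both sides therefore gives $M_{n-1}(T_2)=M_{n-1}(T_4)$.

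Finally, since $n\neq 4$, Theorem \ref{reconstruct_from_n-1_multideck} says an unlabelled tree on $n$ leaves is the unique tree with its $(n-1)$-multideck. Hence $T_4=T_2$, and so $\{T_3,T_4\}=\{T_1,T_2\}$, which is exactly recoverability of the pair.

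No step is a real obstacle; the only points needing care are these:
\begin{itemize}
\item The union in the definition of the bucket must be read as a multiset sum, so that cancellation is valid. This is implicit in the definition and in Figure \ref{fig:bucket_example}.
\item The hypothesis $n\neq 4$ is used precisely to invoke Theorem \ref{reconstruct_from_n-1_multideck}.
\item The cases $n<4$ are trivial, because then there is a unique unlabelled tree on $n$ leaves.
\end{itemize}
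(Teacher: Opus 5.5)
Your proposal is correct and takes the same approach as the paper. The paper mirrors Lemma~\ref{labelled_buckets_one_from_other}: it removes $M_{n-1}(T_1)$ from the bucket to obtain $M_{n-1}(T_2)$, then applies Theorem~\ref{reconstruct_from_n-1_multideck} (this is where $n \neq 4$ is used), and your cancellation against an arbitrary competing pair $\{T_3,T_4\}$ is just a careful unpacking of the definition of recoverability.
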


\begin{proof}
This proof proceeds almost identically to Theorem \ref{labelled_buckets_one_from_other}, except $T_2$ is recovered due to Theorem \ref{reconstruct_from_n-1_multideck}.
\end{proof}

We will use the standard notation to refer to subtrees on three leaves. All binary phylogenetic trees on three leaves have an isomorphic tree shape, in which one pendant subtree is a cherry, and the other pendant subtree is a single leaf. We will use the notation $ab|c$ to denote the $3$-leaf tree in which $(a,b)$ is the cherry and $c$ is the single leaf. This subtree will be referred to as a \emph{rooted triple}. The following are standard reconstruction definitions, adapted from \cite{bryant1995extension}.

\begin{Definition}
    Let $r = ab|c$ be a rooted triple. Then denote by $\supp(r)$ the set $\{a,b,c\}$. If $R=\{r_1, \dots, r_k\}$ is a set of rooted triples, define $\supp(R)=\cup_i^k \supp(r_i)$. More generally, let $T$ be a labelled phylogenetic tree, and denote by $\supp(T)$ the set of leaves of $T$, and if $S=\{T_1,\dots,T_k\}$ is a set of labelled phylogenetic trees define 
    \[\supp(S)=\cup_{i=1}^k \supp(T_i).\]
\end{Definition}

\begin{Definition}
    Let $R$ be a set of rooted triples and denote by $\langle R \rangle$ the set of phylogenetic trees on $\supp(R)$ that display all triples in $R$. The \emph{closure} of $R$ is $\text{cl}(R) = \cap_{T \in \langle R \rangle} M_3(T)$. We write $R \vdash ab|c$ if and only if $ab|c \in \text{cl}(R)$.
\end{Definition}

Bryant and Steel \cite{bryant1995extension} define a closure operation for sets of triples and quartets. We generalise this in the present context for sets of subtrees.

\begin{Definition}
    Let $S$ be a set of labelled phylogenetic trees on $k$ leaves and denote by $\langle S \rangle$ the set of phylogenetic trees on $\supp(S)$ that display all subtrees in $S$. The \emph{closure} of $S$ is $\text{cl}(S) = \cap_{T \in \langle S \rangle} M_k(T)$.
\end{Definition}

That is, if $S$ is a set of labelled phylogenetic trees on $k$ leaves, the closure of $S$ is the set of labelled phylogenetic trees on $k$ leaves that are displayed by every labelled phylogenetic tree that displays every tree in $S$. Note, of course, that if some subset of $S$ cannot all be displayed by the same tree, the closure of $S$ is the empty set.

\begin{Observation}
    \label{o:infer.closed}
    Consider a pair of trees $T_1$ and $T_2$, and their $k$-bucket. Suppose $B_k$ does not have a unique pre-image for this pair, that is, there is another pair of trees, $T_3$ and $T_4$, so that $B_k(\{T_1,T_2\})=B_k(\{T_3,T_4\})$.

    Consider $M_k(T_1) \cap M_k(T_3)$, which is a proper subset of $M_k(T_1)$ (i.e. non-empty and not equal to $M_k(T_1)$ itself), as otherwise we are forced to have $\{T_1,T_2\}=\{T_3,T_4\}$. Observe $M_k(T_2) \cap M_k(T_4)$ must have the same set of supports as $M_k(T_1) \cap M_k(T_3)$. Similarly, $M_k(T_2) \cap M_k(T_3)$ must have the same set of supports as $M_k(T_1) \cap M_k(T_4)$.

    So, to find such an example, we need to find a pair of trees with a non-trivial subset $S$ of their $k$-multidecks that we can fix, and then swap over all remaining elements of their multideck and obtain a new pair of trees. In particular, this subset must be inferentially closed, in the sense that $\text{cl}(S) =S$. For example, if $k=3$, $ab|c$ and $bc|d$ together imply $ac|d$, so our fixed subset could not be $\{ab|c,bc|d\}$ without $ac|d$.
\end{Observation}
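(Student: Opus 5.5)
The argument rests on one local fact about buckets of labelled trees, and the three claims of Observation \ref{o:infer.closed} then follow from it directly. Throughout, $T_1,\dots,T_4$ are labelled trees on $X$ with $k\ge 3$. Fix a $k$-subset $Y\subseteq X$. The trees in $B_k(\{T_1,T_2\})$ with support $Y$ are exactly $T_1|_Y$ and $T_2|_Y$, counted with multiplicity. Likewise, those in $B_k(\{T_3,T_4\})$ with support $Y$ are $T_3|_Y$ and $T_4|_Y$. Since the two buckets are equal, the multisets $\{T_1|_Y,T_2|_Y\}$ and $\{T_3|_Y,T_4|_Y\}$ coincide. Hence for every $Y$, either $T_3|_Y=T_1|_Y$ and $T_4|_Y=T_2|_Y$, or $T_3|_Y=T_2|_Y$ and $T_4|_Y=T_1|_Y$ (both may hold when $T_1|_Y=T_2|_Y$). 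I would state this first and work with supports rather than with multiset intersections, to avoid multiplicity issues.

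\emph{Properness.} Suppose $T_1|_Y\neq T_3|_Y$ for every $Y$, i.e.\ the intersection is empty. Then the local fact forces $T_3|_Y=T_2|_Y$ for all $Y$, so $M_k(T_3)=M_k(T_2)$. Theorem \ref{t:one_tree_labelled_recovery} then gives $T_3=T_2$, and removing $M_k(T_3)$ from the bucket gives $M_k(T_4)=M_k(T_1)$, hence $T_4=T_1$. Now suppose instead $M_k(T_1)\cap M_k(T_3)=M_k(T_1)$. Then $T_3|_Y=T_1|_Y$ for all $Y$, so $T_3=T_1$ and $T_4=T_2$ in the same way, as in Lemma \ref{labelled_buckets_one_from_other}. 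In either case $\{T_1,T_2\}=\{T_3,T_4\}$.

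\emph{Support statements.} If $Y$ is the support of a tree in $M_k(T_1)\cap M_k(T_3)$, then $T_1|_Y=T_3|_Y$. The multiset equality on $Y$ then gives $T_4|_Y=T_2|_Y$, so $Y$ is a support in $M_k(T_2)\cap M_k(T_4)$. The converse holds by the symmetric argument. The statement about $M_k(T_2)\cap M_k(T_3)$ and $M_k(T_1)\cap M_k(T_4)$ is proved identically.

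\emph{Closure.} Let $S=M_k(T_1)\cap M_k(T_3)$ and $Z=\supp(S)$. We always have $S\subseteq\text{cl}(S)$, since every tree in $\langle S\rangle$ displays $S$. For the reverse inclusion, note that $T_1|_Z$ and $T_3|_Z$ are trees on $Z$ displaying every member of $S$, so both lie in $\langle S\rangle$. Therefore
\[\text{cl}(S)\subseteq M_k(T_1|_Z)\cap M_k(T_3|_Z)\subseteq M_k(T_1)\cap M_k(T_3)=S.\]
The worked example with $k=3$ is then just the observation that $ac|d\in\text{cl}(\{ab|c,bc|d\})$. The only real care needed is the restriction to $Z$, because $\langle S\rangle$ is defined on $\supp(S)$ rather than on $X$. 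Beyond that I expect no serious obstacle; the main point is setting up the per-$Y$ dichotomy cleanly.
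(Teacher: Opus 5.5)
Your proposal is correct and is essentially the paper's own, largely implicit, reasoning made rigorous: the per-support equality $\{T_1|_Y,T_2|_Y\}=\{T_3|_Y,T_4|_Y\}$ gives the support statements, Theorem~\ref{t:one_tree_labelled_recovery} gives properness, and the fact that $T_1|_Z$ and $T_3|_Z$ both lie in $\langle S\rangle$ gives $\text{cl}(S)=S$, which the paper asserts without argument. Your standing assumption that the trees are labelled with $k\ge 3$ is genuinely needed rather than cosmetic, since for $k\le 2$ all labelled multidecks on $X$ coincide and the properness claim would fail.
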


While we do not explicitly use the above observation in the present manuscript, the authors found it philosophically useful in generation of proofs and counterexamples.

\subsection{Recoverability of statistics}

\subsubsection{Recoverability from sufficiently different statistics}

For the remainder of this section, we consider a number of tree statistics and some circumstances in which the corresponding functions are recoverable, including the height of the trees and a new statistic that we refer to as their root balance. All results in this section specifically consider only $(n-1)$-buckets, but can likely be extended. To avoid being overly verbose we will refer to recoverability of certain statistics, which should be interpreted as referring to the recoverability of the function that takes a tree as input and returns that statistic. 

We first observe that some properties can be recovered if the pair of trees that form the bucket are sufficiently structurally different. We note that these results are somewhat orthogonal to the remainder of the results in this manuscript and do not form part of any subsequent results, but we include them in case such statistics and proof methods may prove useful in generalisations of the main problem of this manuscript. 

\begin{Lemma}
    If the heights of two phylogenetic trees $T_1$ and $T_2$ (labelled or unlabelled) differ by at least $3$, then $T_1$ and $T_2$ are recoverable from their $(n-1)$-bucket.
\end{Lemma}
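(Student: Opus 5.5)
The plan is to read the two input trees off the bucket by sorting its members by height. The key fact is that deleting one leaf lowers the height by at most one.

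\emph{Step 1 (height under leaf deletion).} Let $T$ have $n \ge 3$ leaves and height $h$. We show that every $(n-1)$-leaf subtree $T-\ell$ has height $h$ or $h-1$.

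First, the height cannot increase. Deleting $\ell$ removes $\ell$ and suppresses (or, if it is the root, removes) its parent $p$, so every root-to-leaf path of $T-\ell$ comes from a root-to-leaf path of $T$ with at most one edge removed.

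Second, it drops by at most one. Fix a longest root-to-leaf path, ending at a leaf $x$. If $\ell \neq x$, this path survives with at most one edge lost, namely the one through $p$, so $T-\ell$ has height at least $h-1$. If $\ell = x$, look at the sibling of $x$. If the sibling is a leaf, it ends up at depth $h-1$ after $p$ is suppressed. Otherwise the subtree below the sibling contains a leaf at depth at least $h$ in $T$, and that leaf loses only one edge. Either way the height is at least $h-1$.

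Hence the heights in $M_{n-1}(T)$ all lie in $\{h-1,h\}$.

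\emph{Step 2 (gap in the bucket).} Assume without loss of generality that $h_1 \le h_2 - 3$, where $h_i$ is the height of $T_i$. Since $h_1 \ge 1$, we get $h_2 \ge 4$, so $n \ge 5$; in particular Step 1 applies and $n \neq 4$.

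By Step 1, the $n$ trees coming from $T_1$ have heights in $\{h_1-1,h_1\}$, and the $n$ trees coming from $T_2$ have heights in $\{h_2-1,h_2\}$. Because $h_1+1 < h_2-1$, no tree in $B_{n-1}(\{T_1,T_2\})$ has height $h_1+1$. The bucket therefore splits into a low cluster, consisting of trees of height at most $h_1$, and a high cluster, consisting of trees of height at least $h_1+2$. Each cluster has exactly $n$ elements, and the low cluster is exactly $M_{n-1}(T_1)$.

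\emph{Step 3 (any other preimage agrees).} Suppose $B_{n-1}(\{T_3,T_4\}) = B_{n-1}(\{T_1,T_2\})$. By Step 1, the heights in $M_{n-1}(T_3)$ lie in two consecutive values $\{h_3-1,h_3\}$. Such a pair cannot contain both a value at most $h_1$ and a value at least $h_1+2$, so $M_{n-1}(T_3)$ lies entirely inside one cluster. The same holds for $T_4$.

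If both multidecks lay in the same cluster, that cluster would contain $2n$ elements, contradicting Step 2. So one of them, say $M_{n-1}(T_3)$, equals the low cluster, which is $M_{n-1}(T_1)$.

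Now $n-1 \ge 4$, so we can conclude in both settings:
\begin{itemize}
\item for labelled trees, Theorem \ref{t:one_tree_labelled_recovery} gives $T_3 = T_1$;
\item for unlabelled trees, Theorem \ref{reconstruct_from_n-1_multideck} (valid as $n \neq 4$) gives $T_3 = T_1$.
\end{itemize}
Then Lemma \ref{labelled_buckets_one_from_other}, respectively Lemma \ref{unlabelled_buckets_one_from_other}, recovers the other tree as well, so $\{T_3,T_4\} = \{T_1,T_2\}$.

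The main care is needed in Step 1, in the case where the deleted leaf is the endpoint of the longest path, and in confirming that the threshold $3$ is exactly what produces an empty height level between the two clusters.
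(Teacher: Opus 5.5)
Your proof is correct and follows the same route as the paper: deleting a leaf lowers the height by at most one, so a height gap of $3$ leaves the level $h_1+1$ empty, which separates $M_{n-1}(T_1)$ from $M_{n-1}(T_2)$, and each tree is then rebuilt from its multideck via Theorem~\ref{t:one_tree_labelled_recovery} or Theorem~\ref{reconstruct_from_n-1_multideck}. Your Step~3 (showing that any other preimage $\{T_3,T_4\}$ must split the bucket the same way) and your check that $n\neq 4$ make explicit two points that the paper's proof leaves implicit.
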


\begin{proof}
    We proceed by showing that if $|h(T_1) - h(T_2)| \geq 3$, the $k$-bucket $B_{n-1}(\{T_1, T_2\})$ can be unambiguously partitioned into the individual multidecks $M_{n-1}(T_1)$ and $M_{n-1}(T_2)$.
    
    When a single leaf is removed from a binary phylogenetic tree, the height of the resulting subtree either decreases by exactly $1$ or remains unchanged. Therefore, the possible heights of the subtrees in $M_{n-1}(T_1)$ belong to the set $\{h(T_1), h(T_1) - 1\}$, and the possible heights of the subtrees in $M_{n-1}(T_2)$ belong to $\{h(T_2), h(T_2) - 1\}$.

    Thus, we know that any given subtree of height $h$ must have originated from a tree of height $h$ or $h + 1$. Because $|h(T_1) - h(T_2)| \geq 3$, the sets of possible subtree heights for $T_1$ and $T_2$ are separated by a gap of at least 2. Consequently, we can definitively group the subtrees in $B_{n-1}(\{T_1, T_2\})$ into two distinct clusters: subtrees originating from the same tree will have heights that differ by at most 1 (that is, they are equal or consecutive integers), whereas subtrees from different trees will have heights differing by at least 2. 
    
    This separation allows us to unambiguously assign every subtree to either $M_{n-1}(T_1)$ or $M_{n-1}(T_2)$, from which the original trees $T_1$ and $T_2$ can then be independently reconstructed by either Theorem \ref{t:one_tree_labelled_recovery} or Theorem \ref{reconstruct_from_n-1_multideck} (depending on whether $T_1$ and $T_2$ are labelled or unlabelled).
\end{proof}

\begin{Remark}
    To show that the above condition is strict  we refer to Figure \ref{fig:six_leaf_unlabelled_trees_counterexamples} which illustrates that trees $6a$ and $6b$ have heights differing by 2, but have the same 5 leaf subtrees as 6c and 6d. 
\end{Remark}

Similar proofs can easily be constructed for any statistic that is sufficiently stable under the deletion of a single leaf, for instance, the number of cherries.

\subsubsection{Fully recoverable}

We now show that some further statistics are completely recoverable - that is, for any pair of input trees we may recover the (multi)set of results of applying the relevant function to these trees. In particular, we show a stronger result - that for the statistics considered in this section, one can recover the multiset of that tree statistic for the input trees, by considering only the multiset of that statistic for the trees in the bucket (as can be seen by considering the fact that the proofs only use this information). For instance, for the statistic \emph{root balance}, we can recover the root balances of the input trees given only the root balances of the trees in the $(n-1)$-bucket. We note this fact in particular as later we will show recoverability for almost all pairs of trees from their $(n-1)$-bucket, so of course any statistic can be recovered given all information from their $(n-1)$-bucket.

We start with considering root balance, which is a statistic that we introduce and forms the basis of the majority of proofs in Section \ref{s:labeln-1}.

\begin{Definition}
    Let $T$ be a rooted binary phylogenetic tree with at least 2 leaves. The \emph{major and minor subtrees} of $T$ are the two maximal subtrees of $T$ rooted at the children of the root of $T$, such that the major subtree has at least as many leaves as the minor subtree. The \emph{root balance} of $T$ is the number of leaves in $T$'s minor subtree.
\end{Definition}

We first note that the maximum value for the root balance of a tree with $n$ leaves is of course $\lfloor \frac{n}{2} \rfloor$.

We additionally note that several similar concepts appear in the literature. For instance, \cite{Decks_of_rooted_binary_trees} defines the \emph{root-split} of a tree $T$ to be the set $S=\{|T_1|,|T_2|\}$, where $T_1$ and $T_2$ are the two subtrees rooted at the children of the root of $T$, and $| \cdot |$ indicates the number of leaves in a tree. In this case, the root balance is just $\min(S)$. Additionally, in the \texttt{treestats} R package \cite{janzen2024phylogenetic}, root imbalance is defined to be $\frac{|T_1|}{|T_1|+|T_2|}$, where, to use the language of the present manuscript, $T_1$ is the major subtree and $T_2$ the minor subtree. In this case, denoting the  root balance of a tree $T$ on $n$ leaves by $r$, the  root imbalance of $T$ would simply be $1-\frac{r}{n}$.

Root balance is especially helpful for recovering the input trees of a bucket, due to the fact that it is fully recoverable and $(n-1)$-leaf subtrees of an $n$-leaf tree have root balances that differ from the full tree in very predictable ways. 

We now introduce a lemma for counting the number of subtrees with certain root balances, which will subsequently be used to show full recoverability of root balance for $(n-1)$-buckets.

\begin{Lemma}[Counting Lemma]\label{unlabelled_buckets_counting}
    Let $T$ be a binary rooted phylogenetic tree with $n$ leaves and root balance $k$. Then $M_{n-1}(T)$ contains:
    \begin{itemize}
        \item $n-k$ subtrees with root balance $k$ and $k$ subtrees with root balance $k-1$ if $1<k<\frac{n}{2}$.
        \item $n$ subtrees with root balance $k-1$ if $k=\frac{n}{2}$.
        \item $n-1$ subtrees with root balance $1$, and one subtree with root balance at most $\lfloor\frac{n-1}{2}\rfloor$ if $k=1$.
    \end{itemize}
\end{Lemma}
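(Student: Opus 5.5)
The $(n-1)$-multideck of $T$ consists of exactly $n$ subtrees, one for each leaf $\ell$ removed. So the plan is to determine the root balance of $T \setminus \ell$ leaf by leaf, according to whether $\ell$ lies in the major subtree $A$ (with $n-k$ leaves) or the minor subtree $B$ (with $k$ leaves). The key structural fact is that if both $A$ and $B$ retain at least one leaf after deleting $\ell$, then the root of $T$ survives as the root of $T\setminus\ell$. Its two pendant subtrees are then $A\setminus\ell$ and $B$, or $A$ and $B\setminus\ell$. Hence the root split of $T\setminus \ell$ is $\{n-k-1,k\}$ or $\{n-k,k-1\}$ respectively. The only exception is when $\ell$ is the unique leaf of $B$, which happens precisely when $k=1$.

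For $1<k<\frac n2$, both subtrees have at least two leaves, so the root always survives.
\begin{itemize}
\item Removing one of the $k$ leaves of $B$ gives split $\{n-k,k-1\}$. Since $k-1<n-k$, the root balance is $k-1$.
\item Removing one of the $n-k$ leaves of $A$ gives split $\{n-k-1,k\}$. Since $k<\frac n2$, we have $k\le n-k-1$, so the minimum is $k$ and the root balance is $k$.
\end{itemize}
This yields $n-k$ subtrees of balance $k$ and $k$ subtrees of balance $k-1$.

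For $k=\frac n2$ (with $n\ge 4$, so that $k\ge 2$), removing any leaf from either side gives split $\{k,k-1\}$. All $n$ subtrees therefore have balance $k-1$.

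For $k=1$, $B$ is a single leaf $b$. For each of the $n-1$ leaves $\ell$ of $A$, the root survives with split $\{n-2,1\}$, giving balance $1$ whenever $n\ge 3$ (so that $n-2\ge 1$). Removing $b$ itself deletes the root and leaves the tree $A$ on $n-1$ leaves. Its root balance is some value at most $\lfloor\frac{n-1}{2}\rfloor$ by the general bound on root balance noted just above.

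The main obstacle is not conceptual but lies in the boundary cases.
\begin{itemize}
\item The strict inequality $k<\frac n2$ is needed to ensure $k\le n-k-1$, so that deleting from the major side does not swap which subtree is minor.
\item The degenerate small cases must be handled: $n=2$ with $k=1$, where the tree $A$ is a single leaf, and the overlap of the first case with $k=1$. The first bullet is stated only for $k>1$, which avoids the latter.
\end{itemize}
I would state explicitly that $n\ge 3$ in the $k=1$ case, or note that for $n=2$ the single-leaf subtrees are degenerate, and check each split computation against these conventions.
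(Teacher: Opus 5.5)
Your proposal is correct and takes essentially the same route as the paper. Like the paper, you classify each of the $n$ deleted leaves by whether it lies in the major or minor subtree, read off the resulting root split, and use the single-leaf minor subtree in the case $k=1$ to account for the one subtree of unconstrained root balance. Your explicit check that $k<\frac{n}{2}$ forces $k\le n-k-1$, and your handling of the degenerate $n=2$ overlap between the second and third cases, are slightly more careful than the paper's own write-up.
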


\begin{proof}
    Suppose $1<k<\frac{n}{2}$. Then $T$ has a minor subtree with $k$ leaves, and a major subtree with $n-k$ leaves. Each subtree of $T$ with $n-1$ leaves is obtained by removing some leaf of $T$. If this leaf is in the major subtree of $T$, then, as the number of leaves of the minor subtree is unchanged and removing one leaf from the major subtree will not make it have fewer leaves than the minor subtree, the resulting subtree will have root balance $k$. If this leaf is in the minor subtree of $T$, as $k \geq 2$ there is still at least one leaf in the minor subtree of the resulting subtree. Thus $M_{n-1}(T)$ contains $n-k$ subtrees with root balance $k$ and $k$ subtrees with root balance $k-1$ if $1<k<\frac{n}{2}$.
    
    Now suppose $k = \frac{n}{2}$. Then the minor and major subtrees of $T$ both have $k$ leaves. Each subtree of $T$ with $n-1$ leaves is obtained by removing some leaf of $T$. No matter which of the $n$ leaves are chosen, this will result in a subtree with a major subtree with $k$ leaves, and a minor subtree with $k-1$, where the major subtree is that major/minor subtree of $T$ from which the leaf was not removed. Thus $M_{n-1}(T)$ contains $n$ subtrees with root balance $k-1$ if $k=\frac{n}{2}$.
    
    Now suppose $k=1$. Then $T$ has a major subtree with $n-1$ leaves, and a minor subtree consisting of a single leaf. Each subtree of $T$ with $n-1$ leaves is obtained by removing some leaf of $T$. If this leaf is in the major subtree of $T$, then the resulting subtree will still have a minor subtree with only a single leaf, and so will have root balance 1. Otherwise the leaf removed must be the single leaf in the minor subtree, resulting in a subtree that is the major subtree of $T$, which has no restrictions on its shape, and thus could have any root balance possible for a tree with $n-1$ leaves, which is at most $\lfloor\frac{n-1}{2}\rfloor$. Thus $M_{n-1}(T)$ contains $n-1$ subtrees with root balance $1$, and one subtree with root balance at most $\lfloor\frac{n-1}{2}\rfloor$ if $k=1$.
    This covers all possible cases for the value of $k$.
\end{proof}

We can now show that root balance is always recoverable from $(n-1)$-buckets for two trees.

\begin{Theorem}\label{unlabelled_buckets_rb_equal}
    Let $T_1$ and $T_2$ be binary rooted phylogenetic trees, on $n>4$ leaves with root balances $k_1$, and $k_2$ respectively. Then the set $\{k_1, k_2\}$ is recoverable from their $(n-1)$-bucket.
\end{Theorem}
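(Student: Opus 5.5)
We work only with the multiset of root balances of the $2n$ trees in $B_{n-1}(\{T_1,T_2\})$, and use the Counting Lemma (Lemma \ref{unlabelled_buckets_counting}) to read off $\{k_1,k_2\}$. The basic observation is that each multideck $M_{n-1}(T_i)$ has a characteristic profile determined by $k_i$:
\begin{itemize}
\item if $1<k_i<\frac n2$, it contributes $n-k_i$ copies of $k_i$ and $k_i$ copies of $k_i-1$;
\item if $k_i=\frac n2$, it contributes $n$ copies of $k_i-1$;
\item if $k_i=1$, it contributes $n-1$ copies of $1$ and one unknown value $x_i\le\lfloor\frac{n-1}{2}\rfloor$.
\end{itemize}
We must show that the union of two such profiles determines $\{k_1,k_2\}$. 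We proceed by recovering $m=\max(k_1,k_2)$ first and then peeling off its contribution.

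\textbf{Step 1: recovering the maximum.} Let $M$ be the largest value in the bucket's multiset of root balances that occurs with multiplicity at least $2$. If $m\ge 2$, then $T_i$ with $k_i=m$ contributes either at least $n-m\ge \frac n2>2$ copies of $m$ (when $m<\frac n2$), or $n$ copies of $m-1$ (when $m=\frac n2$). The other tree contributes values at most $m$. The only values exceeding $m-1$ that could arise are the stray $x_j$ of a tree with $k_j=1$, and such a value occurs only once per tree.

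We decide between the two cases by parity and counts. If $n$ is odd, then $m<\frac n2$, so the count of $M$ is large. If $n$ is even, the case $m=\frac n2$ is detected because the value $\frac n2-1$ appears at least $n$ times while $\frac n2$ appears at most twice (only from stray values, since no tree with $k<\frac n2$ produces root balance $\frac n2$). Conversely, if some tree has $\frac n2-1<k<\frac n2$, this is impossible for even $n$, so there is no ambiguity with a tree of balance $\frac n2-1$ producing $\frac n2-1$; that tree contributes $\frac n2+1$ copies of $\frac n2-1$ but also $\frac n2-1$ copies of $\frac n2-2$. We distinguish it from the $k=\frac n2$ case by checking whether $\frac n2$ appears more than twice, and by the multiplicities. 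Here $n>4$ is needed so that $n-m>2$ and the thresholds separate. If every multiplicity-$\ge2$ value is $1$, then $m=1$ and $\{k_1,k_2\}=\{1,1\}$.

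\textbf{Step 2: removing the known profile.} Once $m$ (say $k_1=m$) is known, we remove its profile from the multiset, which is possible except when $m=1$, a case already done. The remainder is the $n$ values from $T_2$. Then $k_2$ is read off as follows:
\begin{itemize}
\item it is $1$ if at least $n-1$ of the remaining values equal $1$ (for $n>4$, no other profile has this many $1$s unless $k_2=2$ with $n-2$ twos, which is distinguishable);
\item it is $\frac n2$ if all $n$ values equal $\frac n2-1$;
\item otherwise it is the value appearing $n-k_2$ times alongside $k_2$ copies of $k_2-1$.
\end{itemize}
These profiles are pairwise distinct for $n>4$, which is a short check.

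\textbf{Main obstacle.} The delicate part is Step 1's case analysis: ensuring that a tree with balance $1$ contributing a stray large value $x$, or a balanced tree contributing many copies of $\frac n2-1$, cannot be confused with a tree of larger balance. We would handle this by comparing the multiplicities $n-k$ versus $1$ and $n$ versus $\frac n2+1$, and the small case $n=5,6$ may need direct enumeration.
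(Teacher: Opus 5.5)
Your overall plan is the paper's. You apply the Counting Lemma to the multiset $R$ of root balances in the bucket, identify one root balance from the multiplicity of an extreme value, subtract that tree's now-known contribution, and read the other off the remaining $n$ values. Step~2 is fine. But Step~1 carries all the content, and its decision rules fail as stated.

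First, filtering by ``multiplicity at least $2$'' does not discard stray values. Suppose $k_1=k_2=1$ and both major subtrees have the same root balance $v\ge 2$ (say $v=2$). Then $v$ occurs exactly twice in $R$, so $M=v$. Your only criterion for $m=1$ (``every multiplicity-$\ge 2$ value is $1$'') does not fire, and nothing else in Step~1 handles this case. The right threshold is $3$: for $n>4$, a tree with root balance $2\le k<\frac n2$ gives $n-k\ge 3$ copies of $k$, while each root-balance-$1$ tree gives at most one copy of any value $\ge 2$.

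Second, for even $n$ your test for $m=\frac n2$ is wrong. Every element of $R$ is the root balance of an $(n-1)$-leaf tree, hence at most $\frac n2-1$, so checking how often $\frac n2$ appears is vacuous. Moreover, ``$\frac n2-1$ appears at least $n$ times'' also holds for $\{k_1,k_2\}=\{\frac n2-1,\frac n2-1\}$, which produces $n+2$ copies. Your phrase ``and by the multiplicities'' is exactly the missing argument.

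To close the gap, do what the paper does. Let $k$ be the maximum of $R$ and $c$ its multiplicity.
\begin{itemize}
\item If $k\neq\frac n2-1$, copies of $k$ can only come from a tree of root balance $k$ or from strays. So $c\ge 3$ means some tree has root balance $k$, and $c\le 2$ means some tree has root balance $1$ with stray value $k$. In both cases that tree's whole contribution to $R$ is known.
\item If $n$ is even and $k=\frac n2-1$, a direct count gives three possibilities. We get $c\in\{1,2\}$ when neither tree has root balance $\frac n2-1$ or $\frac n2$, so all copies are strays. We get $c\in\{\frac n2+1,\frac n2+2,n+2\}$ when the larger root balance is $\frac n2-1$. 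We get $c\in\{n,n+1,\frac{3n}{2}+1,2n\}$ when some tree has root balance $\frac n2$.
\end{itemize}
These three sets are pairwise disjoint for even $n\ge 6$, so no separate enumeration of $n=5,6$ is needed. In every case one tree's contribution is determined, and your Step~2 then finishes the proof.
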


\begin{proof}
    Consider the multiset, $R$, of the root balances of trees in $B_{n-1}(\{T_1, T_2\})$. Consider the multiplicity of the maximum value, $k$, in $R$. Suppose $k = \frac{n}{2}-1$. By Lemma \ref{unlabelled_buckets_counting} an $(n-1)$-leaf subtree with root balance $\frac{n}{2}-1$ must be a subtree of a tree with root balance $1$, $\frac{n}{2}-1$ or $\frac{n}{2}$. A tree with root balance $1$ will have at most $1$ subtree with root balance $\frac{n}{2}-1$, a tree with root balance $\frac{n}{2}-1$ will have exactly $\frac{n}{2}+1$, and a tree with root balance $\frac{n}{2}$ will have $n$. Thus if there are $1$ or $2$ subtrees with root balance $k$ then one of $k_1$ or $k_2$ is $1$. If there are $\frac{n}{2}+1$, $n+2$ or $\frac{3n}{2}+1$ subtrees with root balance $k$ than one of $k_1$ or $k_2$ is $\frac{n}{2}-1$. And finally if there are $n$ or $2n$ subtrees with root balance $k$ then one of $k_1$ or $k_2$ is $\frac{n}{2}$.
    
    By Lemma \ref{unlabelled_buckets_counting} a subtree with $n-1$ leaves and root balance $k \neq \frac{n}{2}-1$ must either be a subtree of a tree with root balance $1$ or $k$. A tree with $n>4$ leaves and root balance $k$ must have at least $3$ subtrees with root balance $k$, whereas with root balance $1$ it can have at most 1, by Lemma \ref{unlabelled_buckets_counting}. Thus by considering the cardinality of $k$ in $R$ we can determine $k_1$ or $k_2$, say $k_1$ without loss of generality. Additionally this determines the total contribution to $R$ by subtrees of $T_1$, allowing us to determine the exact multiset of the root balances of subtrees of $T_2$. By Lemma \ref{unlabelled_buckets_counting}, if there are at least $3$ subtrees with root balance $1$ then $k_2 = 1$. Otherwise if all subtrees have the same root balance then $k_2 = \frac{n}{2}$. And finally if neither of these are true then $k_2$ is the maximum root balance of any of these trees. Thus we can determine both $k_1$ and $k_2$, and so the set $\{k_1, k_2\}$ is recoverable.
\end{proof}

We now introduce a novel tree statistic that will be used in the proof of Theorem \ref{unlabelled_buckets_theorem5}, and show that it is also fully recoverable from an $(n-1)$-bucket.

\begin{Definition}
    Let $T$ be a rooted binary phylogenetic tree. If the root of $T$ has no leaf child, define the \emph{pendant depth of $T$} to be $0$. Suppose $T$ has a maximal path from the root to some vertex consisting of $k$ vertices that each have a leaf child. Then we refer to the path as the \emph{pendant path} and say $T$ has a \emph{pendant depth} of $k$. Note that in the case that the path has only $1$ vertex, that is, it consists only of the root, this means that the root has a leaf child and no child of the root has a leaf child.
\end{Definition}

\begin{Observation}
    A tree with pendant depth $k$ has a path from the root consisting of $k$ internal vertices with a leaf child. The $(k+1)$-th vertex, if it exists and is not a single vertex, is then the root of a subtree with root balance greater than $1$. Refer to this subtree as the \emph{highest non-pendant subtree}.
\end{Observation}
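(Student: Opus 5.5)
This is an Observation: we must justify that a tree with pendant depth $k$ has a path from the root of $k$ internal vertices each with a leaf child, and that the $(k+1)$-th vertex, when it exists and is not a leaf, roots a subtree with root balance greater than $1$. The plan is to argue directly from the definition of pendant depth and the binary structure.

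\textbf{Step 1: structure of the pendant path.} Let $v_1$ be the root and suppose $v_1,\dots,v_k$ is the maximal pendant path. Each $v_i$ has a leaf child and, being binary, exactly one other child. That other child is the only possible continuation of the path. So the path is forced: $v_{i+1}$ is the non-leaf child of $v_i$.

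We should address the degenerate case in which both children of $v_i$ are leaves. Then $v_i$ is a cherry parent and the tree ends there. We treat this as "the $(k+1)$-th vertex is a single vertex", i.e.\ a leaf, which the observation explicitly excludes. Each $v_i$ has out-degree $2$, so it is internal (or the root). This gives the first claim.

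\textbf{Step 2: the $(k+1)$-th vertex.} Let $w$ be the non-leaf child of $v_k$, assuming it exists and is not a leaf. By maximality of the path, $w$ has no leaf child, since otherwise the path would extend to $k+1$ vertices.

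Consider the subtree $T_w$ rooted at $w$. Its two children $c_1,c_2$ are both internal, so each pendant subtree at $w$ has at least $2$ leaves. Hence the minor subtree of $T_w$ has at least $2$ leaves, so the root balance of $T_w$ is at least $2>1$.

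\textbf{Step 3: the $k=0$ case.} When $k=0$, the root has no leaf child. The same argument then applies directly to the root, taking $w$ to be the root itself, and it gives root balance at least $2$.

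The only real subtlety is the bookkeeping of degenerate cases: cherries at the end of the path, and $k=0$. These should be stated explicitly rather than glossed over.
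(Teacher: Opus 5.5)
Your proof is correct. The paper gives no separate proof of this Observation and treats it as immediate from the definition of pendant depth; your argument spells out that implicit reasoning. The path is forced along non-leaf children, maximality means the next vertex $w$ has no leaf child, so both children of $w$ are internal and the minor subtree at $w$ has at least $2$ leaves. Your handling of the cherry-terminated case and the $k=0$ case is consistent with how the paper later uses the notion, namely that nonzero pendant depth is equivalent to root balance $1$.
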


We begin with a counting lemma in a similar style to the one derived for root balance.

\begin{Lemma}[Counting Lemma]\label{pendant_depth_lemma}
    Let $T$ be a rooted binary phylogenetic tree with root balance $1$. Then $T$ has a non-zero pendant depth $k$. Furthermore, either 
    
    \begin{enumerate}
        \item $T$ is a caterpillar tree and the $(n-1)$-multideck consists of $n$ subtrees of pendant depth $n-2$, or
        \item $T$ is not a caterpillar and the $(n-1)$-multideck of $T$ consists of $k$ subtrees of pendant depth $k-1$, and $n-k$ subtrees of pendant depth at least $k$, of which zero, two or four of which are strictly greater than $k$. Furthermore, if there are two or four with pendant depth strictly greater than $k$, they each have the same pendant depth.
    \end{enumerate} 
\end{Lemma}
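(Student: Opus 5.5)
The plan is to read off the structure of $T$ along its pendant path and then examine, leaf by leaf, what deleting that leaf does to the path. Since $T$ has root balance $1$, the root has a leaf child, so the pendant depth $k$ is at least $1$. Let the pendant path be $v_1=\rho, v_2,\dots,v_k$, where each $v_i$ has a leaf child $\ell_i$ and $v_{i+1}$ is the non-leaf child of $v_i$.

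\emph{Caterpillar case.} If $T$ is a caterpillar, every one of its $n-1$ internal vertices has a leaf child. Deleting any leaf of a caterpillar yields a caterpillar on $n-1$ leaves. Every caterpillar on $m$ leaves has the same pendant depth, determined by $m$, so all $n$ members of $M_{n-1}(T)$ share a single pendant depth, namely that of an $(n-1)$-leaf caterpillar. I will check this value against the convention in the definition (the last vertex of the path is the cherry parent), which gives $n-2$ as stated.

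\emph{Non-caterpillar case.} If $T$ is not a caterpillar, then $v_{k+1}$, the non-leaf child of $v_k$, exists. By maximality of the path it has no leaf child. So it roots the highest non-pendant subtree $H$, whose root has children $A$ and $B$, each with at least $2$ leaves. I then split the $n$ single-leaf deletions into two groups.

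First, deleting $\ell_i$ for some $i\le k$ suppresses $v_i$. The pendant path then consists of the remaining $k-1$ vertices, still followed by the untouched root of $H$, which has no leaf child. This gives exactly $k$ subtrees of pendant depth $k-1$.

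Second, deleting a leaf $x$ of $H$ leaves $v_1,\dots,v_k$ and their leaf children intact, so the pendant depth is at least $k$. It is strictly greater than $k$ exactly when the root of $H$ acquires a leaf child. Since $A$ and $B$ each have at least two leaves, this happens precisely when $x$ lies in a cherry $A$ (or $B$) that is itself a child of the root of $H$. In that case the path extends through the root of $H$ and then continues down the other child. If $A$ is a cherry, both of its leaves produce the same tree up to isomorphism, with pendant depth $k+1+\mathrm{pd}(B)$, where $\mathrm{pd}(B)$ follows the same convention. So each cherry child of the root of $H$ contributes exactly $0$ or $2$ strictly-larger subtrees, giving a total of $0$, $2$ or $4$.

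\emph{Equal depths.} If only one child is a cherry, the two strictly-larger subtrees are isomorphic, so they have equal depth. If both $A$ and $B$ are cherries, then $H$ is the balanced $4$-leaf tree, and all four deletions give isomorphic results, hence equal pendant depths.

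The main obstacle is bookkeeping with the definition of pendant depth rather than any real difficulty. The delicate points are the convention for the terminal vertex of a path (cherry parent versus a vertex with one leaf child), the case $k=1$, and verifying that the suppression at $v_i$ or at the root of $H$ does not create or destroy leaf children elsewhere on the path. I would handle these by writing the pendant path explicitly and checking each suppression locally.
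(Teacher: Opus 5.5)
Your proposal is correct and follows essentially the same route as the paper. You treat the caterpillar case separately, then split the non-caterpillar deletions into the $k$ pendant-path leaves (each giving depth $k-1$) and the $n-k$ leaves of the highest non-pendant subtree, where depth increases only when a cherry child of its root loses a leaf; this yields $0$, $2$ or $4$ isomorphic subtrees of equal depth, and your write-up is if anything slightly more explicit than the paper's (e.g.\ noting that both children of the root of $H$ have at least two leaves, and the depth formula $k+1+\mathrm{pd}(B)$).
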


\begin{proof}
    As $T$ has root balance $1$, the root has a leaf child. Hence the pendant depth of $T$ is at least $1$.

    First suppose $T$ is a caterpillar tree, and hence has a pendant depth of $n-1$. Every subtree in the $(n-1)$-multideck of a caterpillar tree is the unique $(n-1)$-leaf caterpillar tree, each with a pendant depth of $n-2$.

    Now suppose $T$ is not a caterpillar tree, and thus the highest non-pendant subtree exists. Now, consider the set of subtrees obtained by suppressing a vertex on the pendant path and deleting its leaf child. Then each such tree will still have a pendant path consisting of $k-1$ vertices and hence has a pendant depth of $k-1$. As there are $k$ options for internal vertices that are deleted, there will be $k$ such subtrees.

    Now consider the subtrees obtained by suppressing some parent of a leaf vertex in the highest non-pendant subtree and deleting its leaf child. In this case, the $k$ internal vertices on the pendant path still all exist and so the resulting subtree has pendant length at least $k$. As there are $n-k$ options for the leaf deletion, there will be $n-k$ such subtrees.

    Now, in order for the pendant length to increase, the non-leaf child $v$ of the $k$-th vertex on the pendant path must have one pendant subtree consisting only of a cherry (say, $c$ and its two leaf children), so that upon suppression of $c$ and deletion of one of its leaf children $v$ has a leaf child. In particular, such a cherry will produce two subtrees in the $(n-1)$-multideck of pendant length greater than $k$, as there are two possible leaf children of $c$ to be deleted. As $v$ can have at zero, one or two pendant subtrees consisting of cherries, there are therefore zero, two or four possible subtrees with pendant length greater than $k$ (and, in fact, only exactly four precisely when the highest non-pendant subtree is the fully balanced tree on four leaves). In each case for two or four subtrees with strictly greater than $k$ pendant depth, the resulting trees will be isomorphic in tree shape, so will have the same pendant depth.

    Hence the lemma is proven.
\end{proof}

We can now provide sufficient conditions for pendant depth to be recoverable. We note that the requirement that $n>13$ can likely be relaxed, but it seems that reducing this number would require significantly more arguments and likely numerous additional cases.

\begin{Theorem}\label{t:pendant.depth.recoverable}
    Let $T_1$ and $T_2$ be rooted phylogenetic trees, on $n>13$ leaves with pendant depths $k_1$ and $k_2$ respectively. Then the set $\{k_1, k_2\}$ is recoverable from their $(n-1)$-bucket.
\end{Theorem}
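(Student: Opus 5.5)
We work only with the multiset $P$ of pendant depths of the trees in $B_{n-1}(\{T_1,T_2\})$, together with the multiset $R$ of their root balances. We first recover the root balances $\{r_1,r_2\}$ by Theorem \ref{unlabelled_buckets_rb_equal}, which applies since $n>13>4$. We then split into cases according to how many of $r_1,r_2$ equal $1$. This matters because pendant depth is $0$ exactly when the root balance exceeds $1$.

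\textbf{Case $r_1,r_2>1$.} Then $k_1=k_2=0$ and there is nothing to prove.

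\textbf{Case $r_1=1<r_2$.} Here $k_2=0$, and we must identify $k_1$. Every tree in $M_{n-1}(T_2)$ has pendant depth $0$ except possibly a bounded number of exceptions. To see this, repeat the argument of Lemma \ref{unlabelled_buckets_counting}: deleting a leaf from $T_2$ yields root balance $1$ only if $r_2=2$ and the leaf lies in the minor cherry. That gives at most $2$ subtrees, each of pendant depth at most $n-2$. For $T_1$, Lemma \ref{pendant_depth_lemma} says that $M_{n-1}(T_1)$ contributes either
\begin{itemize}
\item $n$ copies of $n-2$ (when $T_1$ is a caterpillar), or
\item $k_1$ copies of $k_1-1$ and $n-k_1$ values that are at least $k_1$, at most four of them strictly larger and all equal.
\end{itemize}
We read off $k_1$ as one plus the minimum nonzero value of $P$, after discounting the at most $2$ stray entries from $T_2$. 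We check consistency by multiplicities: the value $k_1$ should appear at least $n-k_1-4$ times. We also handle $k_1=1$, where the $k_1-1=0$ entry merges with $T_2$'s zeros; this is detected because $n-1$ entries are then at least $1$ with the profile above.

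\textbf{Case $r_1=r_2=1$.} This is the main case. Both profiles from Lemma \ref{pendant_depth_lemma} are superimposed, and $P$ is the union of
\begin{itemize}
\item $\{(k_1-1)^{k_1},\, k_1^{\,n-k_1-a_1},\, m_1^{\,a_1}\}$, and
\item $\{(k_2-1)^{k_2},\, k_2^{\,n-k_2-a_2},\, m_2^{\,a_2}\}$,
\end{itemize}
where $a_i\in\{0,2,4\}$ and $m_i>k_i$ (or the caterpillar profile $(n-2)^n$). Assume $k_1\le k_2$. The smallest value in $P$ is $k_1-1$, and we want to determine $k_1$ and then peel off $T_1$'s contribution. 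The danger is coincidence of values across the two profiles:
\begin{itemize}
\item $k_2-1=k_1$, and
\item $m_1\in\{k_2-1,k_2\}$.
\end{itemize}
The plan is a case analysis on $k_2-k_1\in\{0,1,\ge2\}$. When $k_2-k_1\ge 2$, the minimum value $k_1-1$ occurs exactly $k_1$ times, up to the $m$-contamination which only affects values above $k_1$. Hence $k_1$ is the minimum plus one, and subtracting the forced profile leaves at most four ambiguous entries.

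Large $n$ is used here. The values $k$ and $k-1$ carry total multiplicity $n$ per tree, while the anomalous $m$-values carry at most $4$. Since $n>13$, the two dominant values of each tree cannot be confused with the at most $8$ anomalous entries, so a multiplicity count distinguishes genuine profile values from stray ones. Concretely, I would show that the decomposition of $P$ is forced by solving the linear multiplicity equations. For each candidate pair $(k_1,k_2)$, the counts at values $k_1-1,k_1,k_2-1,k_2$ are determined up to $\pm 8$, and two distinct candidate pairs would require counts differing by more than $8$ somewhere once $n\ge 14$.

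The main obstacle is this final case, in particular configurations such as $k_2=k_1+1$ or $k_1=k_2$ combined with caterpillars and $a_i=4$. In these configurations the multiset of values nearly overlaps, and the bound $n>13$ must be invoked carefully to rule out alternate decompositions.
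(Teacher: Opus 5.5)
Recovering $\{r_1,r_2\}$ first via Theorem~\ref{unlabelled_buckets_rb_equal} is a valid reduction. It is tidier than the paper's enumeration of the possible numbers of depth-zero subtrees. The gap is the case $r_1=r_2=1$, which carries essentially all the content of the theorem. For it you give only a plan, and you say yourself it is the main obstacle. Two steps are missing.
\begin{enumerate}
\item You do not say how the data distinguish $k_2=k_1+1$ from $k_2\ge k_1+2$. In both cases the minimum value $k_1-1$ occurs exactly $k_1$ times.
\item After you remove all entries equal to $k_1-1$ and $k_1$, the $a_1$ leftover copies of $m_1$ can lie below $k_2-1$, or exactly on $k_2-1$ or $k_2$. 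So $k_2$ is not simply one more than the new minimum, and you give no rule for separating these leftovers from $T_2$'s profile.
\end{enumerate}

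The uniform principle you offer instead does not close the gap. If each candidate pair determines the counts only up to $\pm 8$, two candidates are both consistent with the data whenever their nominal counts differ by at most $16$ at every value. So a discrepancy of more than $8$ somewhere proves nothing. Even a threshold of $16$ fails. For $n=14$, the nominal profile of $(k_1,k_2)=(2,3)$ is $\{1^2,2^{15},3^{11}\}$ and that of $(2,4)$ is $\{1^2,2^{12},3^4,4^{10}\}$, and these differ by at most $10$ at each value.

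What you need is the sign structure in Lemma~\ref{pendant_depth_lemma}:
\begin{itemize}
\item the anomalous entries lower the count at $k_i$ by $a_i$ and raise the count by $a_i$ at a single value $m_i>k_i$;
\item $a_i=4$ only when $k_i=n-4$, i.e.\ the highest non-pendant subtree is the balanced four-leaf tree, so otherwise $a_i\le 2$.
\end{itemize}
With this, the count at $k_1$ is $n+1-a_1$ if $k_2=k_1+1$ and $n-k_1-a_1$ if $k_2\ge k_1+2$. That separates the two cases except when $k_1=1$, which your merged case allows; there the count at value $2$ settles it.

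To find $k_2$, use that a non-caterpillar deeper tree contributes $k_2$ copies of $k_2-1$ and at least $n-k_2-4$ copies of $k_2$. Compare these against at most four leftover entries from $T_1$. The paper's case analysis rests on counts of this kind, and this is where $n>13$ enters: for $k_2\le 5$ it guarantees at least five copies of $k_2$.

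Until you carry out such an argument, the main case is unproved.
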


\begin{proof}
    We observe that the only trees with a pendant depth of zero that have trees with non-zero pendant depth in their $(n-1)$-multideck are those trees with root balance $2$, as any tree has non-zero pendant depth if and only if it has root balance $1$. In the case of trees with root balance $2$, such a tree will have precisely two trees of non-zero pendant-depth in their $(n-1)$-multideck by Lemma \ref{pendant_depth_lemma}.

    We first consider the case in which the $(n-1)$-bucket contains at least one tree of pendant depth zero. Combining the above observation and Lemma \ref{pendant_depth_lemma}, there are three possibilities for a multideck of a single tree that contains at least one subtree of pendant depth zero - it is a tree with root balance $3$ or greater and all $n$ trees in its $(n-1)$-multideck have pendant depth zero; it is a tree with root balance precisely $2$ and its $(n-1)$-multideck has $n-2$ of pendant depth $0$ and $2$ of pendant depth at least $1$, or it has root balance and pendant depth exactly $1$, so it has one tree of pendant depth $0$ and $(n-1)$ subtrees of pendant depth at least $1$.

    It follows that an $(n-1)$-bucket containing at least one tree with zero pendant depth has one of the following nine options for number of trees of zero pendant depth, denoting root balances by $r_1$ and $r_2$ and supposing $r_1 \le r_2$ without loss of generality: all $2n$ ($r_1,r_2\ge 3$), $2n-2$ ($r_1=2,r_2=3$), $2n-4$ ($r_1=r_2=2$), $n+1$ ($r_1=1,r_2=3$), $n$ ($r_2=3$ and the other tree has pendant depth greater than $1$), $n-1$ ($r_1=1,r_2=2$), $n-2$ ($r_2=2$ and the other tree has pendant depth greater than $1$), $2$ ($r_1=r_2=1$ and both have pendant depth $1$), $1$ ($r_1r_2=1$, one tree has pendant depth $1$, and the other has pendant depth greater than $1$). These are all discrete values when $n \ge 6$, and hence pendant depth is always recoverable for $k_1$ or $k_2 \in \{0,1\}$.

    We can therefore assume that both inputs have pendant depth at least $2$.

    We now consider caterpillar subtrees. Observe that if an input is not a caterpillar, it can have at most four subtrees that are caterpillars by Lemma \ref{pendant_depth_lemma}. Hence, we first check if the highest pendant depth of a subtree is $n-2$. If there are $0$, there are no caterpillars in the input trees and if there are $2n$, both input trees were caterpillars (with pendant depth $(n-1)$). If there are at least $n$, we know one of the input trees are caterpillars (since $n>8$ and two non-caterpillars can contribute at most $8$ caterpillar subtrees to the $(n-1)$-bucket). We can then recover the other pendant depth by considering the smallest pendant depth in the bucket, which must be at least $1$ since we assume input trees have pendant depth at least two, and this smallest pendant depth must be $k-1$ for a pendant depth of $k$ by Lemma \ref{pendant_depth_lemma}. Hence we recover $\{n-1,k\}$.
    
    We can now assume that both input trees have pendant depth at least $2$ and that neither input tree is a caterpillar.

    We now consider the smallest pendant depth of a subtree in the $(n-1)$-bucket. As both trees have pendant depth at least $2$, there is some tree with a minimal pendant depth of $k-1>0$, from which we can immediately identify that some input tree had a pendant depth of $k$. If there are precisely $k$ of them, we know there are two distinct pendant depths for the input trees - if there are more than $k$, we know that the input trees had pendant depths $\{k,k\}$. 

    We then check if there are at least $j$ subtrees with pendant depth $j-1$ for each $j \ge 6$ up to the maximum pendant depth of the subtrees in the $(n-1)$-bucket. If there are, by Lemma \ref{pendant_depth_lemma} $j$ of these can only have come from an input tree with pendant depth of $j$, and so we have recovered $\{j,k\}$ for all cases where at least one is six or greater.

    Hence the only remaining cases are those where the input trees have pendant depth $2 \le j,k \le 5$ and $j \ne k$, and without loss of generality assume $j>k$ (and we always know the value of the smaller pendant depth).

    If $j=k+1$, we simply consider the number of trees with pendant depth $k$. The tree of pendant depth $k$ will contribute $(n-k), (n-k-2)$, or $(n-k-4)$ subtrees of pendant depth $k$, while the tree of pendant depth $k+1$ will contribute $k+1$, for a total of $(n+1), (n-1)$ or $(n-3)$ subtrees of pendant depth $k$. In any case, if $j>k+1$ it would not contribute any subtrees with pendant depth $k$, so this is the only case where the number of subtrees of pendant depth $k$ have a different parity to $n$, and so this case is recoverable.

    Finally, we know that the input tree of pendant depth $j$ must contribute at least $n-j-4>n-9$ (since $j \le 5$) subtrees of pendant depth $j$. Since $n>13$, there will be more than four subtrees, so not all can have been produced by the tree with pendant depth $k<j$ and its none of the pendant depth $k$ subtrees produced by the $k$ input tree since $j>k+1$. Hence we can then immediately recover $j$ too, as it is simply the pendant depth of those (at least) $n-9$ subtrees.
\end{proof}

\section{Unlabelled \texorpdfstring{$(n-1)$}{(n-1)}-buckets}\label{s:unlabelledn-1}

We will now systematically characterise the recoverability of unlabelled trees from their $(n-1)$-buckets. 

The majority of this section will establish recoverability for pairs of trees with at least a certain number of leaves, dependent on their root balance. In particular, the main theorem of this section is the following theorem.

\begin{Theorem}
    \label{always_recoverable_unlabelled}
    Let $T_1$ and $T_2$ be unlabelled  rooted binary phylogenetic trees on $n>8$ leaves, $n<4$ leaves, or $n=5$ leaves. Then $T_1$ and $T_2$ are recoverable from their $(n-1)$-bucket.
\end{Theorem}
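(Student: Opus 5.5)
The small cases are immediate. For $n<4$ there is a single unlabelled tree on $n$ leaves, so any bucket has the unique preimage $\{T,T\}$. For $n=5$ the three shapes on five leaves can be listed explicitly. I would compute their $4$-multidecks by hand and check that the six possible unordered pairs give six distinct $4$-buckets.

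For $n>8$ the plan is to recover one input tree and then apply Lemma \ref{unlabelled_buckets_one_from_other} (valid since $n\neq 4$). By Theorem \ref{unlabelled_buckets_rb_equal} we know the root balances $\{k_1,k_2\}$, say $k_1\le k_2$. By the Counting Lemma \ref{unlabelled_buckets_counting} we also know which subtrees in the bucket are forced to come from which tree, according to their root balances. The argument then splits into cases on $(k_1,k_2)$.

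\textbf{Case $2\le k_1<k_2$.} The tree with root balance $k_2$ is the only source of subtrees with root balance $k_2$. Deleting a leaf from its major subtree gives $n-k_2\ge 1$ such subtrees (or, if $k_2=n/2$, deleting any leaf gives subtrees of balance $k_2-1$ that the other tree cannot produce unless $k_1=k_2-1$, which needs separate treatment). Each such subtree has the form (minor subtree $A$ of size $k_2$) joined with (major subtree $B$ minus a leaf). From these we read off $A$ exactly. As $B$ varies over $M_{n-k_2-1}$-type deletions, we recover the whole $(n-k_2-1)$-multideck of $B$, hence $B$ by Theorem \ref{reconstruct_from_n-1_multideck}, provided $n-k_2\ne 4$. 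Small exceptions such as $|B|=4$ would be handled using the extra subtrees of root balance $k_2-1$.

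\textbf{Case $k_1=k_2$.} The root-balance-$k$ subtrees come from both trees, so we must split the pooled multiset of pairs $(A_i, B_i\setminus\ell)$ by the minor part. If the two minor subtrees differ, grouping by the minor component splits the multiset, and each major subtree is recovered as above. If the minor subtrees agree, the major parts' multidecks are pooled, and we recurse: the problem becomes recovering a pair from an $(m-1)$-bucket with $m=n-k$. This is where an induction on $n$, combined with the small-case analysis, handles things.

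\textbf{Case $k_1=1$.} The root-balance-$1$ subtrees are the trees "leaf $+$ (major minus a leaf)". Stripping the pendant leaf reduces the problem to an $(n-2)$-bucket of the two major subtrees, plus a few stray subtrees. Pendant depth (Theorem \ref{t:pendant.depth.recoverable} when $n>13$, or direct counting otherwise) tells us how many caterpillar-like layers to peel off before the trees disagree. After peeling, we are reduced to a smaller instance of the same problem, and induction applies once the residual size exceeds $8$.

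The main obstacle is the boundary where the recursion drops into sizes $4$ to $8$, where the pathological counterexamples live (e.g.\ the $6$-leaf pair in Figure \ref{fig:six_leaf_unlabelled_trees_counterexamples}). For these I would argue that the stripped-off structure (the extra pendant leaves or the identical minor subtree) supplies additional subtrees that distinguish the candidate pairs. In practice this is a finite check of the small exceptional pairs, verifying that each acquires a distinguishing subtree once embedded in a tree with more than $8$ leaves. I would do this by explicit enumeration, possibly computer-assisted.
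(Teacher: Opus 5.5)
Your handling of $n<4$ and $n=5$ matches the paper. The overall frame is also the right one: recover $\{k_1,k_2\}$ by Theorem \ref{unlabelled_buckets_rb_equal}, pin down one input tree, and finish with Lemma \ref{unlabelled_buckets_one_from_other}. The $n>8$ part, however, has genuine gaps. The recursion you rely on is never closed, and it is not needed.

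In Case $k_1=k_2=k$ with equal minors, the root-balance-$k$ subtrees give only the pooled $(m-1)$-bucket of the two majors, and that can be a pathological pair. For example, take $T_1,T_2,T_3,T_4$ on $n=10$ leaves whose root has a cherry as one child and Tree 8a, 8c, 8b, 8b respectively as the other. The root-balance-$2$ subtrees of $\{T_1,T_2\}$ and $\{T_3,T_4\}$ coincide: eight copies each of a cherry joined to 7f and to 7g. The information you never use is in the subtrees of root balance $k-1$ (a leaf deleted from a minor). These contain the majors intact, so $\{B_1,B_2\}$ is read off directly and the equal-minor case is immediate, as in Theorem \ref{unlabelled_buckets_theorem2}.

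Likewise, in Case $k_1=k_2=1$, peeling pendant layers can leave a residual of at most $8$ leaves beneath an arbitrarily long pendant path. So your ``finite enumeration'' is not finite. The missing observation is that a subtree of minimum pendant depth arises from deleting a pendant-path leaf of an input tree. Attaching a new leaf at its root returns that input tree (Theorem \ref{unlabelled_buckets_theorem5}), with no appeal to Theorem \ref{t:pendant.depth.recoverable}.

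Several cases are also absent. For $k_1=k_2=\lfloor n/2\rfloor$ your Case $k_1=k_2$ argument has nothing to act on. If $k=n/2$ there are no subtrees of root balance $k$ at all. If $n$ is odd, the root-balance-$k$ subtrees have two halves of $k$ leaves each, so ``grouping by the minor part'' is undefined. The paper needs Theorems \ref{unlabelled_buckets_theorem3} and \ref{unlabelled_buckets_theorem3.5} here:
\begin{itemize}
\item for even $n$ the four halves are the major subtrees of quarters of the bucket;
\item for odd $n$ the majors come from the root-balance-$(k-1)$ subtrees, and the minors are isolated by subtracting the $k$-multidecks of the majors;
\item Lemma \ref{recover_from_pendant_subtrees} then applies when some half is unique;
\item for halves $\{H,H,H',H'\}$ with $H\neq H'$, one decides between $\{(H,H),(H',H')\}$ and two copies of $(H,H')$ by comparing doubled $(n/2-1)$-multidecks. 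This is exactly where $n>8$ is used.
\end{itemize}

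Your Case $k_1=1$ reduction to ``the $(n-2)$-bucket of the two majors'' is valid only when $k_2=1$. For $k_2=2$ the root-balance-$1$ subtrees are a leaf joined to each member of $M_{n-2}(T_A)$, two copies of a leaf joined to the major subtree of $T_2$, and possibly $T_A$ itself. The pair $(1,2)$ is precisely where the $6$- and $7$-leaf counterexamples live. Theorem \ref{unlabelled_buckets_theorem6} handles it in three steps. It splits on the root balance of $T_A$, the one subtree whose root balance is unconstrained. When that root balance is $2$, it uses subtrees of pendant depth at least $2$ to find subtrees that can only come from $T_1$. It then rebuilds the major subtree of $T_A$ from its $(n-4)$-multideck, which requires $n-3>4$.

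Finally, the subcase $k_1=k_2-1$, $k_2=n/2$ that you defer is avoided by the paper's route for $1<k_1<k_2$. There the root-balance-$(k_1-1)$ subtrees come only from $T_1$ and determine it via Lemma \ref{recover_from_root_balance_class} whenever $k_1\neq 4$.
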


The proof of this theorem will be achieved by considering a number of subcases. We summarise the results of the subcases in Table \ref{tab:unlabelled.cases}.

First, we require two technical lemmas.

\begin{Lemma}\label{recover_from_root_balance_class}
    Let $T$ be an unlabelled phylogenetic tree with major and minor subtrees $T_A$ and $T_B$ (not necessarily respectively). Suppose that a submultiset $S$ of $M_{n-1}(T)$ is the multiset of subtrees obtained from $T$ by deleting a leaf from $T_A$. If $T_A$ does not have $4$ leaves then $T$ is uniquely determined by $S$.
\end{Lemma}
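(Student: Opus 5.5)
The plan is to decompose each tree in $S$ at its root and read off both $T_B$ and the multideck of $T_A$, then apply Theorem \ref{reconstruct_from_n-1_multideck} to $T_A$.

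Let $a=|T_A|$ and $b=|T_B|$, so $a+b=n$. Every tree in $S$ has the form $(T_A', T_B)$ at its root, where $T_A'$ ranges over $M_{a-1}(T_A)$. Since $T_A$ has $a$ leaves, $|S|=a$, and so $a$ and $b=n-a$ are determined by $S$. (Here $n$ is known from the trees themselves, which have $n-1$ leaves.)

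The case $a=1$ needs separate handling. Then $S$ consists of the single tree $T_B$, obtained by deleting $T_A$. Since $T_A$ is a single leaf, $T$ is $T_B$ with a leaf attached at a new root, so $T$ is determined. From now on assume $a\ge 2$.

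In this case each tree in $S$ has root children with subtrees of sizes $a-1$ and $b$, and one of these children is exactly $T_B$.
\begin{itemize}
\item If $a-1\neq b$, we identify $T_B$ as the root-child subtree with $b$ leaves. The other root-child subtrees, taken with multiplicity, form the multiset $M_{a-1}(T_A)$.
\item If $a-1=b$, the two root-child subtrees have equal size, so which one is $T_B$ is ambiguous. Here I would argue that $T_B$ is a root-child subtree of every tree in $S$. The candidates for $T_B$ are therefore the shapes appearing as a root-child subtree in all $a$ trees of $S$. If there are two such candidates $X\neq Y$, then every tree in $S$ is exactly $(X,Y)$. Then $M_{a-1}(T_A)$ consists of $a$ copies of one shape. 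Taking $T_B=X$ forces the multideck to be all copies of $Y$, and vice versa. The resulting $T$ is $(\hat Y, X)$ or $(\hat X, Y)$, where $\hat Y$ is a tree whose entire deck is copies of $Y$.
\end{itemize}

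Once $T_B$ and $M_{a-1}(T_A)$ are known, Theorem \ref{reconstruct_from_n-1_multideck} recovers $T_A$ uniquely, because $a\neq 4$. Then $T$ is the tree with root subtrees $T_A$ and $T_B$, unordered since the trees are unlabelled.

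The main obstacle is the symmetric case $a-1=b$ with all trees of $S$ equal to $(X,Y)$ where $X\neq Y$. I would resolve it by noting that a tree on $a\neq 4$ leaves whose $(a-1)$-multideck is all one shape $Y$ is very restricted; by the argument in Lemma \ref{pendant_depth_lemma} and Lemma \ref{unlabelled_buckets_counting}, it is essentially a caterpillar. Then:
\begin{itemize}
\item If $Y$ is not the caterpillar on $a-1$ leaves, no such tree exists, so the assignment $T_B=X$ is impossible and only one choice remains.
\item If both $X$ and $Y$ are caterpillars on $a-1$ leaves, they are equal, contradicting $X\neq Y$.
\end{itemize}
Verifying the claim ``constant deck implies caterpillar'' is the step needing care. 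I would prove it via Lemma \ref{unlabelled_buckets_counting}: a constant deck forces root balance $1$ or $a/2$. I would then handle these cases by induction, using Theorem \ref{reconstruct_from_n-1_multideck} for small $a$.
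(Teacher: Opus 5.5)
\textbf{Verdict: there is a genuine gap, and it sits in the symmetric case. It cannot be closed, because the lemma itself is false there.}

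Your route is the paper's: split each tree of $S$ at its root, read off $T_B$ and $M_{a-1}(T_A)$, and apply Theorem \ref{reconstruct_from_n-1_multideck}. You are more careful than the paper in two places. First, you handle $|T_A|=1$ separately. Second, you notice that when $|T_A|-1=|T_B|$ the root child equal to $T_B$ cannot be picked out by size. The paper passes over this by asserting that $T_B$ is determined because it is a root subtree of every tree in $S$.

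The gap is in how you close that symmetric case. Your claim that a tree on $a\neq 4$ leaves with constant $(a-1)$-multideck is a caterpillar is false. The reduction via Lemma \ref{unlabelled_buckets_counting} to root balance $1$ or $a/2$ is fine, and the root-balance-$1$ branch does give caterpillars. In the root-balance-$a/2$ branch, however, $T_A=(U,V)$ has constant deck exactly when $U\cong V$ and $U$ itself has constant deck, and this produces non-caterpillars. Write $C_m$ for the caterpillar on $m$ leaves. Then $(C_3,C_3)$ on $6$ leaves has all six of its $5$-leaf subtrees equal to $(C_2,C_3)$. Likewise $(C_4,C_4)$ on $8$ leaves has constant deck $(C_3,C_4)$.

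The statement fails in exactly your two-candidate scenario. Take $n=11$, and compare:
\begin{itemize}
\item $T=((C_3,C_3),C_5)$, with $T_A=(C_3,C_3)$ and $T_B=C_5$;
\item $T'=(C_6,(C_2,C_3))$, with $T'_A=C_6$ and $T'_B=(C_2,C_3)$.
\end{itemize}
Both $T_A$ and $T'_A$ have $6\neq 4$ leaves. In both cases $S$ consists of six copies of the $10$-leaf tree $(C_5,(C_2,C_3))$. Yet $T\neq T'$, since $(C_3,C_3)\neq C_6$. This is your situation with $X=C_5$ and $Y=(C_2,C_3)$, and both assignments of $T_B$ are realised.

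What your argument does prove, via the first bullet plus the $a=1$ case, is the lemma under the extra hypothesis $|T_A|\neq|T_B|+1$ (or with $T_B$ given). That covers all of the paper's uses of the lemma except one. The $k_1=4$ subcase of Theorem \ref{unlabelled_buckets_theorem1} with $n=2k_2+1$ invokes the lemma in exactly the symmetric situation. There, for $k_2=5$, the two choices of $T_2$ above produce identical sets of root-balance-$k_2$ subtrees, so that step needs a separate argument.
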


\begin{proof}
    As each subtree is $S$ is obtained by removing a leaf from $T_A$ in $T$, each must have $T_B$ as a major or minor subtree. Thus $T_B$ is uniquely determined by $S$. Additionally if we consider the minor or major subtrees of subtrees in $S$ which are not $T_B$, they form the $(k-1)$-multiset of $T_A$, where $T_A$ has $k$ leaves. But as $T_A$ does not have $4$ leaves, by Theorem \ref{reconstruct_from_n-1_multideck}, $T_A$ is uniquely determined by $S$. Finally as both $T_A$ and $T_B$ are uniquely determined by $S$, $T$ is uniquely determined by $S$ as the unique tree with $T_A$ and $T_B$ as major and minor subtrees.
\end{proof}
\begin{Lemma}\label{recover_from_pendant_subtrees}
Let $T_1$ and $T_2$ be unlabelled phylogenetic trees with major and minor subtrees $(A_1,B_1)$ and $(A_2,B_2)$. Suppose the $(n-1)$-bucket determines the multisets
\[
\{A_1,A_2,B_1,B_2\}.
\]
If $A_1 \notin \{A_2, B_1, B_2\}$, none of $A_2$, $B_1$, or $B_2$ have a single leaf more than $A_1$, and $A_2$ does not have $4$ leaves, then the pairing between major and minor subtrees is uniquely determined by their multidecks.
Consequently $T_1$ and $T_2$ are recoverable from their $(n-1)$-bucket.
\end{Lemma}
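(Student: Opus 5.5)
Given the multiset $\{A_1,A_2,B_1,B_2\}$, the plan is to locate, inside the $(n-1)$-bucket, the subtrees of $T_1$ obtained by deleting a leaf from its \emph{partner} subtree. We would then reconstruct $T_1$ from these via Lemma \ref{recover_from_root_balance_class}, and recover $T_2$ afterwards. Since $A_1$ is distinguished (it differs from $A_2,B_1,B_2$), we focus on the bucket elements that contain $A_1$ as one of their two root subtrees.

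First we classify which elements of $B_{n-1}(\{T_1,T_2\})$ have $A_1$ as a child subtree of their root. An $(n-1)$-leaf subtree of $T_i$ has root subtrees either $(A_i', B_i)$ or $(A_i, B_i')$, where primes denote a one-leaf deletion. So for $A_1$ to appear as a root child there are four possibilities:
\begin{itemize}
\item $A_1 = A_1$, which does occur;
\item $A_1 = A_1'$, impossible by leaf count;
\item $A_1 = B_1'$, $A_1 = A_2'$ or $A_1 = B_2'$, excluded because none of $A_2,B_1,B_2$ has exactly one more leaf than $A_1$;
\item $A_1 = A_2$ or $A_1 = B_2$, excluded by $A_1\notin\{A_2,B_1,B_2\}$.
\end{itemize}
Hence the bucket elements having $A_1$ as a root child are exactly the subtrees of $T_1$ obtained by deleting a leaf of $B_1$. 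This multiset $S$ is readable from the bucket alone, because we know $A_1$ as an unlabelled tree. Note $S$ is nonempty, since $B_1$ has at least one leaf.

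Next, the other root child of each element of $S$ runs over the $(|B_1|-1)$-multideck of $B_1$. We know $B_1$ must be one of $A_2,B_1,B_2$ from the given multiset, so we only need to decide which candidate is compatible. Here Lemma \ref{recover_from_root_balance_class} is the natural tool, applied with $T_A=B_1$ once we know $B_1$ does not have $4$ leaves.

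The main obstacle is the $4$-leaf case. If $B_1$ has $4$ leaves, its $3$-multideck does not distinguish the two shapes. I would instead use the elimination argument: $B_1$ is one of the three remaining trees, and its leaf count equals $|S|$. Among candidates with the correct leaf count, the partner of $A_1$ is determined up to which of the remaining trees goes with $A_1$. Note that $T_1$ and $T_2$ both have $n$ leaves, so $|A_1|+|B_1|=n=|A_2|+|B_2|$. If the candidates with the right size are distinct as trees but both of size $4$, I would compare the count of the element $(A_1,\text{candidate})$ against bucket elements having the complementary pair as root children. Concretely, pairing $A_1$ with $B_1$ forces $A_2$ with $B_2$, and the subtrees $(A_2,B_2')$ must appear in the bucket. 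The hypothesis that $A_2$ does not have $4$ leaves is exactly what is needed here: removing $T_1$'s contribution leaves the multideck of $T_2$, and Lemma \ref{recover_from_root_balance_class} applied to $T_2$ with $T_A=B_2$ or $T_A=A_2$ pins $T_2$ down. I expect this bookkeeping, ensuring a unique consistent pairing, to be the fiddly step.

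Finally, once the pairing $\{(A_1,B_1),(A_2,B_2)\}$ is fixed, $T_1$ is the unique tree with these root subtrees, so $T_1$ is recovered. Lemma \ref{unlabelled_buckets_one_from_other} then gives recoverability of both trees, noting $n\neq4$ since $A_1$ exceeds the others appropriately.
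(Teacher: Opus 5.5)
For $|B_1|\neq 4$ your argument is the paper's. The bucket elements with $A_1$ as a root child are exactly the deletions of a leaf of $B_1$ from $T_1$, and their other root children give the multideck of $B_1$, which determines $B_1$. The paper's written proof says this is the multideck of $A_2$ and appeals to $|A_2|\neq 4$. You are right that it is $B_1$'s multideck, so the case $|B_1|=4$ needs a separate argument, and that case does occur in the applications, e.g.\ Theorem \ref{unlabelled_buckets_theorem3.5} with $n=9$.

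That separate argument is where your proposal has a genuine gap. When $|B_1|=4$, the hypothesis $|A_2|\neq 4$ leaves exactly one possible ambiguity: $|B_2|=4$ with $B_2\neq B_1$ (one caterpillar, one balanced) and $|A_1|=|A_2|=n-4$. The competing pairings are then $\{(A_1,B_1),(A_2,B_2)\}$ and $\{(A_1,B_2),(A_2,B_1)\}$, and none of your proposed tests separates them:
\begin{itemize}
\item Both $4$-leaf shapes have the same $3$-multideck. So under either pairing the bucket contains exactly four copies of $A_1$ joined to the $3$-leaf tree and four copies of $A_2$ joined to the $3$-leaf tree. Counting elements made of $A_1$ and a one-leaf deletion of a candidate, or checking for elements $(A_2,B_2')$, gives identical answers for both pairings.
\item ``Removing $T_1$'s contribution'' presupposes the very $T_1$ you are trying to determine. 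Run as a consistency check on each pairing, it only decides something once you know the two candidate buckets differ, which is the claim itself.
\item Lemma \ref{recover_from_root_balance_class} with $T_A=B_2$ is unavailable, since $B_2$ has $4$ leaves. With $T_A=A_2$ you have not said how to isolate the relevant deletions inside the bucket.
\end{itemize}

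The missing idea is to use the other class of subtrees. Fix one of the two $4$-leaf trees, say $Q$, and collect the bucket elements having $Q$ as a root child. In the ambiguous situation $n\geq 9$, since two distinct trees on $n-4$ leaves need $n-4\geq 4$, and $n-4=|A_2|\neq 4$.

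For $n\geq 10$, these elements are exactly the deletions of a leaf from the partner of $Q$. Their other root children therefore form the $(n-5)$-multideck of that partner. Since $A_1\neq A_2$ and both have $n-4\neq 4$ leaves, Theorem \ref{reconstruct_from_n-1_multideck} tells you which of them is the partner. This is where the hypothesis on $A_2$ genuinely enters.

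For $n=9$, a deletion from the partner is itself a $4$-leaf tree and may coincide with $Q$, so count instead. The number of bucket elements both of whose root children are the $4$-leaf caterpillar equals the number $c(P)$ of caterpillars in the $4$-multideck of the $5$-leaf tree $P$ paired with the caterpillar. That multideck consists only of caterpillars and balanced trees, so $c(P)$ determines it, and hence determines $P$ by Theorem \ref{reconstruct_from_n-1_multideck}. Thus $A_1\neq A_2$ again forces the pairing.

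With this case supplied, the remainder of your proposal goes through.
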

\begin{proof}
    Let $S$ denote the multiset of subtrees in $B_{n-1}(\{T_1, T_2\})$ with $A_1$ as a major or minor subtree. Then each subtree in $S$ must be obtained by removing a leaf from $B_1$, as there is no way to remove a leaf from any of the major or minor subtrees of $T_1$ and $T_2$ to obtain $A_1$, thus $A_1$ must be the major or minor subtree of the tree a leaf was removed from, but not the subtree the leaf was removed from. Then the multiset of major or minor subtrees of the subtrees in $S$ that are not $A_1$ form the $(k-1)$-multiset of $A_2$ where $A_2$ has $k$ leaves. Then, as $A_2$ does not have $4$ leaves by Theorem \ref{reconstruct_from_n-1_multideck}, $T_1$ is recoverable from $B_{n-1}(\{T_1, T_2\})$. Thus by Theorem \ref{unlabelled_buckets_one_from_other}, $T_1$ and $T_2$ are recoverable.
\end{proof}

The following two theorems establish recoverability of any pair of trees with at least $4$ leaves from their $k$-buckets, with certain restrictions on their root balances.

\begin{table}
    \centering
    \begin{tabular}{c c c}
\textbf{Root balances} & \textbf{Number of Leaves for Recoverability} & \textbf{Theorem}\\
\hline
$1 < k_1 < k_2$ & $n>4$ & \ref{unlabelled_buckets_theorem1} \\
$1 < k_1 = k_2 < \lfloor \frac{n}{2} \rfloor$ & $n>4$ & \ref{unlabelled_buckets_theorem2} \\
$k_1=k_2=\frac{n}{2} $ & $n>8, n$ even & \ref{unlabelled_buckets_theorem3}\\
$k_1 = k_2 = \lfloor \frac{n}{2} \rfloor$ & $n>7, n$ odd & \ref{unlabelled_buckets_theorem3.5}\\
$k_1=1,k_2>2$ & $n>5$ & \ref{unlabelled_buckets_theorem4}\\
$k_1=k_2=1$ & $n>4$ & \ref{unlabelled_buckets_theorem5}\\
$k_1=1,k_2=2$ & $n>7$ & \ref{unlabelled_buckets_theorem6}\\
\end{tabular}
\caption{The cases considered in this section}
\label{tab:unlabelled.cases}
\end{table}

\begin{Theorem}\label{unlabelled_buckets_theorem1}
    Let $T_1$ and $T_2$ be unlabelled phylogenetic trees on $n>4$ leaves with root balance $k_1$ and $k_2$ respectively. If $1 < k_1 < k_2$ then $T_1$ and $T_2$ are recoverable from their $(n-1)$-bucket.
\end{Theorem}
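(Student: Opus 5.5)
Since $1<k_1<k_2\le \lfloor n/2\rfloor$, I would isolate the subtrees in the bucket that can only come from $T_2$ and use them to rebuild $T_2$ via Lemma \ref{recover_from_root_balance_class}. First, by Theorem \ref{unlabelled_buckets_rb_equal} the set $\{k_1,k_2\}$ is recoverable, so we know $k_1<k_2$ and which value is which. By the Counting Lemma \ref{unlabelled_buckets_counting}, every tree in $M_{n-1}(T_1)$ has root balance $k_1$ or $k_1-1$, since $1<k_1<n/2$. Hence every tree in the bucket with root balance $k_2$ (or $k_2-1$, whenever $k_2-1>k_1$) comes from $T_2$.

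I would split on whether $k_2<n/2$ or $k_2=n/2$.

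\emph{Case $k_2<n/2$.} The trees of root balance exactly $k_2$ in the bucket are precisely the $n-k_2$ trees obtained from $T_2$ by deleting a leaf of its major subtree $A_2$. These trees really do have balance $k_2$: deleting a major leaf leaves $n-k_2-1\ge k_2$ leaves on the major side, so the minor side keeps $k_2$ leaves. So $S$ is exactly the submultiset of Lemma \ref{recover_from_root_balance_class} with $T_A=A_2$. That lemma reconstructs $T_2$ provided $A_2$ does not have $4$ leaves. Since $n-k_2\ge 4$ would force $n\ge 6$, the exception $n-k_2=4$ occurs only when $k_2\le 3$ and $n\le 7$. With $1<k_1<k_2$, this leaves $k_2=3$ and $n=7$ (as $k_2=2$ is impossible).

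In that exceptional case I would switch to the other class. The trees of balance $k_2-1=2=k_1$ are contaminated by $T_1$, so instead I would argue directly: the minor subtree $B_2$ of size $3$ is unique, and $A_2$ is one of the two $4$-leaf shapes. The $n-k_2$ subtrees of $T_2$ with balance $3$ have major subtrees ranging over $M_3(A_2)$, which does not distinguish the two shapes. However, deleting a leaf from $B_2$ yields a tree with minor part of size $2$ and major part $A_2$, and the multiplicity of each of these shapes in the bucket can be compared against what $T_1$ (balance $2$, $n=7$) can contribute. I expect this finite check, $n=7$ with root balances $(2,3)$, to be the main obstacle. If the multiplicity argument fails, I would enumerate the handful of possible pairs by hand.

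\emph{Case $k_2=n/2$.} Here every tree of $M_{n-1}(T_2)$ has root balance $n/2-1$, and these are the only trees of that balance unless $k_1=n/2-1$. If $k_1<n/2-1$, the class has no contamination. Taking the $n/2$ trees obtained by deleting from one chosen side gives a submultiset usable in Lemma \ref{recover_from_root_balance_class}. We cannot tell which side each deletion came from, but the whole multideck $M_{n-1}(T_2)$ is available, so I would simply apply Theorem \ref{reconstruct_from_n-1_multideck} directly, valid since $n\ne 4$.

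If $k_1=n/2-1$, the balance-$(n/2-1)$ class mixes $n$ trees from $T_2$ with $k_1$ trees from $T_1$. The remaining $n-k_1$ trees of $T_1$ have balance $k_1$, which equals the balance of everything else, and they are all the other trees in the bucket. Those $n-k_1=n/2+1$ trees are exactly the major-deletions of $T_1$, so Lemma \ref{recover_from_root_balance_class} recovers $T_1$ whenever $n/2+1\ne 4$, that is, $n\ne 6$. The other tree then follows by Lemma \ref{unlabelled_buckets_one_from_other}. The residual case is $n=6$ with balances $(2,3)$, which again needs a small direct check.

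In all cases, once one tree is recovered, Lemma \ref{unlabelled_buckets_one_from_other} gives the other.
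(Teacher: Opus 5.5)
Write $A_i,B_i$ for the major and minor subtrees of $T_i$. Your route differs from the paper's. The paper isolates the trees of root balance $k_1-1$. These can only be the $k_1$ minor-deletions of $T_1$, because every tree in $M_{n-1}(T_2)$ has balance at least $k_2-1\ge k_1$. It then applies Lemma~\ref{recover_from_root_balance_class} with $T_A=B_1$, which works whenever $k_1\neq 4$, and only in that case turns to $T_2$. Your $T_2$-based arguments are sound in two situations: $k_2<n/2$ with $(n,k_2)\neq(7,3)$, and $k_2=n/2$ with $k_1<n/2-1$, where the balance-$(n/2-1)$ class is exactly $M_{n-1}(T_2)$.

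There are, however, two genuine gaps. The first is the case $n=7$, $(k_1,k_2)=(2,3)$, which you leave open, and the multiplicity comparison you sketch cannot settle it. Take either $A_1$ the $5$-leaf caterpillar with $A_2$ the balanced $4$-leaf tree, or $A_1$ the $5$-leaf tree with root split $3|2$ with $A_2$ the $4$-leaf caterpillar. In both, the balance-$2$ trees are five with a caterpillar $4$-leaf side and three with a balanced one, and the balance-$3$ trees agree as well. What is missing is the balance-$1$ class. It consists solely of the two cherry-deletions of $T_1$, each a leaf joined to $A_1$, so $T_1$ can be read off at once.

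The second gap is in the subcase $k_2=n/2$, $k_1=n/2-1$, where you have swapped the two parts of $M_{n-1}(T_1)$. The $n-k_1$ major-deletions of $T_1$ have balance $k_1=n/2-1$. They therefore lie in the same class as the $n$ trees of $T_2$ and cannot be separated by root balance. The trees outside that class are the $k_1$ minor-deletions, of balance $k_1-1$. So your use of Lemma~\ref{recover_from_root_balance_class} with $T_A=A_1$ is unjustified, and your residual case is misidentified.

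Applying the lemma to the minor-deletions instead, with $T_A=B_1$ of size $k_1$, handles $n=6$. It fails exactly when $k_1=4$, i.e.\ $n=10$, $(k_1,k_2)=(4,5)$. There the minor-deletions reveal only $A_1$, since the $3$-multideck of a $4$-leaf tree carries no information. This case needs an extra step. For example, the $5$-leaf sides of the balance-$4$ trees form $M_5(A_1)\uplus\{A_2^{\times 5},B_2^{\times 5}\}$. Knowing $A_1$ therefore determines $T_2$, and $T_1$ then follows by Lemma~\ref{unlabelled_buckets_one_from_other}. The paper's own fallback for $k_1=4$ uses the balance-$k_2$ class, which is empty when $k_2=n/2$, so it needs such an argument there as well.
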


\begin{proof}
    By Theorem \ref{unlabelled_buckets_rb_equal}, the root balances of $T_1$ and $T_2$ are recoverable. 
    
    Let the set of subtrees in $B_{n-1}(\{T_1, T_2\})$ with root balance $k_1-1$ be denoted by $S$. By Lemma \ref{unlabelled_buckets_counting}, $T_2$ has no subtrees with root balance $k_1-1$. Thus the subtrees in $S$ must be obtained by removing a leaf from the minor subtree of $T_1$, so we may apply Lemma \ref{recover_from_root_balance_class} and see that if $k_1 \neq 4$, then $T_1$ is recoverable from $B_{n-1}(\{T_1, T_2\})$, and as one input tree has been determined, by Lemma \ref{unlabelled_buckets_one_from_other}, $T_1$ and $T_2$ are recoverable from their $(n-1)$-bucket.
    
    Now we consider the case where $k_1=4$. In this case $k_2$ is at least $5$. Let the set of subtrees in $B_{n-1}(\{T_1, T_2\})$ with root balance $k_2$ be denoted by $S'$. Thus the subtrees in $S'$ must be obtained by removing a leaf from the major subtree of $T_2$. As $k_2 > 4$, by Lemma \ref{recover_from_root_balance_class} $T_2$ is recoverable from $B_{n-1}(\{T_1, T_2\})$. Then by Lemma \ref{unlabelled_buckets_one_from_other}, $T_1$ and $T_2$ are recoverable from their $(n-1)$-bucket.
    
\end{proof}

\begin{Theorem}\label{unlabelled_buckets_theorem2}
    Let $T_1$ and $T_2$ be unlabelled phylogenetic trees on $n>4$ leaves with root balance $k_1$ and $k_2$ respectively. If $1 < k_1 = k_2 < \lfloor\frac{n}{2}\rfloor$ then $T_1$ and $T_2$ are recoverable from their $(n-1)$-bucket.
\end{Theorem}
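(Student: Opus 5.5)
Write $k=k_1=k_2$, with $1<k<\lfloor n/2\rfloor$, and let $(A_1,B_1)$ and $(A_2,B_2)$ be the major and minor subtrees of $T_1$ and $T_2$, so $|B_i|=k$ and $|A_i|=n-k\ge k+2$. The plan is to split the bucket by root balance and read off the pieces from the resulting classes. By Lemma \ref{unlabelled_buckets_counting}, the bucket contains $2(n-k)$ subtrees with root balance $k$, namely those obtained by deleting a leaf from some $A_i$, and $2k$ subtrees with root balance $k-1$, namely those obtained by deleting a leaf from some $B_i$. Because $n-k-1>k$ and $k-1<n-k$, every tree in either class has a well-defined major and minor subtree.

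\textbf{Recovering $A_1,A_2$ from the root-balance-$(k-1)$ class.} Every tree in this class has major subtree equal to $A_1$ or $A_2$, with multiplicities exactly $k$ each (counted with repetition if $A_1=A_2$). So the multiset of major subtrees is $k$ copies of $A_1$ together with $k$ copies of $A_2$, and this gives the multiset $\{A_1,A_2\}$.

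\textbf{Recovering $B_1,B_2$ from the root-balance-$k$ class.} Every tree in this class has minor subtree equal to $B_1$ or $B_2$, each occurring $n-k$ times. This gives the multiset $\{B_1,B_2\}$.

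\textbf{Pairing.} If $A_1=A_2$ or $B_1=B_2$, the pairing is forced and the trees are determined, so assume $A_1\neq A_2$ and $B_1\ne B_2$. Take the $k$ trees in the root-balance-$(k-1)$ class whose major subtree is $A_1$. Their minor subtrees form the $(k-1)$-multideck of the partner of $A_1$. By Theorem \ref{reconstruct_from_n-1_multideck}, this multideck determines that partner when $k\neq4$, so we learn which $B_i$ goes with $A_1$ and recover $T_1$. Lemma \ref{unlabelled_buckets_one_from_other} then recovers $T_2$.

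\textbf{The main obstacle is $k=4$.} Here the two minor subtrees are distinct four-leaf trees, namely the caterpillar and the balanced tree, and they share the same $3$-multideck, so the step above fails. Use the root-balance-$4$ class instead. The trees in it with minor subtree $B_1$ have as major subtrees the $(n-5)$-multideck of the partner of $B_1$. Since $n-4\ge6$, Theorem \ref{reconstruct_from_n-1_multideck} applies to that partner, which is $A_1$ or $A_2$, and this determines the pairing. This second route works whenever $n-k\neq4$, which always holds because $n-k\ge k+2$. So the argument closes, and in fact this route could be used in every case.
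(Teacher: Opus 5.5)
Your route matches the paper's. You recover $\{A_1,A_2\}$ from the major subtrees of the root-balance-$(k-1)$ class and $\{B_1,B_2\}$ from the minor subtrees of the root-balance-$k$ class. The pairing is forced if either pair coincides; otherwise you fix it by reading a multideck off one class. The one missing step is the paper's opening line. You need Theorem~\ref{unlabelled_buckets_rb_equal} (valid since $n>4$) to know that \emph{every} pair with the same $(n-1)$-bucket also has root balances $\{k,k\}$. Your decoding reads the two root-balance classes as ``delete a leaf from a major subtree'' and ``delete a leaf from a minor subtree'', and that reading is only justified for such pairs. Without the citation you have shown how to rebuild $\{T_1,T_2\}$, but not that no other pair yields the same bucket. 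It is a one-line fix.

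Keep your separate treatment of $k=4$, because the paper's write-up does not actually cover that case. The paper finishes by applying Lemma~\ref{recover_from_pendant_subtrees} to the major subtree $T_A$ (your $A_1$), and checks only that $T_A$ has more than $4$ leaves. But the bucket trees having $T_A$ as a pendant subtree arise by deleting a leaf of its partner $T_B$ (your $B_1$). So when $k=4$, what that argument reads off is the $3$-multideck of a $4$-leaf tree, which cannot separate the caterpillar from the balanced tree.

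Your second route is the same pendant-subtree argument run on $B_1$ instead of $A_1$. It reads off the $(n-5)$-multideck of $A_1$, and $k=4<\lfloor n/2\rfloor$ forces $n\ge 10$, so this is sound.

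One small correction to your closing remark. The inequality $n-k\ge k+2$ by itself only gives $n-k\ge 4$, with equality at $k=2$, $n=6$. The right reason the second route always suffices is that it is only needed when $B_1\neq B_2$. That forces $k\ge 4$ and hence $n-k\ge 6$.
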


\begin{proof}
    By Theorem \ref{unlabelled_buckets_rb_equal}, the root balances of $T_1$ and $T_2$ are recoverable.

    Let $T_A$ and $T_B$ be the major and minor subtrees of $T_1$ respectively. Let $T_C$ and $T_D$ be the major and minor subtrees of $T_2$ respectively.
    
    Let the subtrees in $B_{n-1}(\{T_1, T_2\})$ with root balance $k_1-1$ be denoted by $S$. By Lemma \ref{unlabelled_buckets_counting}, exactly half of the subtrees in $S$ must be obtained by removing a leaf from the minor subtree of $T_1$, and the other half from the minor subtree of $T_2$. Additionally, each subtree in $S$ must have either $T_A$ or $T_C$ as a major subtree, depening on which of $T_1$ or $T_2$ they are a subtree of. Thus half the major subtrees in $S$ have $T_A$ as a major subtree, and the other half have some tree $T_C$ as a major subtree. Thus $T_A$ and $T_C$ are recoverable from $B_{n-1}(\{T_1, T_2\})$.

    Let the subtrees in $B_{n-1}(\{T_1, T_2\})$ with root balance $k_1$ be denoted by $S'$. By a similar argument, half the minor subtrees in $S'$ have some tree $T_B$ as a minor subtree, and the other half have some tree $T_D$ as a minor subtree. Thus $T_B$ and $T_D$ are recoverable from $B_{n-1}(\{T_1, T_2\})$.

    If $T_A = T_C$ then $T_1$ and $T_2$ are the unique trees with $T_A$ as the major subtree and $T_B$ and $T_D$ as minor subtrees, and so $T_1$ and $T_2$ are recoverable from their $(n-1)$-bucket. Similarly if $T_B = T_D$ then $T_1$ and $T_2$ are recoverable from their $(n-1)$-bucket.

    Now suppose $T_A \neq T_C$ and $T_B \neq T_D$. For $T_B$ and $T_D$ to differ they must have at least $4$ leaves. Thus $T_A$ has greater than $4$ leaves. As $T_A \neq T_C$, we must have that $T_A \notin \{T_B, T_C, T_D\}$. Additionally none of $T_B$, $T_C$ or $T_D$ have more leaves than $T_A$. Thus by Lemma \ref{recover_from_pendant_subtrees}, $T_1$ and $T_2$ are recoverable from their $(n-1)$-bucket.
\end{proof}

The following theorem concerns the case where the major and minor subtrees of each input tree have an equal number of leaves. Unlike our previous theorems which assume our trees have more than 4 leaves, this theorem assumes our trees have more than 8 leaves. Later, in Theorem \ref{unlabelled_buckets_large_counterexamples} we will see that there exist unrecoverable pairs of balanced trees on 8 leaves. This is due to the fact that the left and right subtrees of a balanced tree with 8 leaves will each have 4 leaves, meaning that they cannot be recovered from their $(n-1)$-multideck due to Theorem \ref{reconstruct_from_n-1_multideck}.

\begin{Theorem}\label{unlabelled_buckets_theorem3}
    Let $T_1$ and $T_2$ be unlabelled phylogenetic trees on $n>8$ leaves with root balance $k_1$ and $k_2$ respectively. If $n$ is even and $k_1 = k_2 = \frac{n}{2}$ then $T_1$ and $T_2$ are recoverable from their $(n-1)$-bucket.
\end{Theorem}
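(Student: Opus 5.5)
Let $m=\frac{n}{2}\geq 5$, and write $T_1$ as the tree with root subtrees $A_1,B_1$ and $T_2$ as the tree with root subtrees $A_2,B_2$, each having $m$ leaves. The plan is to recover the multiset $\{A_1,B_1,A_2,B_2\}$ and then the pairing between these subtrees.

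By Lemma~\ref{unlabelled_buckets_counting}, every tree in $B_{n-1}(\{T_1,T_2\})$ has root balance $m-1$, with one root subtree on $m$ leaves and one on $m-1$ leaves. Deleting a leaf from $B_1$ produces a bucket tree whose $m$-leaf root subtree is $A_1$, and symmetrically for the other three subtrees. So the multiset of $m$-leaf root subtrees of the bucket trees contains each of $A_1,B_1,A_2,B_2$ exactly $m$ times (as a multiset with multiplicities added). Dividing the multiplicities by $m$ recovers $\{A_1,B_1,A_2,B_2\}$.

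Next we recover the pairing. Group the bucket trees by their $m$-leaf root subtree. Suppose some value $X$ occurs exactly once in the multiset, say $X=A_1$. The $m$ bucket trees having $X$ as their $m$-leaf part are exactly those obtained by deleting a leaf of $B_1$. Their $(m-1)$-leaf root subtrees therefore form $M_{m-1}(B_1)$. Since $m\neq 4$, Theorem~\ref{reconstruct_from_n-1_multideck} recovers $B_1$ from this multideck, hence recovers $T_1$. Lemma~\ref{unlabelled_buckets_one_from_other} then recovers $T_2$. This is essentially the argument of Lemma~\ref{recover_from_pendant_subtrees}, but here the subtrees have equal size, so no size condition is needed.

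It remains to treat the cases in which every value occurs at least twice. Either all four subtrees are equal, in which case $T_1=T_2$ is forced, or the multiset is $\{X,X,Y,Y\}$ with $X\neq Y$. In the latter case the only possible pairs are $\{(X,X),(Y,Y)\}$ and $\{(X,Y),(X,Y)\}$, and these must be distinguished. Look at the $2m$ bucket trees whose $m$-leaf part is $X$. In the first pairing their $(m-1)$-leaf parts form $2M_{m-1}(X)$; in the second they form $2M_{m-1}(Y)$. Because $m\neq 4$ and $X\neq Y$, Theorem~\ref{reconstruct_from_n-1_multideck} gives $M_{m-1}(X)\neq M_{m-1}(Y)$, so the two pairings produce different buckets. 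Hence the pairing is determined by the bucket.

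I expect this last distinction to be the main obstacle. It is exactly where $n>8$ is needed: at $m=4$, two distinct $4$-leaf trees have identical $3$-multidecks, and the argument fails.
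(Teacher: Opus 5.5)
Your proof is correct and follows essentially the same route as the paper's: recover $\{A_1,B_1,A_2,B_2\}$ from the $m$-leaf root subtrees of the bucket; handle a multiplicity-one value by reconstructing its partner from the $(m-1)$-leaf parts (you inline the reasoning of Lemma~\ref{recover_from_pendant_subtrees} rather than citing it); and separate the pairings $\{(X,X),(Y,Y)\}$ and $\{(X,Y),(X,Y)\}$ using $2M_{m-1}(X)\neq 2M_{m-1}(Y)$. The one step you should add is that any competing pair $T_3,T_4$ with the same bucket also has root balance $m$. The paper gets this at the start from Theorem~\ref{unlabelled_buckets_rb_equal}, and it also follows directly from Lemma~\ref{unlabelled_buckets_counting}, since every bucket tree has root balance $m-1\geq 4$; this is what licenses applying your computations to the competing pair.
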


\begin{proof}
    By Theorem \ref{unlabelled_buckets_rb_equal}, the root balances of $T_1$ and $T_2$ are recoverable. 
    
    All subtrees in $B_{n-1}(\{T_1, T_2\})$ have root balance $k_1-1$ by Lemma \ref{unlabelled_buckets_counting}. Each of the major and minor subtrees of $T_1$ and $T_2$ are major subtrees of exactly a quarter of the subtrees in $B_{n-1}(\{T_1, T_2\})$. Thus by considering the major subtrees of subtrees in $B_{n-1}(\{T_1, T_2\})$, we can determine the major and minor subtrees of $T_1$ and $T_2$, say $T_A$, $T_B$, $T_C$, and $T_D$.
    
    If one of those subtrees is distinct from the others, say $T_A \notin \{T_B, T_C, T_D\}$, let $S$ be the set of subtrees in $B_{n-1}(\{T_1, T_2\})$ with major subtree $T_A$. We can then immediately observe that as $n>8$ and $n$ is even, all four of the major and minor subtrees have more than $4$ leaves. Furthermore, all four have the same number of leaves, and so none have one more leaf than $T_A$. Thus, we can apply Lemma \ref{recover_from_pendant_subtrees} and $T_1$ and $T_2$ are recoverable.
    
    Otherwise, if $T_A = T_B = T_C = T_D$ then $T_1$ and $T_2$ are both the unique tree with $T_A$ as both major and minor subtrees, and so $T_1$ and $T_2$ are recoverable.
    
    Finally, if $T_A = T_B \neq T_C = T_D$ either $T_1=T_2$, and have $T_A$ and $T_C$ as major and minor subtrees, or $T_1$ has 2 copies of $T_A$ as its major and minor subtrees, and similar for $T_2$ with $T_C$ or vice versa. Let $S$ denote the set of subtrees in $B_{n-1}(\{T_1, T_2\})$ with $T_A$ as their major subtree. Then if we consider the minor subtrees of trees in $S$, they form either the $(k_1-1)$-bucket of $T_A$ and $T_A$ or $T_C$ and $T_C$. But that is the $(k_1-1)$-multideck of $T_A$ or $T_C$ with multiplicities doubled, so by Lemma \ref{reconstruct_from_n-1_multideck} this uniquely determines whether $T_1$ is the unique tree with major and minor subtree $T_A$ and $T_2$ is the unique tree with major and minor subtrees $T_C$, or $T_1$ and $T_2$ are both the unique trees with $T_A$ and $T_B$ as major and minor subtrees. Thus $T_1$ and $T_2$ are recoverable from their $(n-1)$-bucket.
\end{proof}

\begin{Theorem}\label{unlabelled_buckets_theorem3.5}
    Let $T_1$ and $T_2$ be unlabelled phylogenetic trees on $n>7$ leaves with root balance $k_1$ and $k_2$ respectively. If $n$ is odd and $k_1 = k_2 = \lfloor\frac{n}{2}\rfloor$ then $T_1$ and $T_2$ are recoverable from their $(n-1)$-bucket.
\end{Theorem}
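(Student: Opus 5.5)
Write $n=2m+1$ with $m\ge 4$. Then both $T_1$ and $T_2$ have a major subtree with $m+1$ leaves and a minor subtree with $m$ leaves. Let $A_i$ and $B_i$ denote the major and minor subtrees of $T_i$. By Theorem \ref{unlabelled_buckets_rb_equal} the root balances are known. Following the proof of Lemma \ref{unlabelled_buckets_counting}, each tree contributes two kinds of subtrees to the bucket:
\begin{itemize}
\item $m$ subtrees with root balance $m-1$, obtained by deleting a leaf of $B_i$; these have split $\{m+1,m-1\}$ and major subtree $A_i$;
\item $m+1$ subtrees with root balance $m$, obtained by deleting a leaf of $A_i$; these have pendant subtrees $B_i$ and $A_i-x$.
\end{itemize}
Let $S$ be the subtrees of the bucket with root balance $m-1$, and $S'$ those with root balance $m$.

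\textbf{Step 1: recover the four pendant subtrees.} The major subtrees of the members of $S$ form the multiset $\{A_1^{m},A_2^{m}\}$, so $\{A_1,A_2\}$ is recovered as a multiset. Next, the multiset of all pendant subtrees of members of $S'$ equals
\[
\{B_1^{m+1},B_2^{m+1}\}\cup M_m(A_1)\cup M_m(A_2).
\]
Since $A_1$ and $A_2$ are already known, $M_m(A_1)$ and $M_m(A_2)$ can be computed and subtracted. What remains is $\{B_1^{m+1},B_2^{m+1}\}$, so $\{B_1,B_2\}$ is recovered. At this point the only ambiguity is the pairing of the $A$'s with the $B$'s. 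If $A_1=A_2$ or $B_1=B_2$, the pairing is forced and we are done.

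\textbf{Step 2: determine the pairing.} Now suppose $A_1\ne A_2$ and $B_1\ne B_2$. The members of $S$ with major subtree $A_1$ are exactly the subtrees of $T_1$ obtained by deleting a leaf of $B_1$. Their minor subtrees therefore form $M_{m-1}(B_1)$.
\begin{itemize}
\item If $m\ne 4$, Theorem \ref{reconstruct_from_n-1_multideck} shows that this multideck determines which of $B_1,B_2$ is paired with $A_1$. This is essentially Lemma \ref{recover_from_pendant_subtrees} applied with the distinguished subtree $A_1$. Then Lemma \ref{unlabelled_buckets_one_from_other} finishes the argument.
\item If $m=4$, i.e.\ $n=9$, the minor subtrees in $S$ have three leaves and carry no information. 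This is the step I expect to be the main obstacle.
\end{itemize}

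\textbf{Step 3: the case $n=9$.} Here I would instead use the unordered pairs of pendant subtrees in $S'$. Under the pairing $\{(A_1,B_1),(A_2,B_2)\}$ these pairs form the multiset
\[
P=\{\{B_1,y\}:y\in M_4(A_1)\}\cup\{\{B_2,y\}:y\in M_4(A_2)\},
\]
and under the swapped pairing they form the analogous multiset $P'$ with $B_1$ and $B_2$ exchanged. It suffices to show that $P\ne P'$.

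The $A_i$ have $5$ leaves and $A_1\ne A_2$, so $M_4(A_1)\ne M_4(A_2)$ by Theorem \ref{reconstruct_from_n-1_multideck}. Choose a $4$-leaf tree $y$ whose multiplicity differs between the two decks, and compare the multiplicity of $\{B_1,y\}$ in $P$ and in $P'$.
\begin{itemize}
\item If $y\notin\{B_1,B_2\}$, this multiplicity is $\mathrm{mult}_{A_1}(y)$ in $P$ and $\mathrm{mult}_{A_2}(y)$ in $P'$. These differ, so $P\ne P'$.
\item If $y\in\{B_1,B_2\}$, the pair $\{B_1,B_2\}$ picks up contributions from both $B_1$ and $B_2$, and a short direct count is needed.
\end{itemize}
Note that only two unlabelled $4$-leaf trees exist, so $\{B_1,B_2\}$ consists of both of them. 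The $4$-decks of the three $5$-leaf trees can be listed explicitly, and a finite check of the at most three possible choices of $\{A_1,A_2\}$ settles this residual case. Once the pairing is known, $T_1$ is recovered, and Lemma \ref{unlabelled_buckets_one_from_other} gives recoverability of both trees.
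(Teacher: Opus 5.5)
Your Steps 1 and 2 follow the paper's proof. The root-balance-$(m-1)$ class gives $\{A_1,A_2\}$. Subtracting $M_m(A_1)\cup M_m(A_2)$ from the pendant subtrees of the root-balance-$m$ class gives $\{B_1,B_2\}$. The pairing is then read off from the multideck of the minor subtree next to $A_1$, which is what Lemma \ref{recover_from_pendant_subtrees} does.

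The difference is $n=9$. The paper does not treat this case separately. It applies Lemma \ref{recover_from_pendant_subtrees} after noting only that $T_A$ has more than $4$ leaves. However, that lemma's argument actually reconstructs the minor subtree paired with $A_1$ from its one-leaf-deleted multideck. At $n=9$ that subtree has $4$ leaves, so the argument gives nothing. Your Step 3 is therefore a genuine addition rather than a detour, and you have located the obstacle correctly.

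You left the last step as a ``finite check''. It is one line, and you should write it out.

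Since $\{B_1,B_2\}$ consists of the $4$-leaf caterpillar and the balanced $4$-leaf tree, your first bullet never occurs. The cleanest count is of the $8$-leaf tree whose two pendant subtrees are both $B_1$.
\begin{itemize}
\item It cannot arise from deleting a leaf of a minor subtree, because those trees have pendant subtrees with $5$ and $3$ leaves.
\item A tree with pendant subtrees $A_i-x$ and $B_j$ equals it only if $B_j=B_1$ and $A_i-x=B_1$.
\item So under the pairing $\{(A_1,B_1),(A_2,B_2)\}$ its multiplicity is $\mathrm{mult}_{A_1}(B_1)$, and under the swapped pairing it is $\mathrm{mult}_{A_2}(B_1)$.
\end{itemize}
The decks $M_4(A_1)$ and $M_4(A_2)$ each have five members drawn from only two shapes. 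They are distinct by Theorem \ref{reconstruct_from_n-1_multideck}, since $A_1\neq A_2$ on $5\neq 4$ leaves. Hence they must differ in the multiplicity of $B_1$, so $P\neq P'$ and the pairing is determined.

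No enumeration of the $5$-leaf shapes is needed, though it confirms the claim: their $4$-decks contain $5$, $4$ and $2$ caterpillars respectively. With this line added, your argument is complete for every odd $n\ge 9$.
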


\begin{proof}
    By Theorem \ref{unlabelled_buckets_rb_equal}, the root balances of $T_1$ and $T_2$ are recoverable.
    
    Let $T_A$ and $T_C$ be the major subtrees of $T_1$ and $T_2$ respectively. Let $S$ be the multiset of subtrees in $B_{n-1}(\{T_1, T_2\})$ with root balance $k_1 - 1$. These must have been obtained from $T_1$ or $T_2$ by removing a leaf from their minor subtree, thus each has either $T_A$ or $T_C$ as their major subtree. As half will have $T_A$ and half $T_C$, these major subtrees are uniquely determined by $S$, and thus uniquely determined by $B_{n-1}(\{T_1, T_2\})$.
    
    Let $T_B$ and $T_D$ be the minor subtrees of $T_1$ and $T_2$ respectively. Let the subtrees in $B_{n-1}(\{T_1, T_2\})$ with root balance $k_1$ be denoted $S'$. These subtrees are balanced, with one major/minor subtree being $T_B$ or $T_D$, and one major/minor subtree being $T_A$ or $T_C$ with a leaf removed. Consider the multiset $M$ of the major/minor subtrees of trees in $S'$. One quarter of the trees in $M$ are $T_B$, a quarter are $T_D$, a quarter form the $k_1$-multiset of $T_A$, and the final quarter form the $k_1$-multiset of $T_C$. Thus the multiset $M - M_{k_1}(T_A)-M_{k_1}(T_C)$ contains exactly two trees, $T_B$ and $T_D$. As $T_A$ and $T_C$ are uniquely determined by $B_{n-1}(\{T_1, T_2\})$, so too are $T_B$ and $T_D$.
    
    If $T_A = T_C$, then $T_1$ and $T_2$ are the two unique trees with $T_A$ as a major subtree, and $T_B$ and $T_D$ as minor subtrees, and so $T_1$ and $T_2$ are recoverable. If $T_A \neq T_C$, then $T_A \notin \{T_B, T_C, T_D\}$, none of $T_B$, $T_C$ and $T_D$ have one more leaf than $T_A$, and as $n>7$, $T_A$ has more than $4$ leaves. Thus by Lemma \ref{recover_from_pendant_subtrees}, $T_1$ and $T_2$ are recoverable.
    
\end{proof}

\begin{Theorem}\label{unlabelled_buckets_theorem4}
    Let $T_1$ and $T_2$ be unlabelled phylogenetic trees on $n>5$ leaves with root balance $k_1$ and $k_2$ respectively. If $k_1 = 1$ and $k_2 > 2$ then $T_1$ and $T_2$ are recoverable from their $(n-1)$-bucket.
\end{Theorem}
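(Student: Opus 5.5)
Since $k_2>2$, I will try to recover $T_2$ directly from the subtrees with root balance $k_2$ or $k_2-1$, and then invoke Lemma \ref{unlabelled_buckets_one_from_other}. By Theorem \ref{unlabelled_buckets_rb_equal} we know $\{1,k_2\}$, and since $k_2\neq 1$ we know which tree has which root balance. By Lemma \ref{unlabelled_buckets_counting}, $T_1$ contributes $n-1$ subtrees of root balance $1$ (recall $k_2-1\ge 2$). It contributes exactly one further subtree, namely its major subtree $A_1$ on $n-1$ leaves, whose root balance $r$ is unknown. Every other subtree of root balance at least $2$ comes from $T_2$.

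\textbf{Case $k_2<\frac n2$.} $T_2$ contributes $n-k_2\ge 3$ subtrees of root balance $k_2$, obtained by deleting a leaf of its major subtree $A_2$, and $k_2\ge 3$ subtrees of root balance $k_2-1$, obtained by deleting a leaf of its minor subtree $B_2$. If $A_2$ does not have $4$ leaves, I apply Lemma \ref{recover_from_root_balance_class} to the class of balance $k_2$. The one stray tree $A_1$ may also have balance $k_2$, so I first remove it. The $n-k_2$ genuine trees all share the pendant subtree $B_2$. The stray tree $A_1$ can share it only if $A_1$ has $B_2$ as its minor subtree, and I will identify and discard it by counting: there are $n-k_2$ trees from $T_2$ and at most one extra. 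When the class has $n-k_2+1$ members, I decide which one is extra by checking which choice leaves a multiset forming a valid deck. Symmetrically, if $A_2$ has $4$ leaves then $n-k_2=4$, so $k_2\le 3$ and hence $k_2=3$ and $n=7$. Then $B_2$ has $3$ leaves and I use the balance-$(k_2-1)$ class instead. That class contains $3$ trees from $T_2$ plus possibly $A_1$, and Lemma \ref{recover_from_root_balance_class} applies because $B_2$ does not have $4$ leaves.

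\textbf{Case $k_2=\frac n2$.} All $n$ subtrees of $T_2$ have balance $k_2-1$, and $A_1$ may add one more. Among these $n$ or $n+1$ trees, the pendant subtrees of size $k_2$ are exactly $A_2$ and $B_2$, each appearing $n/2$ times, plus possibly one pendant subtree of $A_1$. Majority counting therefore identifies $\{A_2,B_2\}$, because $n/2\ge 3$, and $T_2$ is the tree with these as its two root subtrees.

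\textbf{Case $n$ odd, $k_2=\lfloor n/2\rfloor$.} Here the balance-$(k_2-1)$ class isolates the major subtree $A_2$ by majority, and I argue as in Theorem \ref{unlabelled_buckets_theorem3.5}.

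\textbf{Main obstacle.} The hard step is disambiguating the single stray tree $A_1$ in the small cases, especially $n=6,7$, where the class sizes are small (e.g. $n-k_2=3$) and majority arguments are tight. My first attempt is a cleaner alternative: recover $A_1$ itself. The balance-$1$ trees of $T_1$ are exactly the $1$-leaf-plus-$(A_1$ minus a leaf$)$ trees, so removing the leaf pendant from the $n-1$ balance-$1$ trees gives the $(n-2)$-multideck of $A_1$. These balance-$1$ trees can be recognised because $T_2$ contributes none. Since $n-1\neq 4$ when $n>5$, Theorem \ref{reconstruct_from_n-1_multideck} recovers $A_1$, and $T_1$ is $A_1$ with a leaf attached at the root. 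Lemma \ref{unlabelled_buckets_one_from_other} then finishes the proof. If this works, it bypasses the casework above entirely, and I would use it as the actual proof.
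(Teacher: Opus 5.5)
Your final argument (the ``cleaner alternative'') is the paper's route, but as written it has a gap. It assumes the root-balance-$1$ class $S$ of the bucket has exactly $n-1$ members.

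You correctly note that $S$ comes only from $T_1$, because $k_2>2$. However, $S$ consists of the $n-1$ trees obtained by deleting a leaf of $A_1$, \emph{together with} $A_1$ itself whenever $A_1$ has root balance $1$. This happens, for example, whenever $T_1$ is a caterpillar, and you observed yourself that the root balance of $A_1$ is unknown. In that case $|S|=n$. Stripping the pendant leaf from every tree in $S$ then gives $M_{n-2}(A_1)$ plus an extra copy of the major subtree of $A_1$. Your argument gives no way to identify and discard this surplus element, so the multiset you feed to Theorem \ref{reconstruct_from_n-1_multideck} is not $M_{n-2}(A_1)$.

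The repair is exactly the case split the paper makes: count $S$.
\begin{itemize}
\item If $|S|=n$, then $S$ is all of $M_{n-1}(T_1)$, and Theorem \ref{reconstruct_from_n-1_multideck} recovers $T_1$ directly, since $n\neq 4$.
\item If $|S|=n-1$, then $A_1\notin S$, and your stripping argument applies verbatim, using $n-1\neq 4$.
\end{itemize}
Lemma \ref{unlabelled_buckets_one_from_other} then finishes the proof.

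With this fix, your earlier casework on $k_2$ is unnecessary and can be dropped. Its step of discarding the stray tree by ``checking which choice leaves a valid deck'' was not justified in any case, since you gave no reason why only one choice could yield a valid deck.
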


\begin{proof}
        We can immediately recover the root balance by Theorem \ref{unlabelled_buckets_rb_equal}, so we know $k_1=1$ and $k_2>2$.

        Let the subtrees in $B_{n-1}(\{T_1, T_2\})$ with root balance $1$ be denoted $S$. By Lemma \ref{unlabelled_buckets_counting}, either $n$ or $n-1$ subtrees in $S$ must be subtrees of $T_1$, and none can be subtrees of $T_2$. Thus $|S| = n$ or $|S| = n-1$.

        If $|S| = n$, then $S$ is the $(n-1)$-multideck of $T_1$, as it contains all $n$ subtrees of $T_1$ with $n-1$ leaves. As $n>4$, by Theorem \ref{reconstruct_from_n-1_multideck}, this multideck uniquely determines $T_1$, so $T_1$ is recoverable. Then by Lemma \ref{unlabelled_buckets_one_from_other}, $T_1$ and $T_2$ are recoverable from their $(n-1)$-bucket.

        Now suppose $|S| = n - 1$. Then each subtree in $S$ must have been obtained from $T_1$ by removing one of the $n-1$ leaves in its major subtree. Thus if we consider the multiset of major subtrees of subtrees in $S$, they form the $(n-2)$-multideck of the major subtree of $T_1$. But $n>5$, so $n-1 > 4$. Thus by Theorem \ref{reconstruct_from_n-1_multideck} this multideck uniquely determines the major subtree of $T_1$. But as the minor subtree of $T_1$ must be a single leaf, this also uniquely determines $T_1$, so $T_1$ is recoverable. Then by Lemma \ref{unlabelled_buckets_one_from_other}, $T_1$ and $T_2$ are recoverable from their $(n-1)$-bucket.
\end{proof}

The next case is made surprisingly simple by considering the pendant depth statistics introduced in Section \ref{s:general}.

\begin{Theorem}\label{unlabelled_buckets_theorem5}
    Let $T_1$ and $T_2$ be unlabelled phylogenetic trees on $n>4$ leaves with root balance $k_1$ and $k_2$ respectively. If $k_1 = k_2 = 1$ then $T_1$ and $T_2$ are recoverable from their $(n-1)$-bucket.
\end{Theorem}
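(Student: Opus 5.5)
The plan is to use the pendant depth structure from Lemma \ref{pendant_depth_lemma} to recover one input tree directly from the bucket, and then finish with Lemma \ref{unlabelled_buckets_one_from_other}. We may use that lemma because $n>4$ means $n\neq 4$. Since $k_1=k_2=1$, each $T_i$ consists of a root with one leaf child together with a major subtree $A_i$ on $n-1$ leaves. Let $p_i\ge 1$ be the pendant depth of $T_i$. If $T_i$ is not a caterpillar, let $H_i$ be its highest non-pendant subtree, so that $T_i$ is a pendant path of $p_i$ vertices with $H_i$ hanging below it.

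The key structural observation comes first. Delete the leaf child of any one of the $p_i$ vertices on the pendant path and suppress that vertex. Whichever vertex is chosen, the resulting tree is a pendant path of $p_i-1$ vertices with $H_i$ below it. These $p_i$ subtrees are therefore all isomorphic to a single tree $U_i$, which has pendant depth $p_i-1$. Moreover, $T_i$ is recovered from $U_i$ by attaching a new root whose other child is a leaf. Note that one of these $p_i$ subtrees is $A_i$ itself, obtained by deleting the root's leaf child.

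By Lemma \ref{pendant_depth_lemma}, every other $(n-1)$-leaf subtree of $T_i$ is obtained by deleting a leaf inside $H_i$. Each such subtree keeps all $p_i$ pendant vertices and so has pendant depth at least $p_i$. In the caterpillar case, all $n$ subtrees are the $(n-1)$-leaf caterpillar, which has depth $p_i-1=n-2$, and the same description holds.

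Let $q$ be the minimal pendant depth among the trees in $B_{n-1}(\{T_1,T_2\})$. By the above, $q=\min(p_1,p_2)-1$. Every bucket tree of depth $q$ is a copy of $U_i$ for some $i$ with $p_i=q+1$. Suppose without loss of generality that $p_1=q+1$. We then split into cases.
\begin{itemize}
\item If $p_2>p_1$, all depth-$q$ trees are copies of $U_1$.
\item If $p_2=p_1$, the depth-$q$ trees consist of $p_1$ copies of $U_1$ and $p_1$ copies of $U_2$. In either case, any depth-$q$ tree $U$ in the bucket equals some $U_i$.
\end{itemize}
Hence the tree obtained from $U$ by adjoining a root with a leaf child is one of the input trees. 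This shows that an input tree is recoverable, and Lemma \ref{unlabelled_buckets_one_from_other} then gives recoverability of both $T_1$ and $T_2$.

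The step needing the most care is the claim that only pendant-path deletions can produce the minimal pendant depth. This is where Lemma \ref{pendant_depth_lemma} is used: deletions within $H_i$ never decrease the pendant depth below $p_i$. We must also check the boundary cases separately. The first is $p_i=1$, where $U_i=A_i$ has pendant depth $0$ and the argument goes through unchanged. The second is when $H_i$ is a cherry or contains cherries, which can raise the depth of some subtrees but never lower it. Neither case affects the identification of the minimum.
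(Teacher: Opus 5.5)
Your core argument is the same as the paper's. A bucket tree of minimum pendant depth must arise by deleting a pendant-path leaf of an input tree (Lemma \ref{pendant_depth_lemma}). Attaching a new root with a leaf child then gives back that input tree, and Lemma \ref{unlabelled_buckets_one_from_other} finishes. Your structural checks are correct and more explicit than the paper's: all pendant-path deletions of $T_i$ give the same tree $U_i$, deletions inside $H_i$ keep depth at least $p_i$, and the caterpillar case behaves the same way.

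What is missing is the step the paper's proof opens with. Recoverability is a statement about \emph{every} pair $T_3,T_4$ with $B_{n-1}(\{T_3,T_4\})=B_{n-1}(\{T_1,T_2\})$. So you must show that the tree $V$, obtained from a minimum-depth bucket tree $U$ by adjoining a root with a leaf child, lies in $\{T_3,T_4\}$. Your argument only shows $V\in\{T_1,T_2\}$, and it uses the hypothesis $k_1=k_2=1$ essentially. To rerun it for $\{T_3,T_4\}$ you need $T_3$ and $T_4$ to have root balance $1$ as well, and nothing in your proposal gives this. The paper gets it from Theorem \ref{unlabelled_buckets_rb_equal}, recoverability of root balances for $n>4$.

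You can also see it directly from Lemma \ref{unlabelled_buckets_counting}. The bucket contains at least $2n-2$ trees of root balance $1$. A tree on $n>4$ leaves with root balance at least $2$ has at most $2$ such subtrees. So a competing pair containing such a tree yields at most $n+2<2n-2$ of them. Adding either argument completes your proof. (Minor point: in the non-caterpillar case $H_i$ has root balance at least $2$, so it is never a cherry and that boundary case is vacuous.)
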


\begin{proof}
    We can immediately recover the root balance by Theorem \ref{unlabelled_buckets_rb_equal}. Therefore we immediately know that any input trees must have $k_1=k_2=1$.

    As $k_1=k_2=1$, both input trees must have a non-zero pendant depth, and hence must have multiplicity in the $(n-1)$-multideck as described by Lemma \ref{pendant_depth_lemma}.

    Consider some subtree in the $(n-1)$-multideck with minimum pendant depth. Then this subtree must have been obtained by removing a leaf joined the pendant path one of the input trees by Lemma \ref{pendant_depth_lemma}. Thus we can reconstruct an input tree by attaching a leaf at the root (resulting in an isomorphic tree), and then by Theorem \ref{unlabelled_buckets_one_from_other} recover the second input tree.
    
\end{proof}

In this theorem we consider our final case, where our two input trees have root balances $1$ and $2$. In this theorem we will primarily consider the major subtree of our root balance $1$ tree, with 3 different strategies for recovering our original trees depending on whether this major subtree has root balance $1$, $2$, or greater than $2$. In this theorem we assume that our trees have more than $7$ leaves. We will see in Theorem \ref{unlabelled_buckets_large_counterexamples} that unrecoverable pairs of trees with root balances $1$ and $2$ exist with $6$ and $7$ leaves. The strategy for recovering the original trees relies on recovering the major subtree of the major subtree of the tree with root balance $1$. If the major subtree of our root balance $1$ tree has root balance $1$ then its major subtree will have $4$ leaves when our trees have $6$, and similarly if this major subtree has root balance $2$ then its major subtree will have $4$ leaves when our trees have $7$. When this subtree has $4$ leaves we cannot recover it from its $3$-multideck, which in turn can mean that we cannot recover our original pair of trees (again due to Theorem \ref{reconstruct_from_n-1_multideck}).

\begin{Theorem}\label{unlabelled_buckets_theorem6}
    Let $T_1$ and $T_2$ be unlabelled phylogenetic trees on $n>7$ leaves with root balance $k_1$ and $k_2$ respectively. If $k_1 = 1$ and $k_2 = 2$ then $T_1$ and $T_2$ are recoverable from their $(n-1)$-bucket.
\end{Theorem}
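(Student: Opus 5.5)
Throughout, write $T_1 = (A, \ell)$, where $\ell$ is a single leaf and $A$ is the major subtree with $n-1$ leaves, and write $T_2 = (C, D)$, where $D$ has $2$ leaves (so $D$ is a cherry) and $C$ has $n-2$ leaves. The plan is to recover $T_1$ and then apply Lemma \ref{unlabelled_buckets_one_from_other}. By Theorem \ref{unlabelled_buckets_rb_equal} the root balances $\{1,2\}$ are known. We then count root balances in the bucket using Lemma \ref{unlabelled_buckets_counting}.

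\begin{itemize}
\item $T_2$ contributes $n-2$ subtrees of root balance $2$, all of the form $(C-x, D)$, and $2$ subtrees of root balance $1$, both equal to $(C,\text{leaf})$.
\item $T_1$ contributes $n-1$ subtrees $(A-x,\ell)$ of root balance $1$, and the single tree $A$, of arbitrary root balance.
\end{itemize}

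The difficulty is that the two root-balance-$1$ trees coming from $T_2$ contaminate the class $S$ of root-balance-$1$ trees in the bucket. In the generic case $S$ has $n+1$ elements, and its major subtrees are $M_{n-2}(A)$ together with two copies of $C$. Since $n-1>4$, knowing $A$ would suffice, and we split into cases according to the root balance $r$ of $A$.

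\textbf{Case $r \ge 3$.} The trees in $M_{n-2}(A)$ have root balance at least $2$, while $C$ may have any root balance. I would try to identify $C$ directly from the root-balance-$2$ class $S'$ of the bucket. Every tree in $S'$ has a cherry as its minor subtree. When $n-2>4$, the major subtrees appearing in $S'$ include $M_{n-3}(C)$, so $C$ is determined by Theorem \ref{reconstruct_from_n-1_multideck}, and hence $T_2=(C,\text{cherry})$ is recovered. The contamination here comes from $A$ itself, if $A$ has root balance $2$, and possibly from $A-x$ when $A$ has root balance $3$, since removing a leaf from its minor subtree gives root balance $2$. Once $C$ is known, though, we can simply delete the two copies of $(C,\text{leaf})$ from $S$. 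What remains is exactly the $n-1$ trees $(A-x,\ell)$, whose major subtrees form $M_{n-2}(A)$, and this recovers $A$ and hence $T_1$. So the cleanest overall route may be to recover $C$ first in every case.

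\textbf{Recovering $C$ first.} The bucket contains $n-2$ trees of the form $(C-x,D)$. The only other root-balance-$2$ trees are those coming from $T_1$, namely $A$ (at most one tree), and only when $A$ has root balance $2$. Suppose $A=(E,D')$ with $D'$ a cherry. Then the root-balance-$1$ trees $(A-x,\ell)$ do not count toward $S'$, and $A$ is the only extra tree. So $S'$ consists of $n-2$ or $n-1$ trees. Its cardinality tells us which situation we are in, and in the first situation $C$ follows from Theorem \ref{reconstruct_from_n-1_multideck}, since $n-2 \ge 6 > 4$.

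\textbf{The hard case: $|S'| = n-1$.} This is the case where $A$ has root balance $2$, and it is the main obstacle. Here $S'$ is the multiset $\{(C-x,\text{cherry})\}_x \cup \{(E,\text{cherry})\}$, where $E$ has $n-3$ leaves, as do all the $C-x$. Stripping the cherries yields $M_{n-3}(C)$ plus one extra tree $E$. We must decide which element is extraneous: try each candidate $E$, and check whether the remainder is a valid $(n-3)$-multideck of some tree. This may be ambiguous. To resolve it, I would use the class $S$ as well. With $A=(E,\text{cherry})$, the major subtrees in $S$ are
\begin{itemize}
\item $M_{n-2}(A)$, which consists of $n-3$ copies of $(E-y,\text{cherry})$ and $2$ copies of $(E,\text{leaf})$;
\item two copies of $C$.
\end{itemize}
Thus $E$ appears as a major subtree of a pendant-leaf tree in $S$. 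Combining this constraint with the candidate list should pin down $E$. Consistency across the two classes, together with $n>7$ (so that $E$, $C-x$ and $C$ all have more than $4$ leaves), should force uniqueness. This step would be handled by a careful comparison of multiplicities.

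Once $C$ is known, remove the two copies of $(C,\text{leaf})$ from $S$. The major subtrees of the remaining trees give $M_{n-2}(A)$, which determines $A$ by Theorem \ref{reconstruct_from_n-1_multideck}. This recovers $T_1$, and Lemma \ref{unlabelled_buckets_one_from_other} finishes the proof.
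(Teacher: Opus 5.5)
\textbf{Verdict: genuine gap.} Your argument is correct and complete except in the case where $A$ has root balance $2$, which is the heart of the theorem and is left unproved.

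\textbf{What works.} Every tree $(A-x,\ell)$ has root balance $1$, so your worry in Case $r\ge 3$ about such trees entering $S'$ is unfounded. Hence $|S'|=n-2$ exactly when $A$ does not have root balance $2$. In that situation $S'$ is precisely the class $\{(C-x,D)\}$, and Lemma \ref{recover_from_root_balance_class} gives $T_2$.

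\textbf{The gap.} In the case $|S'|=n-1$, i.e.\ $A=(E,\text{cherry})$, you only say that the constraints ``should pin down $E$'' and ``should force uniqueness''. This case is the only one needing $n>7$ rather than $n>6$, and your two filters cannot settle it on their own. Consider the non-recoverable pair $T_1=$ 7a and $T_2=$ 7d:
\begin{itemize}
\item $T_1=((E,\text{cherry}),\ell)$ with $E$ the $4$-leaf caterpillar;
\item $T_2=(C,\text{cherry})$ with $C$ a balanced $4$-leaf tree plus a pendant leaf.
\end{itemize}
Stripping cherries from $S'$ gives five $4$-leaf caterpillars and one balanced $4$-leaf tree. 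Deleting either shape leaves a valid $4$-multideck: that of $C$, respectively that of the $5$-leaf caterpillar. Moreover, both ``caterpillar plus leaf'' and ``balanced tree plus leaf'' occur as major subtrees in $S$. So both candidates pass both of your tests.

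The same problem arises in general. Whenever $C=(C',\text{leaf})$ has root balance $1$, the wrong candidate $C'$ lies on your list and meets your constraint from $S$. Everything then hinges on the multideck-validity test. You give no argument for that test, and by the example above it can only succeed by using $n>7$ in a way you have not identified.

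\textbf{The missing idea.} Split the root-balance-$1$ class $S$ by the root balance of each tree's major subtree; this isolates a piece coming from $T_1$ alone. In this case $S$ consists of:
\begin{itemize}
\item two copies of $((E,\text{leaf}),\ell)$, whose major subtree has root balance $1$;
\item the $n-3$ trees $((E-y,\text{cherry}),\ell)$, whose major subtree has root balance $2$ because $E-y$ has $n-4\ge 4$ leaves;
\item two copies of $(C,\text{leaf})$, which join the first group iff $C$ has root balance $1$, and the second iff it has root balance $2$.
\end{itemize}
If the first group has exactly two members, $E$ is the major subtree of their major subtree. If it has four, then $C$ has root balance $1$, so the second group comes entirely from $T_1$. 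The major subtrees of the major subtrees of its members form $M_{n-4}(E)$. This determines $E$ by Theorem \ref{reconstruct_from_n-1_multideck} precisely because $E$ has $n-3>4$ leaves, which is where $n>7$ enters. Either way $A=(E,\text{cherry})$, and hence $T_1$, is recovered, and Lemma \ref{unlabelled_buckets_one_from_other} finishes. This is the paper's argument, phrased there via pendant depth at least $2$.
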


\begin{proof}
    By Theorem \ref{unlabelled_buckets_rb_equal}, the root balances of $T_1$ and $T_2$ are recoverable.

    Let the major subtree of $T_1$ be $T_A$. By Lemma \ref{unlabelled_buckets_counting}, $B_{n-1}(\{T_1, T_2\})$ will contain $n + 1$ subtrees with root balance $1$, $n - 2$ subtrees with root balance $2$, and one subtree which could have any possible root balance. But this subtree of variable root balance is $T_A$, and so the root balance of $T_A$ is determined by $B_{n-1}(\{T_1, T_2\})$.
    
    If $T_A$ has root balance greater than 2, then it is the unique subtree in $B_{n-1}(\{T_1, T_2\})$ with that root balance, and so is uniquely determined by $B_{n-1}(\{T_1, T_2\})$. But the minor subtree of $T_1$ is a single leaf. So $T_1$ is the unique tree with $T_A$ as a major subtree, and a single leaf as a minor subtree. Thus $T_1$ is recoverable, and so by Lemma \ref{unlabelled_buckets_one_from_other}, $T_1$ and $T_2$ are recoverable from their $(n-1)$-bucket.
    
    If $T_A$ has root balance $1$ then all subtrees of $T_1$ must have root balance $1$. Thus if we consider $S$, the multiset of subtrees in $B_{n-1}(\{T_1, T_2\})$ with root balance $2$, every subtree in $S$ must be a subtree of $T_2$. 
    
    Let $T_B$ be the major subtree of $T_2$. As each subtree in $S$ must have been obtained from $T_2$ by removing a leaf from $T_2$'s major subtree,     by Lemma \ref{recover_from_root_balance_class}, $T_2$ is recoverable. Thus by Lemma \ref{unlabelled_buckets_one_from_other}, $T_1$ and $T_2$ are recoverable from their $(n-1)$-bucket.

    Now suppose $T_A$ has root balance $2$. Let $S$ denote the set of subtrees in $B_{n-1}(\{T_1, T_2\})$ with pendant depth at least $2$.
    
    Two of the subtrees in $S$ must be obtained from $T_1$ by removing a leaf from the minor subtree of $T_A$. Thus if there are only 2 subtrees in $S$ they are identical, and $T_A$ is the unique subtree obtained by adding a leaf to the minor subtree of the major subtree of one of the subtrees in $S$. Then $T_1$ is the unique tree with a single leaf as a minor subtree and $T_A$ as a major subtree. Thus $T_1$ is recoverable and so by Lemma \ref{unlabelled_buckets_one_from_other}, $T_1$ and $T_2$ are recoverable from their $(n-1)$-bucket.
    
    However, if $|S| \neq 2$, at least one subtree in $S$ must be a subtree of $T_2$. Let $T_B$ be the major subtree of $T_2$. But as the subtrees in $S$ have root balance $1$, the subtrees in $S$ which are subtrees of $T_2$ must be obtained by removing a leaf from the minor subtree of $T_2$, and so must have $T_B$ as their major subtree. But as the subtrees in $S$ have pendant depth at least 2, their major subtree has root balance $1$. Thus $T_B$ has root balance $1$. But this means that there is no $n-1$ leaf subtree of $T_2$ which both has root balance $1$ and has a major subtree with root balance $2$, as any $n-1$ leaf subtree of $T_2$ with root balance $1$ will have $T_B$ as a major subtree, and $T_B$ has root balance $1$. Thus if we consider the multiset $S'$ of subtrees in $B_{n-1}(\{T_1, T_2\})$ with root balance 1, and with a major subtree with root balance $2$, every subtree in $S'$ must be a subtree of $T_1$. But if we consider the major subtrees of the major subtrees of the subtrees in $S'$, they give the $(n-4)$ multideck of the major subtree of $T_A$, which has $n-3$ leaves. As $n > 7$, we know $n-3>4$, and by Theorem \ref{reconstruct_from_n-1_multideck} this multideck uniquely determines the major subtree of $T_A$, say $T_A'$. But $T_A$ has root balance $2$, and there is only a single tree on 2 leaves. Thus $T_A$ is the unique tree with $T_A'$ as a major subtree and the unique tree on two leaves as a minor subtree. But then $T_1$ is the unique tree with $T_A$ as a major subtree and a single leaf as a minor subtree. Thus $T_1$ is recoverable, and by Lemma \ref{unlabelled_buckets_one_from_other}, $T_1$ and $T_2$ are recoverable from their $(n-1)$-bucket.
\end{proof}

Taken together, the theorems in this section so far can be combined to show that all pairs of unlabelled trees with at least nine leaves are recoverable. Together with some simple observations about trees with smaller numbers of leaves, we can show almost all pairs of trees are recoverable, leading to a proof of our main theorem for this section, Theorem \ref{always_recoverable_unlabelled}.

\begin{proof}[Proof of Theorem \ref{always_recoverable_unlabelled}]
    If $n>8$ then Theorems \ref{unlabelled_buckets_theorem1}, \ref{unlabelled_buckets_theorem2}, \ref{unlabelled_buckets_theorem3}, \ref{unlabelled_buckets_theorem3.5}, \ref{unlabelled_buckets_theorem4}, \ref{unlabelled_buckets_theorem5}, and \ref{unlabelled_buckets_theorem6} show this holds for all possible cases for the root balances of $T_1$ and $T_2$.
    
    If $n < 4$ there is only a single tree on $n$ leaves, and so $T_1=T_2=T_3=T_4$, thus $\{T_1, T_2\}=\{T_3, T_4\}$.

    Finally, we consider the $5$-leaf unlabelled trees, depicted in Figure \ref{fig:five_leaf_unlabelled_trees_counterexamples}. Observe that there are only two unlabelled $4$-leaf trees - the caterpillar tree $C$ and the fully balanced tree $F$. Hence the $4$-multidecks of each tree are determined entirely by their counts of each.  Representing the $4$-multideck of a tree by the tuple $(c,f)$, where $c$ is the number of copies of $C$ and $f$ is the number of copies of $F$, for $5a$ we have $(5,0)$, for $5b$ we have $(1,4)$ and for $5c$ we have $(2,3)$.

    We can then easily confirm by hand that the pre-image of each tuple is unique, or doesn't exist, as seen in Table \ref{f:tuples}.
\end{proof}

\begin{table}
    \centering
    \begin{tabular}{c c}
\textbf{Tuple} & \textbf{Tree Pair} \\
\hline
$(0,10)$ & -\\
$(1,9)$ & -\\
$(2,8)$ & 5b, 5b\\
$(3,7)$ & 5b, 5c\\
$(4,6)$ & 5c, 5c\\
$(5,5)$ & -\\
$(6,4)$ & 5a, 5b\\
$(7,3)$ & 5a, 5c\\
$(8,2)$ & -\\
$(9,1)$ & -\\
$(10,0)$ & 5a, 5a\\
\end{tabular}
\caption{The pairs of trees that generate each tuple. A tuple with no pre-image is denoted by `-'.}
\label{f:tuples}
\end{table}

We now focus on when pairs of unlabelled trees are not recoverable. It is straightforward to observe that recovery is never possible with unlabelled trees with four leaves.

\begin{Theorem}
    Let $T_1$ and $T_2$ be unlabelled rooted binary phylogenetic trees on $n=4$ leaves. Then $T_1$ and $T_2$ are not recoverable from their $(n-1)$-bucket.
\end{Theorem}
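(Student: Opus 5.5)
The plan is a direct counting argument on the bucket, followed by exhibiting a second preimage in each case. On $n=4$ leaves there are exactly two unlabelled trees: the caterpillar $C$ and the fully balanced tree $F$. As noted after Theorem \ref{reconstruct_from_n-1_multideck}, there is only one unlabelled tree on $3$ leaves, call it $R$. Deleting any one of the four leaves from either $C$ or $F$ gives $R$. So $M_3(C)=M_3(F)=\{R,R,R,R\}$, and for every pair $\{T_1,T_2\}$ of $4$-leaf trees the bucket is the same:
\[
B_3(\{T_1,T_2\}) = \{R,R,R,R,R,R,R,R\},
\]
the multiset consisting of eight copies of $R$.

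To show non-recoverability, we produce, for each input pair, a different pair with the same bucket. There are three unordered pairs in total: $\{C,C\}$, $\{C,F\}$ and $\{F,F\}$.
\begin{itemize}
\item For $\{T_1,T_2\}=\{C,C\}$, take $\{T_3,T_4\}=\{F,F\}$.
\item For $\{F,F\}$, take $\{C,C\}$.
\item For $\{C,F\}$, take $\{C,C\}$.
\end{itemize}
In each case both pairs have the bucket of eight copies of $R$ computed above, but the pairs differ as sets, since $C\neq F$. By the definition of recoverability, $T_1$ and $T_2$ are therefore not recoverable from their $(n-1)$-bucket.

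Nothing here is a real obstacle. The one point to check is that $C$ and $F$ are genuinely non-isomorphic as unlabelled trees, which is immediate because $C$ has one cherry and $F$ has two.
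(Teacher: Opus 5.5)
Your proof is correct and follows the paper's argument: every $4$-leaf unlabelled tree has $3$-multideck consisting of four copies of the unique $3$-leaf tree, so every pair has the same bucket of eight copies. You also explicitly name a distinct second preimage for each of the three unordered pairs, a step the paper leaves implicit.
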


\begin{proof}
    There is a unique unlabelled binary rooted phylogenetic tree on $n=3$ leaves, which we shall call $T'$. Thus, for any pair of trees $T_1$ and $T_2$, their $(n-1)$-bucket is simply eight copies of $T'$, and so no pair of trees on four leaves is recoverable. 
\end{proof}

For pairs of unlabelled trees with six, seven or eight leaves, however, the characterisation is more nuanced due to some pathological cases. We directly characterise when recovery is impossible for pairs of unlabelled trees with these numbers of leaves.

\begin{Theorem}\label{unlabelled_buckets_large_counterexamples}
    Let $T_1$ and $T_2$ be unlabelled rooted binary phylogenetic trees on $n=6,7$ or $8$ leaves. Then $T_1$ and $T_2$ are recoverable from their $(n-1)$-bucket, unless they are one of the following cases:

    \begin{enumerate}
        \item $n=6$ and $\{T_1,T_2\}$ is either the pair 6a and 6e, or the pair 6b and 6d in Figure \ref{fig:six_leaf_unlabelled_trees_counterexamples}; or
        \item $n=7$ and $\{T_1,T_2\}$ is either the pair 7a and 7d, or the pair 7b and 7c in Figure \ref{fig:seven_leaf_unlabelled_trees_counterexamples}; or
        \item $n=8$ and $\{T_1,T_2\}$ is either the pair 8a and 8c, or 2 copies of 8b in Figure \ref{fig:eight_leaf_unlabelled_trees_counterexamples}.
    \end{enumerate}
\end{Theorem}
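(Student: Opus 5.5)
The plan is to combine the case analysis already developed in this section with a finite check of the small cases. The table of subcases shows which theorems already cover $n=6,7,8$, so only a few root-balance configurations need direct treatment.

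First, recover the root balances $\{k_1,k_2\}$ by Theorem \ref{unlabelled_buckets_rb_equal}, which requires only $n>4$. Then observe which earlier theorems already apply for these $n$:
\begin{itemize}
\item Theorem \ref{unlabelled_buckets_theorem1} ($1<k_1<k_2$) and Theorem \ref{unlabelled_buckets_theorem2} ($1<k_1=k_2<\lfloor n/2\rfloor$) need only $n>4$.
\item Theorem \ref{unlabelled_buckets_theorem4} ($k_1=1$, $k_2>2$) needs only $n>5$.
\item Theorem \ref{unlabelled_buckets_theorem5} ($k_1=k_2=1$) needs only $n>4$.
\item Theorem \ref{unlabelled_buckets_theorem3.5} already covers $n=7$ odd with $k_1=k_2=3$, since it requires only $n>7$? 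No: it requires $n>7$, so $n=7$ with $k_1=k_2=3$ remains open.
\end{itemize}
The residual configurations are therefore:
\begin{itemize}
\item $k_1=1$, $k_2=2$ for $n=6,7$.
\item $k_1=k_2=3$ for $n=6,7$.
\item $k_1=k_2=4$ for $n=8$.
\item $k_1=1$, $k_2=2$ for $n=8$, unless the argument of Theorem \ref{unlabelled_buckets_theorem6} extends.
\end{itemize}

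For $n=8$ with $k_1=1$, $k_2=2$, I would rerun the proof of Theorem \ref{unlabelled_buckets_theorem6}. The only place the bound $n>7$ enters is the need for $n-3>4$ in the root-balance-$2$ subcase. Here $n-3=5$, so the argument in fact holds at $n=8$. This leaves only balanced pairs $k_1=k_2=4$ at $n=8$.

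In that case, the proof of Theorem \ref{unlabelled_buckets_theorem3} still recovers the multiset $\{T_A,T_B,T_C,T_D\}$ of four 4-leaf subtrees, each being either the caterpillar $C$ or the fully balanced tree $F$. It fails only in pairing them, because 4-leaf trees are not determined by their $3$-multidecks. I would enumerate the possible multisets:
\begin{itemize}
\item If all four subtrees are equal, the pair is forced.
\item The distribution $\{C,C,C,F\}$ or $\{C,F,F,F\}$ admits only one pairing up to order.
\item The distribution $\{C,C,F,F\}$ admits two pairings: $(CC),(FF)$ versus $(CF),(CF)$.
\end{itemize}
For the last distribution I would compare the $7$-buckets directly. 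The 7-leaf subtrees are (3-leaf tree, 4-leaf tree) pairs, and the resulting counts should coincide, yielding the exceptional pair 8a, 8c versus two copies of 8b.

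For $n=6,7$ the remaining cases involve few trees: there are 6 unlabelled trees on 6 leaves and 11 on 7 leaves. I would restrict to trees with root balances in $\{1,2,3\}$ as appropriate and compute each $(n-1)$-multideck as a count vector over the 3 shapes on 5 leaves (respectively 6 shapes on 6 leaves). I would then check pairwise sums for collisions, just as Table \ref{f:tuples} does for $n=5$.

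The collisions found should be exactly 6a+6e = 6b+6d and 7a+7d = 7b+7c. For each collision I would verify that the corresponding trees do have the claimed shapes, and that every other sum is unique.

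The main obstacle is organising this finite computation so that it is clearly exhaustive, especially at $n=7$. For that case, I would reduce the work by using the root-balance multiset: only pairs with the same $\{k_1,k_2\}$ can collide. I would also use the recovered major subtree $T_A$ from the argument of Theorem \ref{unlabelled_buckets_theorem6} to prune the candidate pairs before tabulating multidecks.
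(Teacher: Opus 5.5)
Your proposal is correct and follows essentially the paper's route. Both recover the root balances, use the earlier theorems to reduce to the configurations $\{1,2\}$ and $\{3,3\}$ at $n=6,7$ and $\{4,4\}$ at $n=8$, and then settle these by finite multideck comparisons. The paper prunes the $n=7$ case by counting root-balance-$2$ subtrees and handles $\{3,3\}$ via Lemma \ref{unlabelled_buckets_one_from_other} rather than tabulating every pair, but your exhaustive comparison yields the same collisions. One small correction: Theorem \ref{unlabelled_buckets_theorem6} is stated for $n>7$, so it already covers $n=8$ directly, and there is no need to re-examine its proof.
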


\begin{proof}
    Let $T_1$, $T_2$, $T_3$ and $T_4$ be rooted binary phylogenetic trees on $n$ leaves such that $B_{n-1}\{T_1, T_2\}) = B_{n-1}\{T_3, T_4\})$, and $\{T_1, T_2\} \neq \{T_3, T_4\}$.
    
    Suppose $n=6$. Then by Theorems \ref{unlabelled_buckets_theorem1}, \ref{unlabelled_buckets_theorem2}, \ref{unlabelled_buckets_theorem4}, and \ref{unlabelled_buckets_theorem5}, the sets of root balances of $T_1$ and $T_2$, and $T_3$ and $T_4$ are either $\{3, 3\}$ or $\{1, 2\}$. As there is only a single tree on 6 leaves with root balance $3$, contradicting that the trees are distinct, this set must be $\{1, 2\}$. Suppose without loss of generality that $T_1$ and $T_3$ have root balance $1$, and $T_2$ and $T_4$ have root balance $2$. There are only 2 distinct trees on 6 leaves with root balance $2$, so say $T_2$ is Tree 6e shown in Figure \ref{fig:six_leaf_unlabelled_trees_counterexamples}, and $T_4$ is Tree 6d shown in Figure \ref{fig:six_leaf_unlabelled_trees_counterexamples}. If we examine Table \ref{table:six_leaf_subtrees}, we see that the only possible choices to make the buckets equal are if $T_1$ is Tree 6a shown in Figure \ref{fig:six_leaf_unlabelled_trees_counterexamples}, and $T_3$ is Tree 6b shown in Figure \ref{fig:six_leaf_unlabelled_trees_counterexamples}.
    Suppose $n=7$. Then by Theorems \ref{unlabelled_buckets_theorem1}, \ref{unlabelled_buckets_theorem2}, \ref{unlabelled_buckets_theorem4}, and \ref{unlabelled_buckets_theorem5}, the sets of root balances of $T_1$ and $T_2$, and $T_3$ and $T_4$ are either $\{3, 3\}$ or $\{1, 2\}$. There are only 2 distinct trees on 7 leaves with root balance $3$, say $A$ and $B$. Then by Lemma \ref{unlabelled_buckets_one_from_other}, for $\{T_1, T_2\} \neq \{T_3, T_4\}$ to hold, $T_1 = T_2 = A$, and $T_3 = T_4 = B$. But this contradicts Theorem \ref{reconstruct_from_n-1_multideck}, so sets of root balances of $T_1$ and $T_2$, and $T_3$ and $T_4$ must be $\{1, 2\}$. Now consider the trees in $B_{n-1}\{T_1, T_2\})$ with root balance $2$. We expect $7-2=5$ of these to be from the root balance $2$ tree, so if there are exactly 5, all come from that tree. If we consider the major subtrees of these, they form the $4$-multideck of the major subtree of the root balance $2$ tree, and thus we can recover that tree, and by \ref{unlabelled_buckets_one_from_other}, $T_1$ and $T_2$ are recoverable. Thus we must have an extra root balance $2$ subtree, which can only occur when the root balance $1$ subtree has a root balance $2$ major subtree. The trees on 7 leaves with root balance $1$ and a major subtree with root balance $2$ are Trees 7a and 7b shown in Figure \ref{fig:seven_leaf_unlabelled_trees_counterexamples}. The trees on 7 leaves with root balance $2$ are Trees 7c, 7d, and 7e shown in Figure \ref{fig:seven_leaf_unlabelled_trees_counterexamples}. The 6-multidecks of these trees are shown in Table \ref{table:seven_leaf_subtrees}. In this table we can see that no combination of 2 trees that do not include 7e will have 3 copies of 6e in their 6-bucket, and so if either $T_1$ or $T_2$ is 7e, $T_1$ and $T_2$ are recoverable from their 6-bucket. If we examine the multidecks of the remaining trees, shown in Table \ref{table:seven_leaf_subtrees}, we find that the only combination for which the pairs of trees have identical buckets is if $\{T_1, T_2\}$ are Trees 7a and 7d, and $\{T_3, T_4\}$ are Trees 7b and 7c, as shown in Figure \ref{fig:seven_leaf_unlabelled_trees_counterexamples}.
    Suppose $n=8$. Then by Theorems \ref{unlabelled_buckets_theorem1}, \ref{unlabelled_buckets_theorem2}, \ref{unlabelled_buckets_theorem4}, \ref{unlabelled_buckets_theorem5}, and \ref{unlabelled_buckets_theorem6}, the sets of root balances of $T_1$ and $T_2$, and $T_3$ and $T_4$ are $\{4, 4\}$. The 3 distinct rooted trees on $8$ leaves are Tree 8a, 8b and 8c shown in Figure \ref{fig:eight_leaf_unlabelled_trees_counterexamples}. The $7$-multidecks of these trees are 8 copies of Tree 7f shown in Figure \ref{fig:seven_leaf_unlabelled_trees_counterexamples} for 8a, 8 copies of Tree 7g shown in Figure \ref{fig:seven_leaf_unlabelled_trees_counterexamples} for 8c, and 4 each of 7f and 7g for 8b. The only possible assignment of these trees to $T_1$, $T_2$, $T_3$ and $T_4$ for which $B_{n-1}\{T_1, T_2\}) = B_{n-1}\{T_3, T_4\})$, and $\{T_1, T_2\} \neq \{T_3, T_4\}$ is if $T_1$ is 8a and $T_2$ is 8c; and $T_3$ and $T_4$ are both 8b (or vice versa). Thus $T_1$ and $T_2$ are recoverable unless $\{T_1,T_2\}$ is either the pair 8a and 8c, or 2 copies of 8b.
\end{proof}

\begin{table}
    \begin{center}
        \begin{tabular}{c||c|c|c|}
             & 5a & 5b & 5c \\
             \hline \hline
             6a & 6 & & \\
             \hline
             6b & 4 & 2 & \\
             \hline
             6c & 2 & 3 & 1 \\
             \hline \hline
             6d & 2 & & 4 \\
             \hline
             6e & & 2 & 4 \\
             \hline
        \end{tabular}
    \end{center}
    \caption{The 5-multidecks of the 6 leaf trees with root balance other than $3$. Trees in the upper section of the table have root balance $1$, and those below have root balance $2$. Labels refer to trees shown in Figures \ref{fig:five_leaf_unlabelled_trees_counterexamples} and \ref{fig:six_leaf_unlabelled_trees_counterexamples}.}
    \label{table:six_leaf_subtrees}
\end{table}

\begin{table}
    \begin{center}
        \begin{tabular}{c||c|c|c|c|c|}
             & 6a & 6b & 6c & 6d & 6e \\
             \hline \hline
             7a & 2 & & 4 & 1 & \\
             \hline
             7b & & 2 & 4 & & 1 \\
             \hline \hline
             7c & 2 & & & 5 & \\
             \hline
             7d & & 2 & & 4 & 1 \\
             \hline
             7e & & & 2 & 2 & 3 \\
             \hline
        \end{tabular}
    \end{center}
    \caption{The 6-multidecks of the 7 leaf trees with root balance $1$ and a major subtree with root balance $2$, and those with root balance $2$. Trees in the upper section of the table have root balance $1$, and those below have root balance $2$. Labels refer to trees shown in Figures \ref{fig:six_leaf_unlabelled_trees_counterexamples} and \ref{fig:seven_leaf_unlabelled_trees_counterexamples}.}
    \label{table:seven_leaf_subtrees}
\end{table}

\begin{figure}[ht]
    \centering
    \begin{subfigure}[b]{0.3\textwidth}
        \centering
        \includegraphics[width=\textwidth]{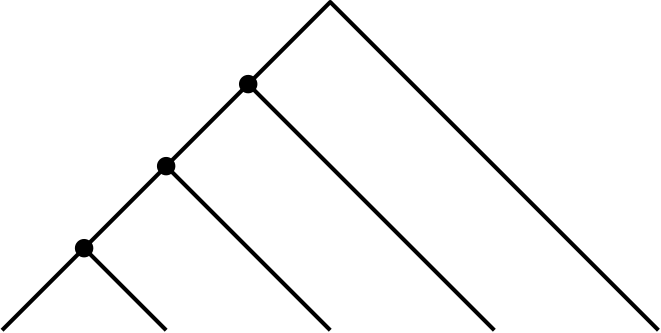}
        \caption*{Tree 5a}
    \end{subfigure}
    \hfill
    \begin{subfigure}[b]{0.3\textwidth}
        \centering
        \includegraphics[width=\textwidth]{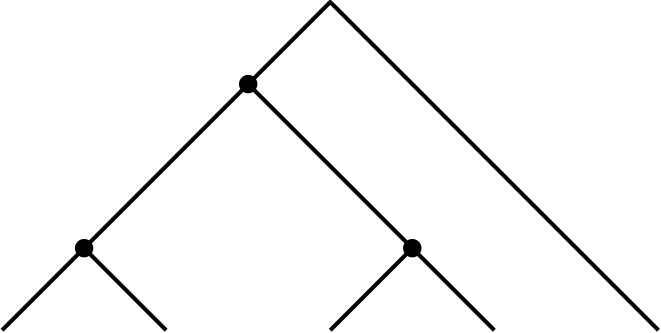}
        \caption*{Tree 5b}
    \end{subfigure}
    \hfill
    \begin{subfigure}[b]{0.3\textwidth}
        \centering
        \includegraphics[width=\textwidth]{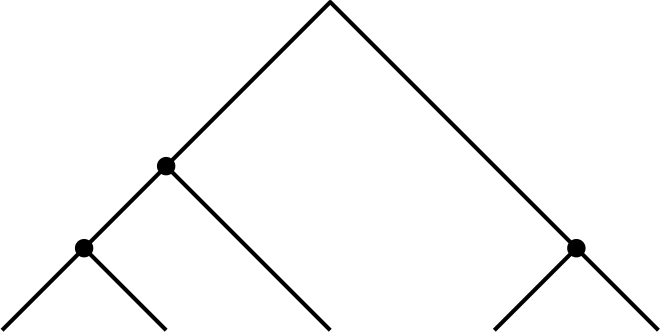}
        \caption*{Tree 5c}
    \end{subfigure}
    \caption{Trees on 5 leaves}
    \label{fig:five_leaf_unlabelled_trees_counterexamples}
\end{figure}

\begin{figure}[ht]
    \centering
    \begin{subfigure}[b]{0.3\textwidth}
        \centering
        \includegraphics[width=\textwidth]{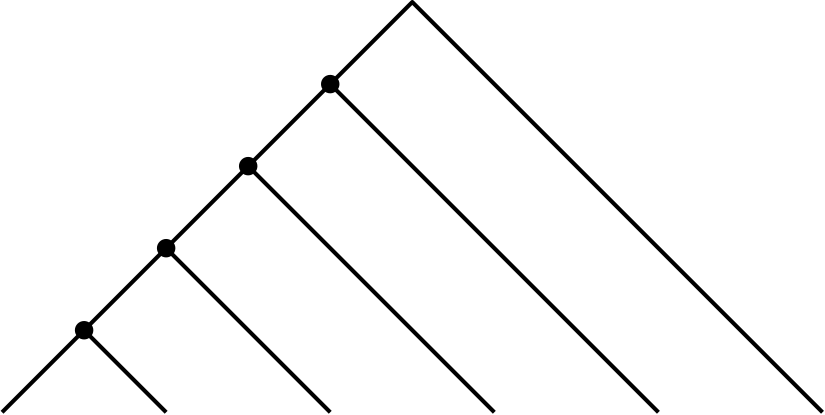}
        \caption*{Tree 6a}
    \end{subfigure}
    \hfill
    \begin{subfigure}[b]{0.3\textwidth}
        \centering
        \includegraphics[width=\textwidth]{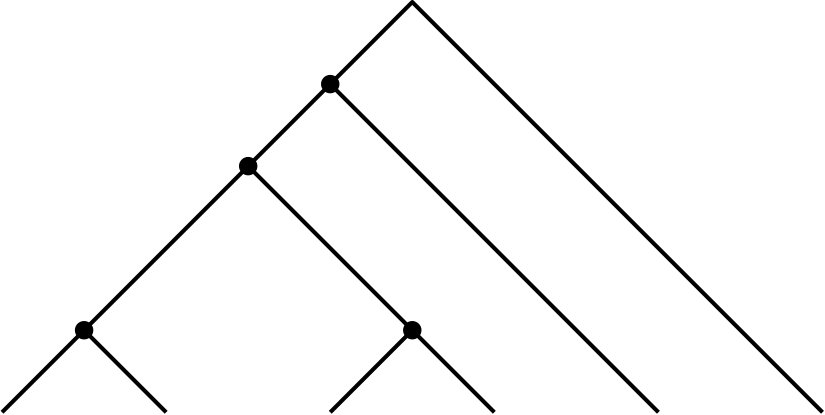}
        \caption*{Tree 6b}
    \end{subfigure}
    \hfill
    \begin{subfigure}[b]{0.3\textwidth}
        \centering
        \includegraphics[width=\textwidth]{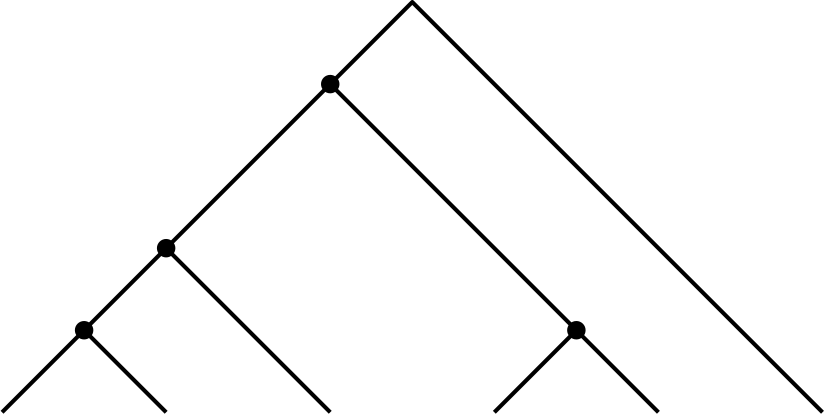}
        \caption*{Tree 6c}
    \end{subfigure}
    \hfill
    \begin{subfigure}[b]{0.3\textwidth}
        \centering
        \includegraphics[width=\textwidth]{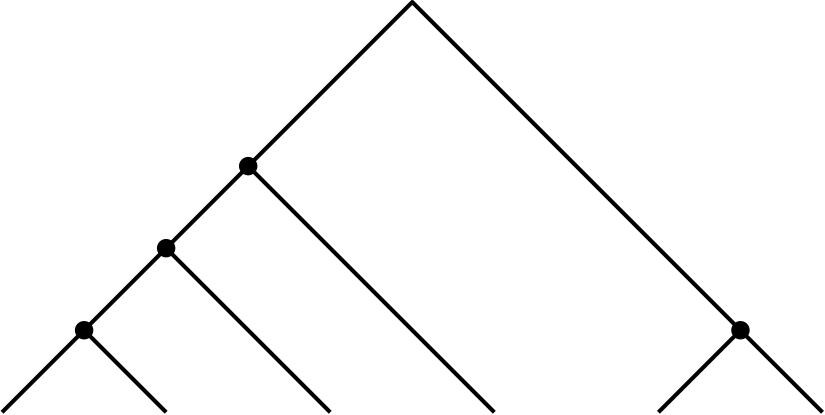}
        \caption*{Tree 6d}
    \end{subfigure}
    \hfill
    \begin{subfigure}[b]{0.3\textwidth}
        \centering
        \includegraphics[width=\textwidth]{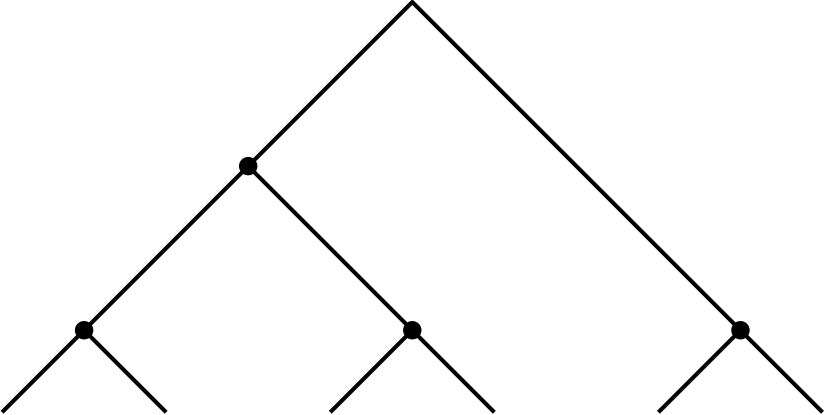}
        \caption*{Tree 6e}
    \end{subfigure}
    \caption{Trees on 6 leaves}
    \label{fig:six_leaf_unlabelled_trees_counterexamples}
\end{figure}

\begin{figure}[ht]
    \centering
    \begin{subfigure}[b]{0.225\textwidth}
        \centering
        \includegraphics[width=\textwidth]{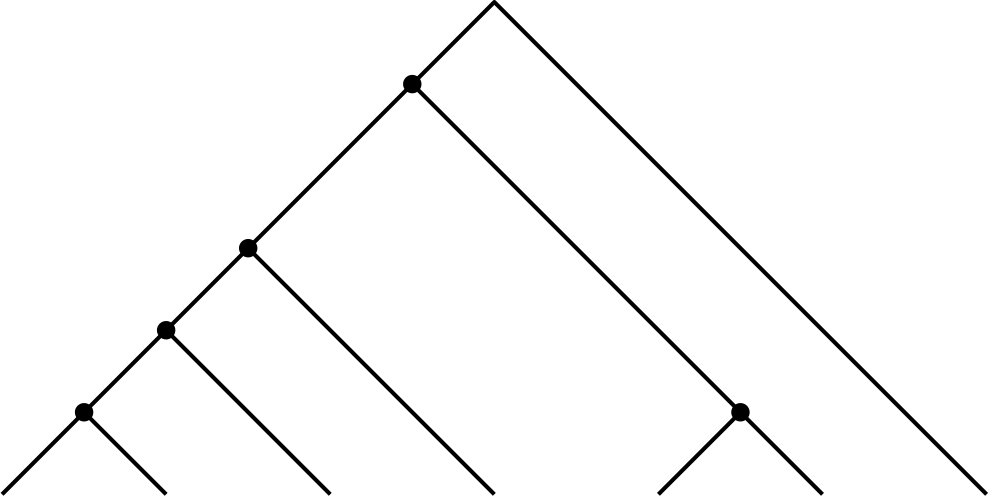}
        \caption*{Tree 7a}
    \end{subfigure}
    \hfill
    \begin{subfigure}[b]{0.225\textwidth}
        \centering
        \includegraphics[width=\textwidth]{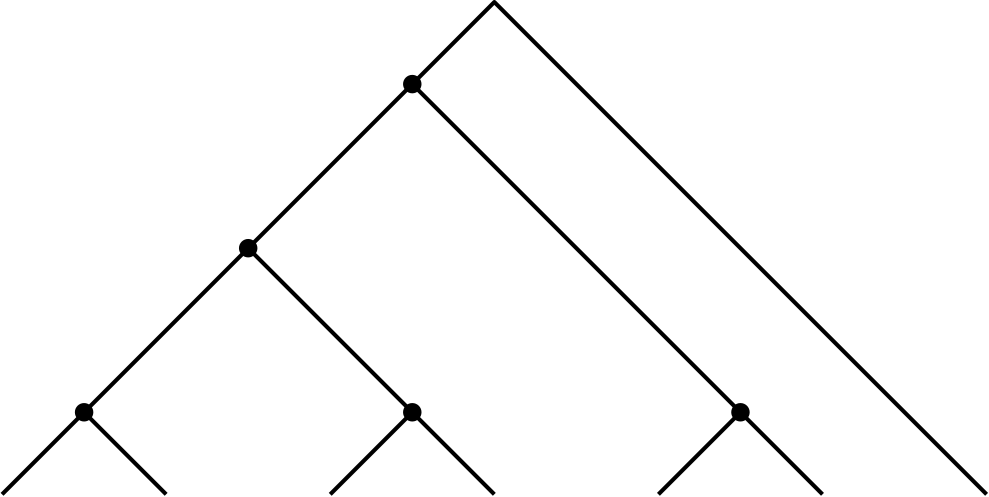}
        \caption*{Tree 7b}
    \end{subfigure}
    \hfill
    \begin{subfigure}[b]{0.225\textwidth}
        \centering
        \includegraphics[width=\textwidth]{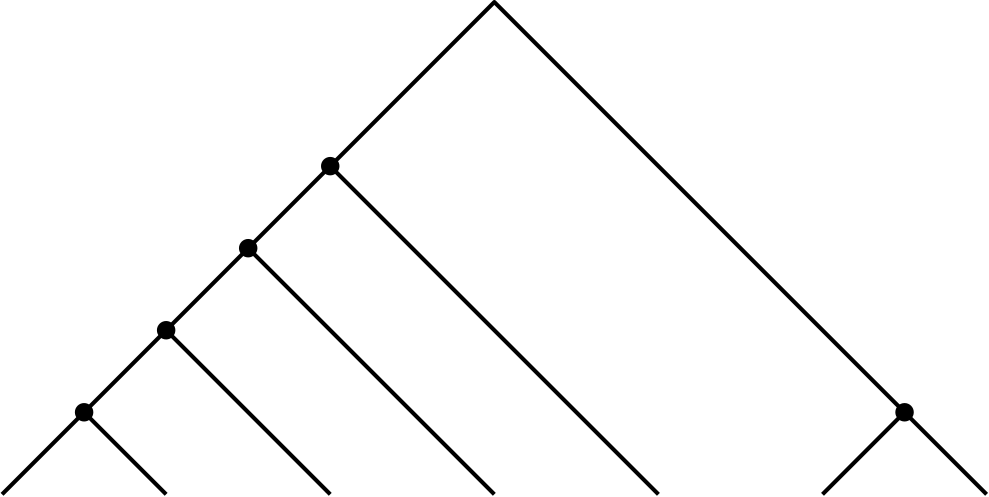}
        \caption*{Tree 7c}
    \end{subfigure}
    \hfill
    \begin{subfigure}[b]{0.225\textwidth}
        \centering
        \includegraphics[width=\textwidth]{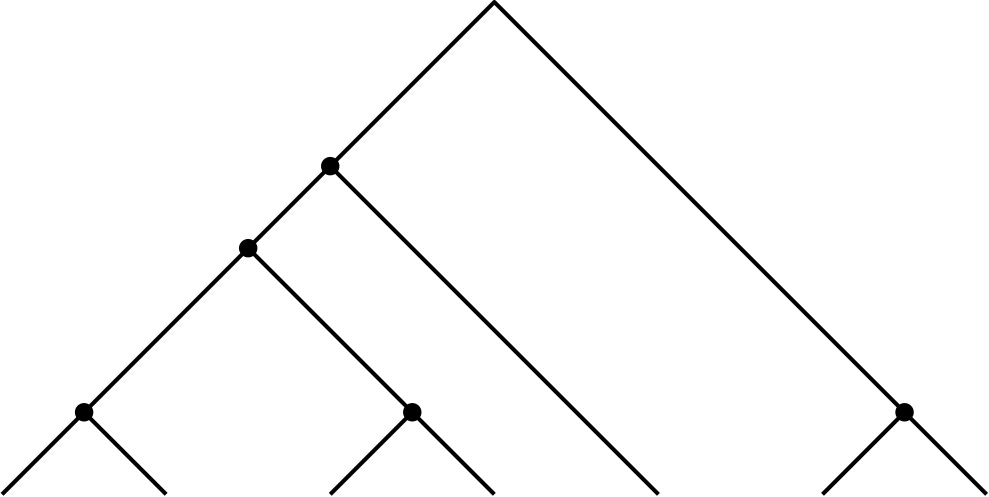}
        \caption*{Tree 7d}
    \end{subfigure}
    \hfill
    \begin{subfigure}[b]{0.225\textwidth}
        \centering
        \includegraphics[width=\textwidth]{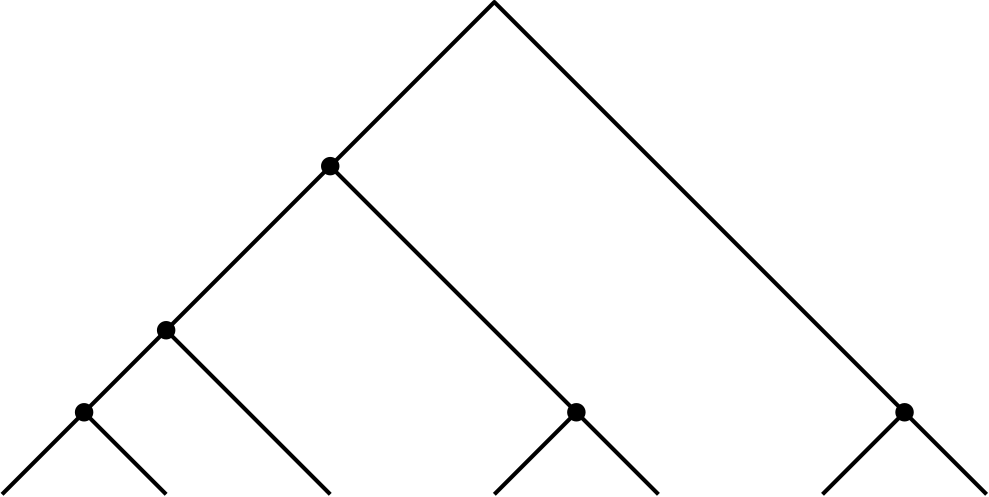}
        \caption*{Tree 7e}
    \end{subfigure}
    \hfill
    \begin{subfigure}[b]{0.225\textwidth}
        \centering
        \includegraphics[width=\textwidth]{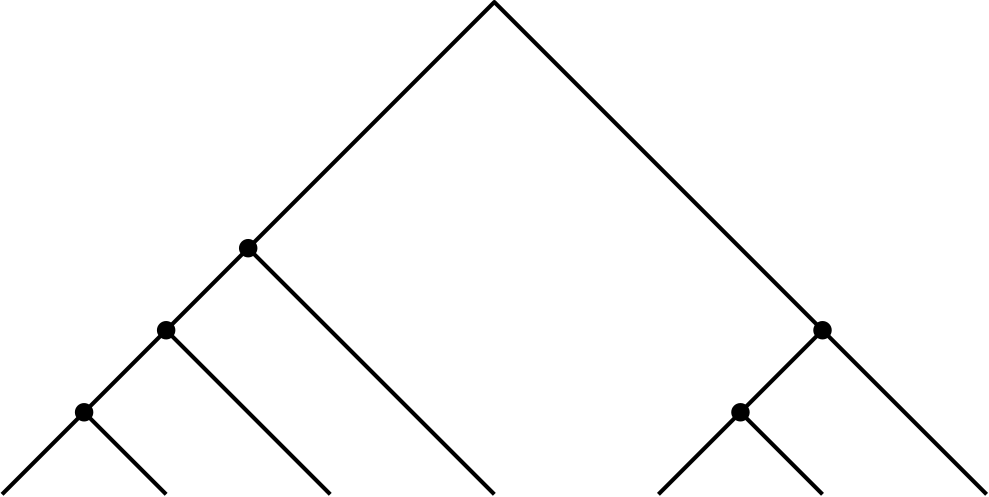}
        \caption*{Tree 7f}
    \end{subfigure}
    \hfill
    \begin{subfigure}[b]{0.225\textwidth}
        \centering
        \includegraphics[width=\textwidth]{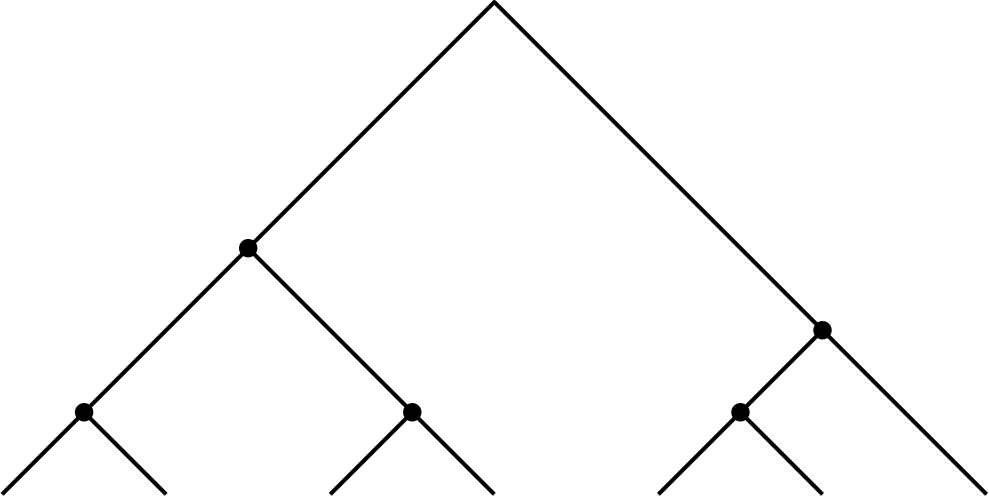}
        \caption*{Tree 7g}
    \end{subfigure}
    \caption{Trees on 7 leaves}
    \label{fig:seven_leaf_unlabelled_trees_counterexamples}
\end{figure}

\begin{figure}[ht]
    \centering
    \begin{subfigure}[b]{0.3\textwidth}
        \centering
        \includegraphics[width=\textwidth]{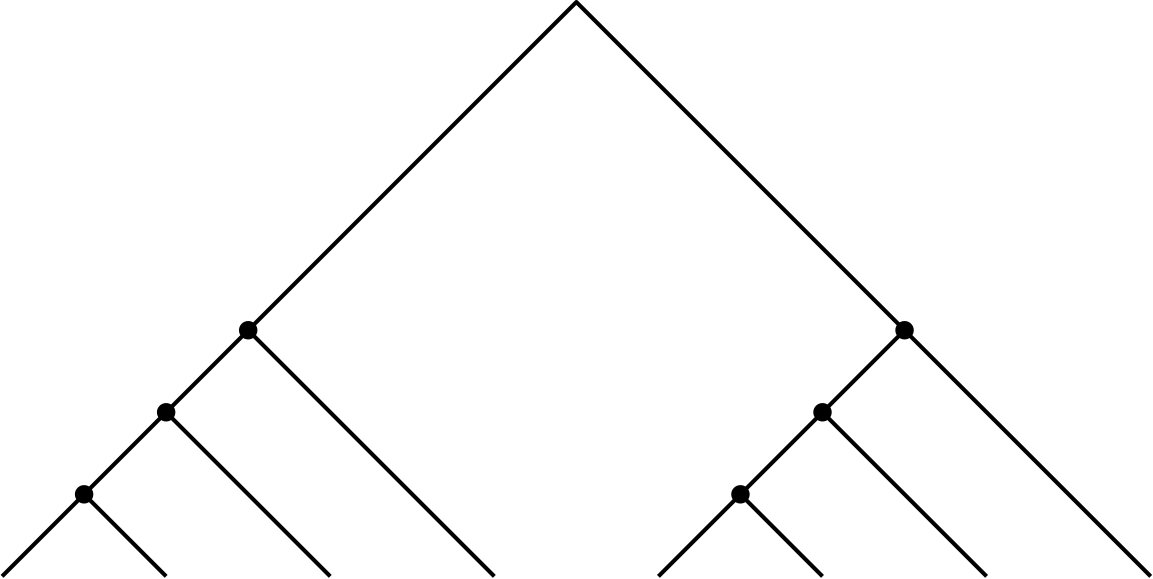}
        \caption*{Tree 8a}
    \end{subfigure}
    \hfill
    \begin{subfigure}[b]{0.3\textwidth}
        \centering
        \includegraphics[width=\textwidth]{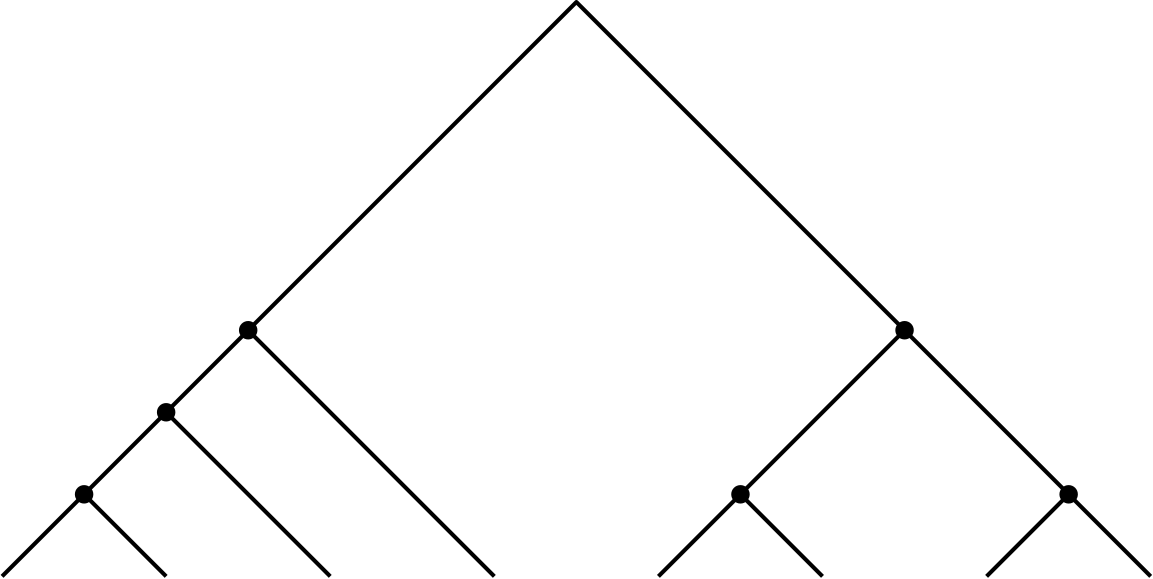}
        \caption*{Tree 8b}
    \end{subfigure}
    \hfill
    \begin{subfigure}[b]{0.3\textwidth}
        \centering
        \includegraphics[width=\textwidth]{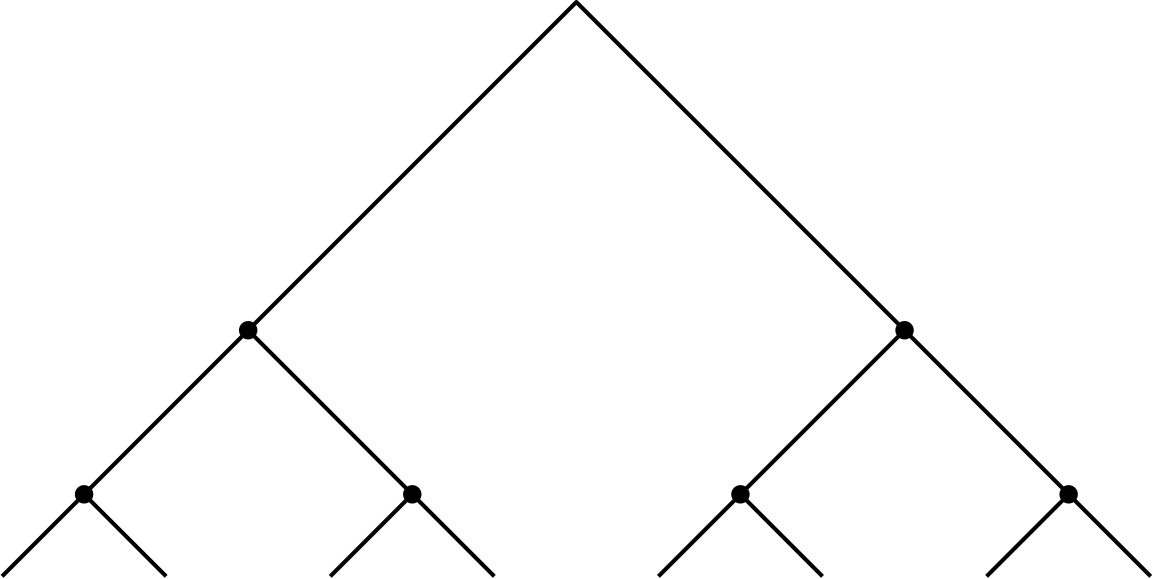}
        \caption*{Tree 8c}
    \end{subfigure}
    \caption{Trees on 8 leaves}
    \label{fig:eight_leaf_unlabelled_trees_counterexamples}
\end{figure}

Given that recoverability from $(n-1)$-buckets in the unlabelled case is always possible outside a few pathological counterexamples, one may be tempted to think that the same may be true in the labelled case. This is in fact true, as we will see in Section \ref{s:labeln-1}. However, if we consider $k$-buckets with $k < n-1$, this is not necessarily the case, as can be seen from the examples in Figure \ref{fig:five_leaf_labelled_counterexample_1}. We note that, in this example, one pair of trees can be constructed from the other by `swapping' the subtrees obtained by restricting to $Y = \{a,b,c\}$ between the two trees. This phenomenon can be generalised, leading to an infinite class of counterexamples, as we will see in the next section.

We therefore will consider the labelled cases in the next two sections. We will first consider the case of labelled trees and their $3$-buckets, as some results that will be derived are then useful for $k$-buckets with larger $k$.

\section{Labelled trees and \texorpdfstring{$3$}{3}-buckets}\label{s:label3}

Again, a priori, it is not clear whether the decrease in information caused by considering fewer leaves may be counteracted by the potential increase in information by considering labelled trees (or indeed, whether labelling makes the recovery process more difficult). As we will see, in the case that $k=3$, recovery is considerably more difficult than the $(n-1)$-bucket unlabelled problem, as there is an infinite class of unrecoverable pairs. We will first show that there is such an infinite class, generated by swapping subtrees with the same support between the two input trees.

\begin{Theorem}
    Let $T_1$ and $T_2$ be labelled phylogenetic trees on $X$. If there is some subset $A$ of $X$ and vertices $x_1$ and $x_2$ of $T_1$ and $T_2$ respectively such that the set of leaves that are descendants of $x_1$ and $x_2$ are exactly $A$, then the trees $T_3$ and $T_4$ obtained by exchanging the subtrees rooted at $x_1$ and $x_2$ have the same multiset of rooted triples as $T_1$ and $T_2$.
\end{Theorem}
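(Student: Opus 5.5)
The plan is to partition the triples of $X$ by how their support meets $A$, and to show that within each class the multiset of displayed triples is unchanged by the swap. Each 3-subset $Y\subseteq X$ contributes exactly one rooted triple from each tree, so $B_3(\{T_1,T_2\})$ is the multiset union over all $Y$ of the pair $\{T_1|_Y, T_2|_Y\}$. It therefore suffices to show that for every 3-subset $Y$,
\[
\{T_1|_Y, T_2|_Y\} = \{T_3|_Y, T_4|_Y\}
\]
as multisets. Write $S_1$ and $S_2$ for the pendant subtrees of $T_1$ and $T_2$ rooted at $x_1$ and $x_2$, both with leaf set $A$. Let $T_1^-$ be the part of $T_1$ outside $S_1$, with $x_1$ regarded as a leaf placeholder, and define $T_2^-$ similarly. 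Then $T_3$ is $T_1^-$ with $S_2$ grafted at $x_1$, and $T_4$ is $T_2^-$ with $S_1$ grafted at $x_2$.

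I would split into cases according to $|Y\cap A|$.

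\textbf{Case $|Y\cap A|=3$.} Here $T_3|_Y = S_2|_Y = T_2|_Y$ and $T_4|_Y = S_1|_Y = T_1|_Y$, so the pair is simply swapped.

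\textbf{Case $|Y\cap A|=0$.} The subtree $T_3|_Y$ is determined by the part of $T_3$ outside the grafted subtree, which is $T_1^-$. Hence $T_3|_Y = T_1|_Y$, and likewise $T_4|_Y = T_2|_Y$.

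\textbf{Case $|Y\cap A|=2$,} say $Y=\{a,b,c\}$ with $a,b\in A$ and $c\notin A$. In every tree where $A$ is a cluster, $a$ and $b$ lie below the cluster's root vertex and $c$ does not. So $ab|c$ is the restriction in all four trees, and the pairs agree.

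\textbf{Case $|Y\cap A|=1$,} say $Y=\{a,c,d\}$ with $a\in A$. The restriction depends only on the placement of the cluster vertex relative to $c$ and $d$, that is, on $T_1^-$ restricted to $\{x_1,c,d\}$ with $a$ substituted for $x_1$. Since $T_3$ shares $T_1^-$, we get $T_3|_Y = T_1|_Y$, and similarly $T_4|_Y = T_2|_Y$.

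The main obstacle, though mild, is justifying rigorously that restrictions to sets meeting $A$ in at most one leaf depend only on the outer tree. I would handle this with a lowest-common-ancestor argument. The most recent common ancestor of $a$ and any leaf outside $A$ is the same as that of $x_1$ and that leaf, because $x_1$ is an ancestor of $a$ and not of the other leaf. Moreover, the ancestor relations among vertices outside $S_1$ are identical in $T_1$ and $T_3$. Since a rooted triple is determined by which pair has the deepest most recent common ancestor, these cases follow.

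Summing over all $Y$ then gives equality of the multisets of rooted triples.
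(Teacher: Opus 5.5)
Your proposal is correct and follows the paper's argument: the same case split on how many of the three leaves lie in $A$, with the same conclusion in each case. For $0$ or $1$ leaves in $A$ the triples are unchanged ($T_1\to T_3$, $T_2\to T_4$), for $2$ the triple $ab|c$ is forced in all four trees, and for $3$ the pair is exchanged. Your lowest-common-ancestor justification for the $0$- and $1$-leaf cases adds rigour that the paper's proof leaves implicit.
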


\begin{proof}
    Consider some subset $\{a, b, c\}$ of $X$. Either $0$, $1$, $2$ or $3$ of $a$, $b$, and $c$ are in $A$.
    
    If $a, b, c, \notin A$, then swapping these subtrees does not change the triples containing $a$, $b$ and $c$. Thus the triple of $a$, $b$ and $c$ in $T_1$ is also a triple in $T_3$, and the same holds of $T_2$ and $T_4$.
    
    If $a \in A$ and $b, c \notin A$ then swapping these subtrees does not change the position of $a$ in relation to $b$ and $c$. Thus the triple of $a$, $b$ and $c$ in $T_1$ is also a triple in $T_3$, and the same holds of $T_2$ and $T_4$.
    
    If $a, b \in A$, and $c \notin A$, then $ab|c$ must be a triple of all of $T_1$, $T_2$, $T_3$, and $T_4$.
    
    If $a, b, c \in A$. Then the triple of $a$, $b$ and $c$ in $T_1$ must also be a triple in $T_4$, as both contain the same subtree containing $a$, $b$ and $c$, and the same holds for $T_2$ and $T_3$.
\end{proof}

Note that if the subtrees rooted at $x_1$ and $x_2$ are the same, or $T_1$ and $T_2$ differ only in these subtrees, then $\{T_1, T_2\} = \{T_3, T_4\}$. Additionally, if $|A|$ is $1$ or $2$, then there is only a single phylogenetic tree on $A$, and so these subtrees must be the same. Similarly if $n - |A|$ is $1$, then the parts of $T_1$ and $T_2$ not in these subtrees must only be a single leaf, and so $T_1$ and $T_2$ will differ only in these subtrees. Therefore if $|A|=1$ or $2$, or $n-|A|=1$, this subtree swap will not result in a distinct pair of trees.

\begin{Corollary}
    For each $n>4$ there exist distinct labelled phylogenetic trees $T_1$, $T_2$, $T_3$, and $T_4$ on $X$ with $|X|=n$ such that $B_3(\{T_1, T_2\}) = B_3(\{T_3, T_4\})$, and hence these pairs are not recoverable from their $3$-bucket.
\end{Corollary}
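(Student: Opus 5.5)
The plan is to give an explicit construction for each $n>4$ and apply the preceding subtree-swap theorem. By the remark after that theorem, the swap yields a genuinely different pair only when three things hold: the swapped set $A$ has $|A|\ge 3$, we have $n-|A|\ge 2$, and the two trees restricted to $A$ differ. On top of that, $T_1$ and $T_2$ must also differ outside $A$. So we need $|A|=3$, with at least two leaves outside $A$ arranged differently in the two trees. That forces $n\ge 5$, which matches the hypothesis.

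Write $X=\{a,b,c,x_1,\dots,x_{n-3}\}$ and $A=\{a,b,c\}$. Let $T_1$ be the tree whose root has the cherry $ab$ joined to $c$ as one child, i.e. the clade $(ab)c$, and the caterpillar $x_1x_2\cdots x_{n-3}$ as the other child. Let $T_2$ have the clade $a(bc)$ hanging off the root, with the other child a caterpillar on the $x_i$ in a different order; for instance, swap $x_1$ and $x_{n-3}$. When $n-3=2$ the caterpillar on two leaves is unique, so in that case I change the attachment instead. Concretely, for $n=5$ I take $T_1=((ab)c,(x_1x_2))$ and $T_2=(((bc)a,x_1),x_2)$. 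In $T_2$ the set $A$ is still a clade, but $x_1$ now attaches to it rather than to $x_2$.

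Both trees have a vertex whose descendant leaf set is exactly $A$. The swap theorem therefore says that the trees $T_3,T_4$ obtained by exchanging these two subtrees satisfy $B_3(\{T_1,T_2\})=B_3(\{T_3,T_4\})$.

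It remains to check that $\{T_1,T_2\}\neq\{T_3,T_4\}$ and that all four trees are distinct. $T_3$ has $T_2$'s subtree on $A$ and $T_1$'s shape outside $A$, and $T_4$ is the reverse. So $T_3\ne T_1$ because their restrictions to $A$ differ: $T_1$ displays $ab|c$ while $T_3$ displays $bc|a$. Also $T_3\ne T_2$ because their restrictions to $X\setminus A$, or their attachment structure, differ. The same argument applies to $T_4$. The step needing the most care is $n=5$, since there the "outside" parts are too small to differ as restricted trees. I would verify distinctness directly by exhibiting distinguishing triples: $T_1$ displays $x_1x_2|a$, while $T_2$ displays $ax_1|x_2$. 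Finally, Theorem \ref{smaller_bucket_from_larger} is not needed, as the statement concerns $3$-buckets only, and non-recoverability follows immediately from the definition of recoverability.
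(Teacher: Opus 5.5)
Your proposal is correct and follows the paper's route. The corollary is an immediate consequence of the subtree-swap theorem together with the remark after it: the swap produces a new pair once $|A|\ge 3$, $n-|A|\ge 2$, and the trees differ both on $A$ and in how the rest is arranged around $A$. Your explicit family, including the separate $n=5$ construction (whose four trees are pairwise distinct and whose $3$-buckets do coincide), supplies the concrete example the paper leaves implicit; note only that $|A|=3$ is a convenient choice, not a requirement.
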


We will proceed to show that subtree swaps are in fact the only way in which to construct multiple pairs of trees with the same $3$-buckets, but this will require substantial additional machinery. 

We first define a graph designed to recover maximal clusters in a tree, used in the well-known BUILD algorithm \cite{aho1981inferring}. This will be adapted for use in $3$-bucket recovery.

\begin{Definition}
    Let $R$ be a set of triples, and $Y \subseteq X$. Define the \emph{BUILD graph of $R$ restricted to $Y$}, denoted $BG(R,Y)$, to be the graph with vertex set $Y$, and an edge between vertices $a$ and $b$ if and only if there is some triple $r \in R$ of the form $ab|c$ such that $\supp(r) \subseteq Y$.
\end{Definition}

We must also define the usual concept of a cluster of a tree.

\begin{Definition}
    Let $T$ be a labelled phylogenetic tree on $X$, and $v$ some vertex of $T$. Let $Y$ be the set of leaves descended from $v$. Then $Y$ is referred to as a \emph{cluster} of $T$. If two sets $A$ and $B$ have the property $A \cap B = \emptyset, A \subseteq B$ or $B \subseteq A$, then these sets are said to be \emph{cluster-compatible}.
\end{Definition}

Clusters have two well-known properties that we shall use here.

\begin{Theorem}\label{t:compatible.clusters}
    A collection $H$ of subsets of $X$ is the cluster set of a labelled phylogenetic tree $T$ if and only if for each $A,B \in H$, $A$ and $B$ are cluster-compatible.
\end{Theorem}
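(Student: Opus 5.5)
The plan is to prove the two directions separately, reading the statement in the standard way. By the cluster set of $T$ we mean the set of all clusters of $T$, so it always contains $X$ (the root cluster) and every singleton $\{x\}$ (the leaf clusters). Because all trees in this paper are binary, the ``if'' direction also needs an extra hypothesis: $H$ must contain $X$ and all singletons, and every non-singleton member of $H$ must have exactly two maximal proper subsets in $H$. Without this condition a compatible family only yields a possibly non-binary tree. I will state this hypothesis explicitly at the start. I will also check it against the intended use in the BUILD argument, where the clusters come from a genuine binary tree.

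For the ``only if'' direction, take clusters $A$ and $B$ of $T$, with $A$ the set of leaves descended from $u$ and $B$ the set descended from $v$. If $u$ is an ancestor of $v$, or equal to it, then every descendant of $v$ is a descendant of $u$, so $B\subseteq A$; symmetrically in the other case. If neither is an ancestor of the other, suppose some leaf $x$ lies in $A\cap B$. Every vertex has in-degree at most $1$, so the path from the root to $x$ is unique. Both $u$ and $v$ lie on this path, and that forces one of them to be an ancestor of the other, a contradiction. Hence $A\cap B=\emptyset$, and $A$, $B$ are cluster-compatible.

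For the ``if'' direction, build the Hasse diagram of $(H,\subseteq)$, directing each edge from the larger set to the smaller one. The key lemma is that for each $A\in H$ the sets in $H$ strictly containing $A$ form a chain. Indeed, two such sets $B$ and $C$ meet in $A\neq\emptyset$, so they cannot be disjoint, and compatibility forces one to contain the other. Consequently every $A\neq X$ has a unique minimal strict superset, its parent, so the Hasse diagram is a rooted tree with root $X$, and its leaves are exactly the singletons. Next I show that the leaves below the vertex $A$ are exactly the singletons $\{x\}$ with $x\in A$. Each such singleton lies in $H$, and the chain of supersets of $\{x\}$ passes through $A$, so $\{x\}$ is a descendant of $A$; conversely, every descendant of $A$ is a subset of $A$. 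Identifying the leaf $\{x\}$ with $x$, the cluster at vertex $A$ is therefore $A$, and the cluster set of the tree is exactly $H$. Finally, the binary hypothesis says each internal vertex has out-degree $2$, and distinct children of a vertex are disjoint, since they are incomparable and compatible. So the tree is a rooted binary phylogenetic tree on $X$.

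The main obstacle is not technical. It is pinning down the implicit hypotheses (that $X$ and all singletons lie in $H$, and the binarity condition) so that the ``iff'' is actually true for binary trees. I would state this carefully, and remark that dropping the binarity condition gives the usual statement for general rooted phylogenetic trees.
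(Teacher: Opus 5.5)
The paper gives no proof of this statement. It lists it, together with the triple characterisation of clusters, as a ``well-known property''. It only ever uses the ``only if'' direction, in the proof of Theorem~\ref{t:hasse.union}, to argue that the children $C'_1,C'_2$ of a previously found cluster cannot straddle a cluster $C$ common to both trees.

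Your proposal is the standard Hasse-diagram argument and it is correct. Your ``only if'' direction holds with no extra hypotheses, and that is all the paper needs. Your caveat about the hypotheses is also justified, because read literally the statement fails for the binary trees of this paper. For $X=\{a,b,c\}$, the family $\{X,\{a\},\{b\},\{c\}\}$ is pairwise compatible, but it is the cluster set only of the star tree, which is not binary. Likewise $\{X\}$ alone is compatible but lacks the singletons.

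Two small points need tidying. First, your key lemma uses $A\neq\emptyset$. State explicitly that this follows from your binarity hypothesis: $\emptyset$ would be a non-singleton member with no proper subsets, so it has no maximal proper subsets.

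Second, you may phrase the result as a genuine characterisation: a family is the cluster set of a binary tree if and only if
\begin{itemize}
\item it contains $X$ and all singletons,
\item every non-singleton member has exactly two maximal proper subsets in it, and
\item it is pairwise compatible.
\end{itemize}
In that case the forward direction must also verify the first two conditions for a binary tree. The first is immediate. The second follows from your ancestor argument: any cluster strictly inside the cluster of a vertex $v$ lies inside the cluster of one of the two children of $v$. The two child clusters are disjoint and non-empty, so they are exactly the two maximal proper subsets.
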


\begin{Theorem}
    A subset $Y$ of $X$ is a cluster of the labelled phylogenetic tree $T$ if and only if for all $a,b \in Y$ and $c \in X \backslash Y$, $T$ displays the triple $ab|c$.
\end{Theorem}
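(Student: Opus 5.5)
We prove the two directions separately, working directly from the definition of a cluster as the set of leaves descended from a vertex $v$ of $T$. Throughout, $T$ is a rooted binary phylogenetic tree on $X$. A triple $ab|c$ is displayed by $T$ exactly when the most recent common ancestor $\mathrm{lca}(a,b)$ is a proper descendant of $\mathrm{lca}(a,b,c)$, or equivalently when there is a cluster containing $a,b$ but not $c$. We first verify this equivalence from the definition of the induced subtree: suppressing degree-two vertices preserves ancestor relations among the lca's of the leaves.

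\emph{Forward direction.} Suppose $Y$ is the cluster of a vertex $v$. Take $a,b\in Y$ and $c\in X\setminus Y$. Then $\mathrm{lca}(a,b)$ is a descendant of $v$ (possibly $v$ itself), while $c$ is not below $v$. Hence $\mathrm{lca}(a,b,c)$ is a proper ancestor of $v$, and so also a proper ancestor of $\mathrm{lca}(a,b)$. Therefore the induced subtree on $\{a,b,c\}$ has $(a,b)$ as its cherry, so $T$ displays $ab|c$.

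\emph{Converse.} Suppose every $ab|c$ with $a,b\in Y$ and $c\notin Y$ is displayed; we may assume $Y$ is nonempty. If $|Y|=1$, $Y$ is the cluster of its leaf, so assume $|Y|\ge 2$. Let $v=\mathrm{lca}(Y)$, and let $C$ be the cluster of $v$. Clearly $Y\subseteq C$, and we aim to show $C=Y$. Since $v$ is the lca of $Y$ and $T$ is binary, $v$ has two children, each with a descendant in $Y$. Choose $a\in Y$ below one child and $b\in Y$ below the other, so that $\mathrm{lca}(a,b)=v$. Now suppose some $c\in C\setminus Y$ exists. Then $c$ lies below $v$, so $\mathrm{lca}(a,b,c)=v=\mathrm{lca}(a,b)$. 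Consequently $ab|c$ is not displayed: in the restriction, $c$ pairs with whichever of $a$ or $b$ shares its child of $v$. This contradicts the hypothesis, so $C=Y$ and $Y$ is a cluster.

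The only point needing care is the triple--lca characterisation used at the start. We justify it by noting that the induced subtree on three leaves has cherry $(a,b)$ if and only if $\mathrm{lca}(a,b)$ is strictly below $\mathrm{lca}(a,b,c)$. Otherwise all three pairwise lca's coincide with one vertex except for one pair, and that pair determines the cherry. This step is routine, so neither direction presents a real obstacle.
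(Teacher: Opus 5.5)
The paper gives no proof of this statement. It is quoted, alongside Theorem~\ref{t:compatible.clusters}, as one of two ``well-known'' properties of clusters, so your argument supplies what the paper takes for granted. Your proof is correct and is the standard one. In the forward direction, $\mathrm{lca}(a,b,c)$ must be a proper ancestor of the vertex $v$ defining $Y$, hence strictly above $\mathrm{lca}(a,b)$. In the converse, you take $v=\mathrm{lca}(Y)$ and pick $a,b\in Y$ under different children of $v$; any $c$ in the cluster of $v$ but outside $Y$ then forces $\mathrm{lca}(a,b)=\mathrm{lca}(a,b,c)$, so $ab|c$ is not displayed. Note that your reformulation of display (``some cluster contains $a,b$ but not $c$'') already gives the forward direction in one line, since $Y$ itself is such a cluster.

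There are two small points of care.

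First, ``we may assume $Y$ is nonempty'' is not a reduction you are entitled to make. For $Y=\emptyset$ the triple condition holds vacuously, yet $\emptyset$ is not a cluster, because every vertex has at least one leaf descendant. So the statement as written fails there and needs $Y\neq\emptyset$ (and distinct $a,b$) as an implicit hypothesis. Say this explicitly rather than treating it as a without-loss-of-generality step.

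Second, your closing justification of the triple--lca characterisation (``all three pairwise lca's coincide with one vertex except for one pair'') is garbled. The clean version: $w=\mathrm{lca}(a,b,c)$ has exactly two children, and the three leaves split $2+1$ between the two child subtrees ($3+0$ is impossible since $w$ is the lca). The pair lying on the same side is exactly the pair whose lca is strictly below $w$, and that pair is the cherry of $T|_{\{a,b,c\}}$.
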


\begin{Definition}
    Let $G$ be a graph with vertex set $X$ such that there is some $A \sqcup B = X$ for which $G$ is the disjoint union of the complete graphs on $A$ and $B$, and both $A$ and $B$ are non-empty. Then we call $G$ a \emph{disclique of $A$ and $B$}, or just a \emph{disclique} if it is a disclique of $A$ and $B$ for some $A$ and $B$. 
\end{Definition}

The following theorem is due to \cite{aho1981inferring}, but rewritten with our terminology.

\begin{Theorem}[\cite{aho1981inferring}]\label{t:buildistwocliques}
Let $R$ be the $3$-multideck of a labelled phylogenetic tree $T$ on $X$, and let $Y \subseteq X$. Let the major and minor subtrees of $T|_Y$ have leaf sets $A$ and $B$ respectively (noting $A \sqcup B = Y$).  Then $BG(R,Y)$ is a disclique of $A$ and $B$.
\end{Theorem}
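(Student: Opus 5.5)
We must show that $BG(R,Y)$ has an edge between $a$ and $b$ exactly when $a,b$ lie on the same side of the root split $A \sqcup B$ of $T|_Y$, and that each side is complete. The plan is to use the correspondence between triples of $T$ supported in $Y$ and triples of $T|_Y$. Since $R=M_3(T)$ and restriction is transitive, every triple $r\in R$ with $\supp(r)\subseteq Y$ is exactly $(T|_Y)|_{\supp(r)}$. So $BG(R,Y)$ equals the BUILD graph of $M_3(T|_Y)$ on its full leaf set $Y$, and we may replace $T$ by $T|_Y$ and assume $Y=X$. We also need $|Y|\ge 2$ so that the root split exists. Both parts of a disclique must be nonempty, and this holds automatically because $A$ and $B$ are the leaf sets of the two children of the root.

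\emph{No edges between $A$ and $B$.} Take $a\in A$ and $b\in B$. We claim no triple of the form $ab|c$ is displayed by $T|_Y$. For any $c\in Y\setminus\{a,b\}$, say $c\in A$, the set $A$ is a cluster containing $a$ and $c$ but not $b$. By the cluster characterisation theorem stated above, $T|_Y$ displays $ac|b$, and hence not $ab|c$. The case $c\in B$ is symmetric, giving $bc|a$.

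\emph{Completeness of each side.} Take distinct $a,a'\in A$. If $B\neq\emptyset$, choose any $c\in B$. Since $A$ is a cluster containing $a,a'$ and not $c$, the triple $aa'|c$ is displayed by $T|_Y$, so it lies in $R$ with support in $Y$, and $aa'$ is an edge. The same argument with $A$ and $B$ exchanged shows $B$ induces a complete graph. If $|A|=1$ or $|B|=1$ there is nothing to check on that side.

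The main point needing care is the first reduction: that triples of $T$ on $Y$ coincide with triples of $T|_Y$, i.e.\ restriction to a subset and then to a smaller subset equals direct restriction. This is immediate from the definition of induced subtree via minimal connected subgraphs and suppression. Everything else is routine from the cluster criterion, and multiplicities in $R$ play no role.
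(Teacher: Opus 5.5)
Your proof is correct. The paper gives no argument of its own here: it attributes the statement to Aho et al.\ \cite{aho1981inferring} and restates it in its own terminology, so there is no in-paper proof to compare step by step.

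Your route is more direct than the cited source and fits exactly what is claimed. The classical BUILD lemma concerns the connected components of this graph for an arbitrary consistent set of triples; those components become the children of the root in the constructed tree. The statement here asserts more: each side of the root split is a clique. That property is special to the full triple set of a single binary tree, and the paper's later disclique-decomposition theorem depends on precisely this clique structure.

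Your argument has two parts:
\begin{enumerate}
\item You reduce to $T|_Y$ using $(T|_Y)|_Z = T|_Z$ for $Z \subseteq Y$.
\item You apply the cluster--triple characterisation twice. A cross pair $a \in A$, $b \in B$ is never the cherry of a triple supported in $Y$, because the third leaf always pairs with whichever of $a,b$ is on its side. A same-side pair is always such a cherry, using any witness from the other, necessarily nonempty, side.
\end{enumerate}
Together these give both the absence of cross edges and the completeness of each side, in a self-contained way that does not depend on translating Aho et al.'s constraint formalism.

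You also correctly identify the implicit hypothesis $|Y| \ge 2$. Without it $T|_Y$ has no major and minor subtrees. This matches how the paper uses the result: Algorithm~\ref{alg:recover-clusters} performs the disclique decomposition only on clusters $C$ with $|C|>1$.
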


As a consequence of Theorem \ref{t:buildistwocliques}, if $R$ is the $3$-bucket of $T_1$ and $T_2$, $BG(R,Y)$ is the union of two pairs of disjoint cliques that cover the vertices. We will now show that we can uniquely recover a pair of discliques from their graph union.

\begin{Theorem}
    Every union of discliques on a graph with at least two vertices admits a unique decomposition as the union of two discliques
\end{Theorem}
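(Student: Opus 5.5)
We read the statement in the sense in which the preceding sentence uses it. By Theorem \ref{t:buildistwocliques}, when $R$ is a $3$-bucket, $BG(R,Y)$ is a graph $G$ on $Y$ of the form $G=D_1\cup D_2$, where $D_1$ is a disclique of $A_1,B_1$ and $D_2$ a disclique of $A_2,B_2$. The existence of a decomposition into two discliques is therefore part of the hypothesis. The content to prove is that the unordered pair $\{D_1,D_2\}$ is determined by the graph $G$ alone.

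Some restriction of this kind is needed. On three vertices, the three discliques $\{a\}|\{b,c\}$, $\{b\}|\{a,c\}$ and $\{c\}|\{a,b\}$ each have a single edge, and their union is $K_3$. Since $K_3$ has three edges, it cannot be a union of two such discliques. So the claim cannot hold for unions of arbitrarily many discliques.

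The plan is to work with the complement $H$ of $G$, the \emph{non-edge graph}. Introduce the four cells
\[P=A_1\cap A_2,\quad Q=B_1\cap B_2,\quad R=A_1\cap B_2,\quad S=B_1\cap A_2,\]
some of which may be empty. Two vertices are adjacent in $G$ exactly when they lie on the same side of at least one of the two bipartitions. Hence two vertices are non-adjacent exactly when they are separated by both bipartitions. It follows that
\[H = K(P,Q)\,\cup\, K(R,S),\]
a disjoint union of (possibly empty) complete bipartite graphs, and every other vertex is isolated in $H$. Conversely, the two discliques can be read off the cells: $D_1=(P\cup R)\,|\,(Q\cup S)$ and $D_2=(P\cup S)\,|\,(Q\cup R)$.

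The first step is to show that $H$ is never empty, that is, $G$ is never complete. For $H$ to be empty, at least one of $P,Q$ must be empty and at least one of $R,S$ must be empty. Checking the four combinations, each forces one of $A_2$, $B_2$, $A_1$ or $B_1$ to be empty, which contradicts the definition of a disclique. For example, $P=R=\emptyset$ gives $A_1=\emptyset$.

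The main step is to recover the cells from $H$ up to the symmetries that swap $D_1$ with $D_2$ or rename sides. There are two cases.

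\textbf{Case 1: $H$ has two non-trivial components.} Then these are $K(P,Q)$ and $K(R,S)$. Each is a connected complete bipartite graph, so its bipartition is determined by $H$. No vertex is isolated in this case: an isolated vertex would have to lie in a cell whose opposite cell is empty, but here all four cells are non-empty. The two discliques are then $(P\cup R)|(Q\cup S)$ and $(P\cup S)|(Q\cup R)$. Swapping the names $R\leftrightarrow S$ merely exchanges $D_1$ and $D_2$, and swapping $P\leftrightarrow Q$ only renames sides. So the pair is unique.

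\textbf{Case 2: $H$ has exactly one non-trivial component $K(P,Q)$.} Let $I$ be the set of vertices isolated in $H$. At most one of $R,S$ is non-empty, and this cell is exactly $I$.

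If $I=\emptyset$, then $D_1=D_2=P|Q$, and this is forced.

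If $I\neq\emptyset$, every vertex of $I$ is adjacent in $G$ to all of $P$ and all of $Q$. The pair $\{(P\cup I)|Q,\; P|(Q\cup I)\}$ realises $G$. Any other decomposition must produce the same $H$, so by the analysis above its cells have the same shape, with $I$ forming a single cell. Putting $I$ on the same side in both discliques would create non-edges between $I$ and one of $P,Q$, contradicting that $I$ is isolated in $H$. Hence the decomposition is forced up to order.

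The step I expect to be the main obstacle is this last uniqueness argument in Case 2. There one must rule out alternative decompositions in which $I$ is split, or in which a component of $H$ is regrouped. The tool for this is that any decomposition's non-edge graph has the cell structure derived above, so comparing the cell structures determined by two decompositions of the same $G$ pins them down. The argument uses at least two vertices only through the requirement that discliques have two non-empty parts.
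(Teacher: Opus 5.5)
Your proof is correct, and it takes a different route from the paper's.

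\textbf{The paper's route.} The paper works inside $G$. Its key lemma is that every maximal clique containing a vertex $a$ is one of the two disclique parts containing $a$, because a clique cannot meet both $A\cap D$ and $B\cap C$. It recovers one disclique as any maximal clique together with its complement. If no edge crosses that bipartition, the two discliques must coincide. Otherwise it recovers the second disclique as a maximal clique through a crossing edge.

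\textbf{Your route.} You pass to the complement $H$, whose structure is rigid: $K(P,Q)\cup K(R,S)$ plus isolated vertices. Uniqueness then reduces to the uniqueness of the bipartition of a connected complete bipartite graph, and the isolated set $I$ is the one leftover cell.

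\textbf{What each buys.} The paper's argument is the more directly algorithmic one: it is a greedy clique-growing procedure, which is how the decomposition is used in the cluster-recovery algorithm. Yours yields more structure for free:
\begin{itemize}
\item an explicit description of which graphs are unions of two discliques (never complete, with complement of the stated form);
\item the coincident case $D_1=D_2$ appears simply as $I=\emptyset$, with no separate argument;
\item your $K_3$ example makes explicit the reading ``union of \emph{two} discliques'', which the paper's proof assumes tacitly.
\end{itemize}

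\textbf{Two harmless slips.} First, swapping $P\leftrightarrow Q$ alone exchanges $D_1$ and $D_2$. It is swapping $P\leftrightarrow Q$ and $R\leftrightarrow S$ together that merely renames sides; either way the unordered pair is unchanged. Second, in Case 2 you should state that, after relabelling $A_2\leftrightarrow B_2$ (which swaps $P\leftrightarrow R$ and $Q\leftrightarrow S$), the non-trivial component may be taken to be $K(P,Q)$. The same normalisation applies to the competing decomposition.
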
 

\begin{proof}
    The theorem is trivially true in the case of the graph being two vertices, as both discliques must simply be the unique pair of size one cliques. Hence, we can suppose the graph has at least $3$ vertices, and hence at least one edge.

    We will first show that any maximal clique in $G$ containing some vertex $a$ must be one of the two disclique parts containing $a$.
    
    Let $G$ be a union of discliques, and select some vertex $a$ in $G$. Let one disclique be $A \sqcup B$ and the other $C \sqcup D$, and without loss of generality, suppose $a \in A$ and $a \in C$ (hence $a \in A \cap C$). We now consider any maximal clique $K$ containing $a$, and we claim that either $K=A$ or $K=C$. 

    We now partition the vertices in the graph into the four parts $A \cap C, A \cap D, B \cap C$ and $B \cap D$. We note that $a$ can only be adjacent to vertices that are in either $A$ or $C$, and is not adjacent to any vertex in $B \cap D$. It follows that $K \subseteq A \cup C$.

    We now show that $K$ cannot contain both a vertex in $A \cap D$ and $B \cap C$ simultaneously. Seeking a contradiction, suppose otherwise, and let $x \in A \cap D$ and $y \in B \cap C$. We can then see that $x$ and $y$ are not adjacent in the disclique $A \sqcup B$, as $x \in A$ and $y \in B$. Furthermore, they are not adjacent in the disclique $C \sqcup D$, as $x \in D$ and $y \in C$. Hence $x$ and $y$ are not adjacent, contradicting that $K$ is a clique.

    It follows that either $K \subseteq (A\cap C) \cup (A \cap D) = A$ or $K \subseteq (A \cap C) \cup (B \cap C) = C$. Finally, since $K$ is maximal, it follows that either $K = A$ or $K= C$ as claimed.

    It follows that we can immediately recover at least one disclique, by finding some maximal clique, say $A$, and by taking the complement we recover $B$ too.

    Now, if there are no edges between a vertex in $A$ and a vertex in $B$, it follows that $C = A$ and $B = D$, or vice versa. This can be seen by considering the cases - either $C \cap A \ne \emptyset$ or $D \cap A \ne \emptyset$, and without loss of generality suppose the former. If $C \subset A$, then $D$ must contain $B$ and some vertex in $A$, and hence there would be an edge from a vertex in $B$ to a vertex in $A$, a contradiction. If $C$ also contains some vertex outside of $A$, then it must intersect $B$, and since $C$ is a clique again there must be an edge from $B$ to $A$. The only remaining possibility is thus that $C=A$. Thus, in this case, we can immediately uniquely decompose $G$ into the unique decomposition of two copies of the $A \sqcup B$.

    Otherwise, there exists some edge $e$ between $A$ and $B$. Select either vertex in $e=\{u,v\}$, say $u$, and consider a maximal clique containing $u$ that also contains $e$. As previously shown, any maximal clique in $G$ containing the vertex $u$ must be one of the two disclique parts containing $u$, and it is not $A$ or $B$ as it contains $e$, so must be $C$ or $D$. Again, we can then recover the other part of this disclique by taking the complement.

    Hence we can recover $A \sqcup B$ and $C \sqcup D$, and the theorem is proven.
\end{proof}

We proceed by describing an algorithm, Algorithm \ref{alg:recover-clusters}, for recovering the multiset of clusters, as well as information about their descendant relation. We will recursively construct a Hasse diagram of clusters, with the relation being inclusion. The key observation is that for each currently identified cluster $C$, the rooted triples whose support lies in $C$ determine the unordered pair of induced subtrees of $T_1|_C$ and $T_2|_C$, obtained via the disclique composition. Much of the remainder of the work is then identifying which subtree pair belongs to which tree. At each stage, every already discovered cluster is correctly identified, its multiplicity is known, and its children in each input tree are either already determined or are recoverable from the disclique decomposition.

\begin{algorithm}[ht]
\caption{Recovering the Hasse diagram of clusters}
\label{alg:recover-clusters}
\begin{algorithmic}[1]
\State Initialize the Hasse diagram $\mathcal{H}$ with the cluster $X$.
\Statex
\While{there are clusters whose multiplicity has not been assigned}
    \State Choose a maximal cluster $C$ in $\mathcal{H}$ by inclusion among those with out-degree $0$.
    \If{$|C|=1$}
        \State Assign multiplicity $2$ to $C$.
    \Else
        \State Perform the disclique decomposition of $C$.
        \State Look for a cluster $C'$ in $\mathcal{H}$ with children $C'_1$ and $C'_2$
        \Statex \hspace{\algorithmicindent}such that $C'_1,C'_2$ partition $C'$, and
        \Statex \hspace{\algorithmicindent}$C \nsubseteq C'_1$ and $C \nsubseteq C'_2$.
        \If{no such cluster $C'$ exists}
            \State Add the $2$ or $4$ unique cliques from the disclique decomposition
            \Statex \hspace{\algorithmicindent} to $\mathcal{H}$ as children of $C$.
            \State Assign multiplicity $2$ to $C$.
        \Else
            \State Restrict $C'_1$ and $C'_2$ to $C$ to obtain a disclique.
            \State Discard this disclique from the pair obtained from $C$.
            \State Add the remaining pair of cliques to $\mathcal{H}$ as children of $C$.
            \State Assign multiplicity $1$ to $C$.
        \EndIf
    \EndIf
\EndWhile
\end{algorithmic}
\end{algorithm}

\begin{Theorem}\label{t:hasse.union}
    The Hasse diagram obtained by Algorithm \ref{alg:recover-clusters} is the union of the Hasse diagrams of the clusters of $T_1$ and $T_2$ by inclusion, together with a function mapping each cluster to its multiplicity. 
\end{Theorem}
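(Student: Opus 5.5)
We prove Theorem \ref{t:hasse.union} by induction on the number of iterations of the while loop. The invariant is the following: at the start of each iteration, every cluster in $\mathcal{H}$ is a genuine cluster of $T_1$ or $T_2$, and every cluster already assigned a multiplicity carries the correct multiplicity, namely $2$ if it is a cluster of both trees and $1$ if of exactly one. Moreover, the children recorded for each processed cluster $C$ are exactly the union of the children of $C$ in whichever of $T_1,T_2$ contain $C$. Finally, every unprocessed (out-degree $0$) cluster of size at least $2$ has its set of children in the true union diagram still to be determined. The base case is immediate, since $X$ is the root cluster of both trees and is added first. Termination holds because each processed cluster has strictly smaller children, and singletons are processed trivially; they are clusters of both trees, so multiplicity $2$ is correct. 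Thus at termination the invariant gives the full union with multiplicities.

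For the inductive step, choose a maximal unprocessed cluster $C$ with $|C|\ge 2$. By Theorem \ref{t:buildistwocliques}, $BG(R,C)$, where $R$ is the $3$-bucket, is the union of the two discliques given by the root splits of $T_1|_C$ and $T_2|_C$. By the preceding theorem these two discliques are uniquely recovered. Two cases arise.

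In the first case, $C$ is a cluster of both trees. Then $T_i|_C$ is literally the subtree below $C$ in $T_i$, so both discliques are splits of $C$ into children, and multiplicity $2$ is correct. We must show that in this case no $C'$ as in the algorithm exists. Suppose $C'\in\mathcal{H}$ has recorded children $C'_1,C'_2$ partitioning $C'$, with $C\not\subseteq C'_1$ and $C\not\subseteq C'_2$. By the invariant, $C'$ is a cluster of some $T_i$ with children $C'_1,C'_2$. By Theorem \ref{t:compatible.clusters}, $C$ is compatible with $C'_1$ and $C'_2$. Since $C$ is not contained in either child, $C$ either contains $C'$ or meets both children while being disjoint from or containing each. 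The nontrivial overlap $C\cap C'\neq\emptyset$ would force $C\supseteq C'$; the remaining possibility is that $C\cap C'=\emptyset$. So the precondition should really read ``$C\subsetneq C'$'' with $C$ not inside either child, which is impossible for a cluster of $T_i$. I would make this precise by interpreting the search as ranging over $C'\supsetneq C$, reading the algorithm's condition in this way.

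In the second case, $C$ is a cluster of $T_1$ only, say. Take $C'$ to be the minimal cluster of $T_2$ containing $C$. Then $C'$ is a strict superset, its children $C'_1,C'_2$ in $T_2$ both meet $C$, and $T_2|_C$ has root split $(C\cap C'_1,\,C\cap C'_2)$. Hence discarding the restriction of $C'_1,C'_2$ to $C$ leaves exactly the children of $C$ in $T_1$, and multiplicity $1$ is correct. The remaining points to check are the following. First, $C'$ is already in $\mathcal{H}$ with its children recorded. This follows because $C'\supsetneq C$ and the processing order is by maximality: ancestors are processed first, and I would argue inductively that all clusters of $T_2$ strictly containing $C$ are present. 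Second, any $C''$ found by the algorithm yields the same restricted disclique, which holds since every such $C''$ must be a $T_2$-cluster (for a $T_1$-cluster the containment argument above excludes it) whose children split $C$ as $T_2|_C$ does. Third, the case where the discarded disclique coincides with the kept one, i.e.\ $T_1|_C$ and $T_2|_C$ have the same root split, is harmless.

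The main obstacle is the bookkeeping in this second case. We must show that an appropriate $C'$ is found exactly when $C$ lies in only one tree. The delicate part is ensuring that clusters created as children are placed so that maximal-first processing guarantees all ancestors in both trees are handled beforehand. It must also be checked that a cluster appearing in both trees is not mistakenly detected twice as two separate nodes. I would handle the latter by showing that children from the two discliques with equal vertex sets are merged, which is consistent with the ``$2$ or $4$ unique cliques'' in the algorithm.
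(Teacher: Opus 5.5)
Your proof is correct and takes essentially the same route as the paper. Both argue by induction over the loop. For a cluster shared by both trees, you show that no $C'$ is found, so the $2$ or $4$ cliques are the true children and multiplicity $2$ is correct. For a cluster of only $T_1$, you find a cluster $C'$ of $T_2$ whose children both meet $C$; restricting those children to $C$ identifies the root split of $T_2|_C$, which is then discarded.

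Your reading of the search as ranging only over $C'\supsetneq C$ is necessary, not just a convenience. Read literally, the condition is satisfied by any already-processed cluster disjoint from $C$, such as a processed sibling with singleton children. The paper's compatibility argument for the shared case overlooks this. Your direct choice of $C'$ as the minimal $T_2$-cluster containing $C$ is also cleaner than the paper's detour through the minimal multiplicity-$2$ cluster $D$. So is your check that any such $C''$ yields the same restricted split.
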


\begin{proof}
  Certainly $X$ is a cluster in both trees. Proceeding by induction, suppose we have obtained a cluster $C$ of some input tree, without loss of generality say $T_1$, in a previous step of the algorithm, and $C$ is maximal with respect to inclusion of clusters obtained thus far in the algorithm. Note that if $|C|=1$ then it contains a single leaf, which is always a cluster of both trees and is assigned multiplicity two. Otherwise, assume $|C|>1$.
  
  Using the disclique decomposition, we obtain the multiset of major and minor clusters of $T_1|_C$ and $T_2|_C$. 
  
  Suppose $C$ is a cluster of $T_2$ in addition to $T_1$. Then there exists no obtained $C'$ with children (in the Hasse diagram) $C'_1$ and $C'_2$ such that 
  
  $C$ is a subset of neither $C'_1$ or $C'_2$, as otherwise $C'_1$ and $C'_2$ would not be cluster-compatible with $C$ (by Theorem \ref{t:compatible.clusters}), and therefore could not be clusters of either tree. Additionally, in particular, $T_1|_C$ and $T_2|_C$ are precisely the subtrees of $T_1$ and $T_2$ rooted at the vertex corresponding to $C$, and thus must be clusters of some input tree. Additionally, since $C$ appears in both trees, it must have multiplicity two and thus the algorithm correctly identifies its multiplicity. 

  Now we suppose $C$ is a cluster of only one input tree, say $T_1$. In this case, we claim there exists some obtained $C'$ (in $T_2$) with children (in the Hasse diagram) $C'_1$ and $C'_2$ such that 
  
  $C$ is a subset of neither $C'_1$ or $C'_2$. Let $D$ be the minimal cluster containing $C$ that has multiplicity $2$. The cluster $D$ must have four children in the Hasse diagram, as if it only had two its children are present in both trees. One of these children contains $C$, call it $D_1$, and has a corresponding cluster $D_2$ such that $D_1 \sqcup D_2 = D$. Consider a minimal descendant $M$, by inclusion, of the other two children of $D$ that is a superset of $C$, including themselves (if none exist, then both the other two children must have non-trivial intersection with $C$, and we have found our $C'_1$ and $C'_2$). Otherwise, of the two to four descendants of $M$ (which must have descendants, as $M$ must be considered before $C$ in the algorithm), must have a pair which union to be $M$, none of which are $C$ as it is not present in both trees and neither can be a superset, so must both have non-trivial intersection and we again have found our $C'_1$ and $C'_2$.

  It follows that the restriction of $C'_1$ and $C'_2$ to $C$ are the clusters of the major and minor subtrees of $T_2|_C$, so the other pair of clusters must be the clusters of $T_1$. Additionally, since $C$ appears in only one tree, it must have multiplicity one and thus the algorithm correctly identifies its multiplicity. 

  Finally, we note that each introduction of an edge $(C_1,C_2)$ in the Hasse diagram indicates that $C_2$ is a subcluster of $C_1$ in some input tree, and the theorem follows.  
\end{proof}

This implies the following useful corollary. 

\begin{Corollary}\label{labelled_triples_same_clusters}
    Let $T_1$, and $T_2$ be labelled phylogenetic trees on $X$. Then the multiset of clusters of $T_1$ and $T_2$ is recoverable from their $3$-bucket.
\end{Corollary}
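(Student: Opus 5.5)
The corollary should follow almost directly from Theorem \ref{t:hasse.union}, so the plan is to read off the multiset of clusters from the output of Algorithm \ref{alg:recover-clusters}. The key point to stress is that the algorithm takes only the $3$-bucket $R = B_3(\{T_1,T_2\})$ as input. It builds the BUILD graphs $BG(R,C)$, performs disclique decompositions (which are unique by the preceding theorem), and makes inclusion comparisons among clusters it has already produced. None of these steps refers to $T_1$ or $T_2$ beyond $R$. The output is therefore a function of $R$ alone. By Theorem \ref{t:hasse.union}, this output is the union of the cluster Hasse diagrams of $T_1$ and $T_2$, together with the correct multiplicity of each cluster.

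So the proof will proceed as follows.

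\begin{enumerate}
\item Observe that the vertex set of the Hasse diagram $\mathcal{H}$ produced from $R$ is exactly the set of clusters appearing in $T_1$ or $T_2$.
\item Observe that each vertex carries multiplicity $2$ if it is a cluster of both trees and $1$ otherwise.
\item Conclude that the multiset $\{\{C : C \text{ a cluster of } T_1\}\} \cup \{\{C : C \text{ a cluster of } T_2\}\}$ equals the multiset obtained by taking each vertex of $\mathcal{H}$ with its assigned multiplicity.
\item Suppose $T_3,T_4$ satisfy $B_3(\{T_3,T_4\}) = R$. Running the same deterministic algorithm on the same input gives the same output. Hence the cluster multisets of $\{T_1,T_2\}$ and $\{T_3,T_4\}$ coincide, which is exactly recoverability in the sense of our definition, with $f$ the cluster-set function and the multiset union taken.
\end{enumerate}

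The only subtle point, and what I would check carefully, is that the algorithm is genuinely well defined as a function of $R$. Two places need attention. First, when a cluster $C'$ with the required children exists, the choice of $C'$ must not affect which disclique is discarded. Second, the order in which maximal clusters are processed must not matter. For the first, any valid $C'$ restricts to the major/minor split of the \emph{other} tree on $C$, as shown in the proof of Theorem \ref{t:hasse.union}, so all choices give the same discarded pair. The second is handled by the same theorem's inductive invariant, which holds regardless of order. Given this, the corollary is immediate.
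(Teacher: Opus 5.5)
Your proposal is correct and follows the paper's route: the paper states this corollary as an immediate consequence of Theorem \ref{t:hasse.union}, reading the cluster multiset off the multiplicity-labelled Hasse diagram that Algorithm \ref{alg:recover-clusters} builds from the $3$-bucket alone. Your well-definedness check holds if the search for $C'$ is read as also requiring $C \subseteq C'$ (as literally written, a cluster $C'$ disjoint from $C$ would trivially qualify), and you can skip it entirely: applying Theorem \ref{t:hasse.union} to both $\{T_1,T_2\}$ and $\{T_3,T_4\}$ for one fixed run of the algorithm on the common bucket already forces their cluster multisets to agree.
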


By considering the $2$ element clusters we obtain the following additional Corollary.

\begin{Corollary}\label{labelled_triples_same_cherries}
    Let $T_1$, and $T_2$ be labelled phylogenetic trees on $X$. Then the multiset of cherries of $T_1$ and $T_2$ is recoverable.
\end{Corollary}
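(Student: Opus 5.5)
The plan is to read the cherries directly off the cluster multiset supplied by Corollary \ref{labelled_triples_same_clusters}. We first fix what "cherry" means in the labelled setting: an unordered pair $\{a,b\}$ of leaves with a common parent. For a labelled tree on $X$ with $|X|\ge 2$, a two-element set $\{a,b\}\subseteq X$ is a cluster if and only if $\{a,b\}$ is a cherry. Indeed, a two-element cluster is the leaf set below some vertex $v$; since $v$ has out-degree $2$ and both children must be leaves, $a$ and $b$ share the parent $v$. Conversely, the parent of a cherry has exactly the cluster $\{a,b\}$. This identification is exact, with the same multiplicities, because distinct vertices of a single tree have distinct clusters, and so each cherry of a given tree gives exactly one two-element cluster of that tree.

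Next, let $\mathcal{C}$ be the multiset union of the cluster multisets of $T_1$ and $T_2$. By Corollary \ref{labelled_triples_same_clusters}, $\mathcal{C}$ is determined by $B_3(\{T_1,T_2\})$. Restrict $\mathcal{C}$ to its two-element members, keeping multiplicities. By the correspondence above, the result is exactly the multiset union of the cherries of $T_1$ and of $T_2$. A cherry common to both trees appears with multiplicity $2$, and one belonging to a single tree appears with multiplicity $1$.

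Now suppose $T_3,T_4$ satisfy $B_3(\{T_3,T_4\})=B_3(\{T_1,T_2\})$. Then the two pairs have the same cluster multiset, and hence the same multiset of two-element clusters. So the multiset of cherries of $T_3$ and $T_4$ equals that of $T_1$ and $T_2$, which is the sense in which the multiset is recoverable.

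The only genuine content lies in Corollary \ref{labelled_triples_same_clusters}, which is assumed, so there is no real obstacle. One might worry that the statement means recovering the pair $\{\text{cherries}(T_1),\text{cherries}(T_2)\}$ tree by tree. That stronger version is false in general: a subtree swap can move cherries between the two trees, so only the pooled multiset can be recovered. The phrasing ``the multiset of cherries of $T_1$ and $T_2$'' matches the pooled reading, and we will state explicitly that this is the interpretation proved.
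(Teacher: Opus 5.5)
Your proof is correct and follows the paper's own route: the paper obtains this corollary by restricting the recovered cluster multiset of Corollary \ref{labelled_triples_same_clusters} to its two-element clusters. You have additionally spelled out why two-element clusters correspond exactly to cherries with matching multiplicities, and your pooled-multiset reading is the one the paper relies on later, when it asks whether a cherry has multiplicity $2$ in the $(n-1)$-bucket argument.
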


\begin{Lemma}\label{l:split.clusterset}
    Let $T_1$ and $T_2$ be labelled phylogenetic trees on $X$ (with $|X|>2$) with cluster sets $C(T_1)$ and $C(T_2)$ such that $C(T_1) \cap C(T_2)$ consists only of $X$ and the singleton sets. Then $T_1$ and $T_2$ are recoverable from their $3$-bucket. 
\end{Lemma}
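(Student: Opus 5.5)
We will show that, under the hypothesis of the lemma, the multiset of clusters obtained from the $3$-bucket (Corollary \ref{labelled_triples_same_clusters}) splits into the two cluster sets $C(T_1)$ and $C(T_2)$ in exactly one way. This determines $T_1$ and $T_2$ by Theorem \ref{t:compatible.clusters}: a tree is determined by its cluster set.

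First, by Theorem \ref{t:hasse.union} the bucket determines the multiset $\mathcal{M}=C(T_1)\uplus C(T_2)$. The clusters of multiplicity two are exactly $C(T_1)\cap C(T_2)$, which by hypothesis are $X$ and the singletons. These belong to both trees and need no assignment. Every other cluster (the nontrivial ones) has multiplicity one and belongs to exactly one of the two trees.

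Next, we build the incompatibility graph $G$ on the nontrivial clusters. Its vertices are the nontrivial clusters, with an edge between two clusters exactly when they are not cluster-compatible. Two clusters of the same tree are compatible, so $T_1$ and $T_2$ give a proper $2$-colouring of $G$. The uniqueness of the split then reduces to showing that $G$ is connected, or at least that any other valid $2$-colouring would force a common nontrivial cluster. We use the root. Let $A_1\sqcup B_1=X$ be the root split of $T_1$ and $A_2\sqcup B_2=X$ that of $T_2$. Since $|X|>2$, at least one of $A_1,B_1$ is nontrivial. The root splits cannot coincide, since otherwise $A_1$ would be a shared cluster; the exception is when both parts are singletons, which is impossible as $|X|>2$. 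A non-singleton root part of $T_1$ must be incompatible with some root part of $T_2$. Otherwise, since $A_2,B_2$ partition $X$, it would be contained in one of them, say $A_1\subseteq A_2$, and then $B_2\subseteq B_1$.

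Rather than prove connectivity of $G$ in general, which may fail, we plan an alternative assignment. We attach each nontrivial cluster to the tree whose root-split part contains it, and then use the Hasse diagram from Algorithm \ref{alg:recover-clusters}. That algorithm already records for each cluster its children in the input tree it came from. Every nontrivial cluster $C$ has multiplicity one, so its parent edge in $\mathcal{H}$ comes from the tree containing it. Following parent edges from $C$ upward, each cluster visited has multiplicity one until we reach $X$. The chain therefore determines which of the children pairs of $X$ (the two root splits) $C$ lies beneath. We must make sure this chain is unambiguous, in the sense that a multiplicity-one cluster has a unique parent recorded in $\mathcal{H}$. This holds because multiplicity-one clusters occur in only one tree, whose cluster set is laminar. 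Grouping the clusters under $A_1\sqcup B_1$ and under $A_2\sqcup B_2$ then gives $C(T_1)$ and $C(T_2)$.

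The main obstacle is the first step: identifying which two children of $X$ in $\mathcal{H}$ form the root split of $T_1$ and which form that of $T_2$. The disclique decomposition of $BG(R,X)$ gives the two root splits as unordered discliques, uniquely by the preceding theorem, and this resolves the obstacle. Any competing pair $\{T_3,T_4\}$ would have the same root discliques and the same unique parent chains, hence the same cluster sets. We conclude $\{T_3,T_4\}=\{T_1,T_2\}$.
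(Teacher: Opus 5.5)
Your proposal is correct and follows essentially the paper's argument: both use the Hasse diagram union recovered by Theorem \ref{t:hasse.union}, note that every non-singleton cluster other than $X$ has multiplicity one and therefore a unique parent (in-degree $1$), and propagate tree membership from the two root splits. The only difference is direction: you trace each cluster's parent chain upward to a child of $X$, while the paper propagates membership downward from the root splits. Two minor remarks: the incompatibility-graph detour is unnecessary, and your initial phrasing ``the tree whose root-split part contains it'' can be ambiguous, since a cluster may lie inside root parts of both trees; it is the unique Hasse parent chain, which you do end up using, that makes the assignment well defined.
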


\begin{proof}
    By Theorem \ref{t:hasse.union}, the union of the Hasse diagrams is recoverable. We now consider the linked pairs of clusters descended from $X$ in the Hasse diagram (i.e. the pair $A$ and $B$, which partition $X$, and the pair $C$ and $D$, which partition $X$). Each of these clusters must be distinct - if some cluster is not a singleton, then by the assumption that $C(T_1) \cap C(T_2)$ consists only of $X$ and the singleton sets, or if a cluster, say $A$, is a singleton, then by the fact that this forces $B=X \backslash A$, which is not a singleton, and so neither $C$ nor $D$ can be equal to $A$ as this would force the other to be $X \backslash A$ again, contradicting the $C(T_1) \cap C(T_2)$ assumption again. 

    We then consider any cluster $M$ in the diagram that is not a singleton or $X$. We claim that $M$ must have in-degree $1$. Certainly this is true for $A,B,C$ and $D$. Otherwise, suppose $M$ has in-degree greater than one. The cluster $M$ may only obtain a maximum in-degree of $1$ from a given tree. It follows that therefore $M$ must be a cluster of both trees, which contradicts the assumptions of the theorem. 
    
    Therefore every cluster in the Hasse diagram belongs to a unique tree. Consequently, once the tree containing a cluster $C$ has been
    identified, every non-singleton child of $C$ must belong to the same tree, as an edge from $C$ to such a child is an edge in the Hasse diagram of that tree. Starting with the two root splits identified above, this determines inductively which tree contains every proper non-singleton cluster. Singleton clusters may occur in both trees, but this causes no ambiguity, since their parent has already been assigned to a tree.

    Thus, the Hasse diagram union has a unique decomposition, up to exchanging $T_1$ and $T_2$, into the two Hasse diagrams of $T_1$ and $T_2$. Hence $T_1$ and $T_2$ are recoverable from their $3$-bucket.
\end{proof}

We are now finally able to prove the main result of this section, precisely classifying recoverability of pairs of trees from their $3$-buckets up to subtree swaps.

\begin{Theorem}\label{t:full_subtree_swap}
    Let $T_1$ and $T_2$ be labelled phylogenetic trees that have a $k$-bucket from which the multiset of clusters of the input trees is recoverable. Then $T_1$ and $T_2$ are recoverable up to a finite sequence of subtree swaps.

    In particular, a pair of binary phylogenetic trees $T_1$ and $T_2$ is always recoverable from their $3$-bucket, up to a finite sequence of subtree swaps.
\end{Theorem}

\begin{proof}
We prove this claim via induction on the number of leaves. Certainly the claim is true for $n<4$, so we suppose that the claim is true for $n-1$ leaves. Let $A_1$ denote a minimal subcluster which appears in both $T_1$ and $T_2$ and is not a singleton of $X$. If no such $A$ exists then Lemma \ref{l:split.clusterset} applies and $T_1$ and $T_2$ are recoverable. 

Otherwise we then perform a quotient, in which we delete the subclusters of $A_1$, then each other cluster $C$ in the multiset for which $A_1 \subseteq C$ is replaced by $C \backslash A_1 \cup \{a_1\}$, where $a_1$ is a new leaf. Contracting a common cluster $A_1$ in both trees to a new leaf $a$ sends the multiset of clusters of the pair to the corresponding multiset of clusters of the contracted pair, with the clusters contained in $A_1$ removed. This results in a cluster multiset on fewer leaves, which by the induction hypothesis recovers two unique trees up to subtree swaps.

Furthermore, the subclusters of $A_1$ are able to be recovered as two subtrees $T_A$ and $T_B$ by Lemma \ref{l:split.clusterset} (as due to minimality of $A_1$, the trees $T_A$ and $T_B$ have no clusters in common), and either can be adjoined to either recovered quotient tree by identifying the root of $T_1$ or $T_2$ with $a_1$, recovering our original pair of trees up to one additional possible subtree swap. 

Finally, the fact that a pair of binary phylogenetic trees $T_1$ and $T_2$ is always recoverable from their $3$-bucket, up to a finite sequence of subtree swaps follows from the above proof, together with Corollary \ref{labelled_triples_same_clusters}.
\end{proof}

We will, in the next section, consider the same problem for the $(n-1)$-bucket. In this case, we again achieve recoverability on all but a finite set of pathological examples.

\section{Labelled trees and \texorpdfstring{$(n-1)$}{(n-1)}-buckets}\label{s:labeln-1}

We now consider the $(n-1)$-buckets of pairs of labelled trees. We first show that it is sufficient to find four distinct subtrees on $(n-1)$ leaves.

\begin{Theorem} \label{n-1_labelled_recover_from_4}
    Let $T$ be a labelled phylogenetic tree on $X$, where $|X|=n$, and let $T_a$, $T_b$, $T_c$, and $T_d$ be distinct subtrees of $T$ with $n-1$ leaves. Then $T$ is the unique tree on $X$ with $T_a$, $T_b$, $T_c$, and $T_d$ as subtrees.
\end{Theorem}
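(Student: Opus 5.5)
We prove that the four subtrees determine every rooted triple of $T$, and then invoke Theorem \ref{t:3-recoverable} (equivalently Theorem \ref{t:one_tree_labelled_recovery}). Since $T_a,T_b,T_c,T_d$ are distinct labelled subtrees on $n-1$ leaves, each is $T|_{X\setminus\{x\}}$ for a distinct leaf $x$. So there are four distinct leaves $x_a,x_b,x_c,x_d$ with $T_a=T|_{X\setminus\{x_a\}}$, and so on. Because $T_a$ is labelled, its leaf set $X\setminus\{x_a\}$, and hence $x_a$, is read off from $T_a$ itself.

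Now let $T'$ be any tree on $X$ displaying all four subtrees. Take any $3$-subset $\{p,q,r\}\subseteq X$. At most three of the four removed leaves lie in $\{p,q,r\}$, so some $x_i\notin\{p,q,r\}$. Then $\{p,q,r\}\subseteq X\setminus\{x_i\}$, so the triple of $T'$ on $\{p,q,r\}$ equals the triple of $T_i$ on $\{p,q,r\}$. This in turn equals the triple of $T$ on $\{p,q,r\}$, because $T_i$ is a subtree of $T$ and restriction composes: $(T|_{Y})|_Z=T|_Z$ for $Z\subseteq Y$.

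It follows that $T'$ and $T$ display exactly the same rooted triple on every $3$-subset, so $M_3(T')=M_3(T)$. By Theorem \ref{t:3-recoverable}, $T'=T$, which proves uniqueness. Existence is immediate, since $T$ itself displays all four subtrees.

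The only point needing care is the pigeonhole step: three subtrees would not suffice, because the triple on $\{x_a,x_b,x_c\}$ would then be undetermined. This is where the hypothesis of four \emph{distinct} subtrees is used. I would also state explicitly that restriction of a labelled tree is transitive, which follows directly from the definition of induced subtrees (minimal connected subgraph, then suppression).
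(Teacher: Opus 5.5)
Your proof is correct and matches the paper's argument: each $3$-subset of $X$ avoids at least one of the four distinct deleted leaves, so any tree displaying $T_a,\dots,T_d$ has the same rooted triples as $T$, and Theorem \ref{t:3-recoverable} then forces uniqueness. You also spell out two steps the paper leaves implicit: that distinct labelled subtrees on $n-1$ leaves correspond to distinct deleted leaves, and that restriction is transitive. Your closing remark about three subtrees is only a heuristic, since the missing triple can sometimes be inferred from the others, but nothing in the proof depends on it.
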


\begin{proof}
    Consider some $x, y, z \in X$. Then one of $T_a$, $T_b$, $T_c$ and $T_d$ contain $x$, $y$, and $z$ as leaves. Let this tree be $T'$. Consider the rooted triple of $x$, $y$, $z$ in $T'$. As $T'$ is a subtree of $T$, this rooted triple is also a rooted triple of $T$. As this holds for any choice of $x$, $y$, and $z$, any tree with $T_a$, $T_b$, $T_c$, and $T_d$ as subtrees must have the same set of rooted triples as $T$. But by Theorem \ref{t:3-recoverable}, $T$ is the unique tree with its rooted triples. Thus $T$ is the unique tree on $X$ with $T_a$, $T_b$, $T_c$, and $T_d$ as subtrees.
\end{proof}

If one can find these four distinct trees, then recovery is possible. We will show that this is always possible for trees on $n > 6$ leaves.

\begin{Theorem}
    Let $T_1$ and $T_2$ be labelled phylogenetic trees on $X$. If $|X| = n > 6$, then $T_1$ and $T_2$ are recoverable from their $(n-1)$-bucket.
\end{Theorem}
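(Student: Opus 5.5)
The plan is to use Theorem \ref{n-1_labelled_recover_from_4}. I will identify, directly from $B_{n-1}(\{T_1,T_2\})$, four distinct $(n-1)$-leaf trees that must all be subtrees of the same input tree. This recovers that input tree, and Lemma \ref{labelled_buckets_one_from_other} then gives the other. The bucket is indexed by the deleted leaf: for each $x\in X$ there are exactly two trees on $X\setminus\{x\}$, namely $T_1|_{X\setminus\{x\}}$ and $T_2|_{X\setminus\{x\}}$. Call $x$ \emph{agreeing} if these two coincide and \emph{disagreeing} otherwise. If every $x$ is agreeing, then $T_1$ and $T_2$ have identical $(n-1)$-multidecks, so $T_1=T_2$ by Theorem \ref{t:one_tree_labelled_recovery} and there is nothing to prove. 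In general, for agreeing $x$ the tree on $X\setminus\{x\}$ is a subtree of both inputs, so it can be attributed to either one.

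The first case is when there are at least four agreeing leaves $x_1,\dots,x_4$. The four trees on $X\setminus\{x_i\}$ are distinct because their supports differ. Each is a subtree of $T_1$, so Theorem \ref{n-1_labelled_recover_from_4} determines $T_1$. Any other pair $\{T_3,T_4\}$ with the same bucket also has these four trees as common subtrees, since each appears with multiplicity two. Hence $T_3=T_4$ would be forced to equal $T_1$ as well, and the complement argument finishes the case.

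The remaining case, with at most three agreeing leaves and hence at least $n-3\geq 4$ disagreeing leaves, is the main obstacle. For each disagreeing $x$ there is an unordered pair $\{P_x,Q_x\}$ of distinct trees, and the difficulty is to choose one from each pair consistently. For two disagreeing leaves $x,y$, restrict the trees to $X\setminus\{x,y\}$. The restrictions of $T_1|_{X\setminus\{x\}}$ and $T_1|_{X\setminus\{y\}}$ must agree there. I will argue that, for a suitable choice of $x,y$, this compatibility forces a unique matching between $\{P_x,Q_x\}$ and $\{P_y,Q_y\}$. The matching is ambiguous only when the two restrictions to $X\setminus\{x,y\}$ coincide, that is, when $T_1$ and $T_2$ agree after deleting both $x$ and $y$. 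I would show that if $T_1\neq T_2$ and $n>6$, the graph on disagreeing leaves, with an edge whenever the matching is forced, is connected on some set of at least four vertices. Propagating the matching along a spanning tree then gives four distinct trees lying in a single input, and Theorem \ref{n-1_labelled_recover_from_4} applies.

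To prove that connectivity, I would use the triple structure. $T_1$ and $T_2$ differ on some set of rooted triples, and a pair $x,y$ gives an ambiguous matching exactly when every differing triple meets $\{x,y\}$. Every differing triple uses at least one leaf outside $\{x,y\}$, so if many pairs are ambiguous, the differing triples must concentrate on a small set of leaves. This bounds the configurations severely, and the leftover small cases are where I would expect $n\le 6$ to be essential and where I would check by hand. I expect this counting to be the delicate step. A fallback is to mix the two cases: use the agreeing trees, at most three of them, together with one consistently matched disagreeing chain to reach four.
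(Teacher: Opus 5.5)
Your setup is correct. The bucket splits by support into pairs $\{P_x,Q_x\}$. Four agreeing leaves in fact force $T_1=T_2$: both inputs contain the same four distinct $(n-1)$-leaf trees, so Theorem \ref{n-1_labelled_recover_from_4} applies to each. For disagreeing $x,y$, the pairing is forced, for every pair of trees with this bucket, exactly when $P_x|_{X\setminus\{x,y\}}\neq Q_x|_{X\setminus\{x,y\}}$.

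The gap is that everything rests on one claim you never prove: when $T_1\neq T_2$ and $n>6$, the forced-pairing graph contains a connected set of at least four disagreeing leaves. Saying that differing triples must ``concentrate on a small set of leaves'' and then checking unspecified small cases by hand is not an argument. As written, the proof stops exactly at the step where $n>6$ is used.

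The missing observation is short, and it makes both the counting and the fallback unnecessary. Fix a $3$-set $t$ on which $T_1$ and $T_2$ display different rooted triples; one exists because $T_1\neq T_2$, by Theorem \ref{t:3-recoverable}.
\begin{itemize}
\item For any $x\notin t$, the two trees on $X\setminus\{x\}$ still differ on $t$, so $x$ is disagreeing.
\item For distinct $x,y\notin t$, the restrictions to $X\setminus\{x,y\}$ still differ on $t$, so $xy$ is a forced edge.
\end{itemize}
Thus $X\setminus t$ is a clique of $n-3\geq 4$ disagreeing leaves. Your propagation (a star suffices) then yields four distinct trees lying in one input.

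Equivalently, the $n-3$ bucket trees that contain $t$ and display $T_1|_t$ are pairwise distinct. They must all come from the same member of any pair with this bucket, since the other member displays $T_2|_t\neq T_1|_t$. So Theorem \ref{n-1_labelled_recover_from_4} and Lemma \ref{labelled_buckets_one_from_other} finish the proof.

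Completed this way, your argument genuinely differs from the paper's. The paper works with a cherry rather than a triple:
\begin{itemize}
\item It obtains the multiset of cherries from the $3$-bucket via Corollary \ref{labelled_triples_same_cherries}, which relies on the cluster-recovery machinery of Section \ref{s:label3}.
\item It treats a shared cherry separately, by re-attaching a leaf.
\item Otherwise it takes a cherry $(a,b)$ of exactly one tree and finds at least $n-3$ bucket trees containing $a$ and $b$ but not as a cherry, all coming from the other tree.
\end{itemize}
Your triple-based version needs only the fact that distinct trees differ on some rooted triple. It also avoids the cherry case split, and it produces exactly $n-3$ usable subtrees rather than a lower bound.
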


\begin{proof}
    By Theorem \ref{smaller_bucket_from_larger}, we can determine $B_3(\{T_1, T_2\})$ from $B_{n-1}(\{T_1, T_2\})$. Thus by Corollary \ref{labelled_triples_same_cherries}, we can determine the multiset of cherries of $T_1$ and $T_2$.
    
    Suppose there is some cherry $(a, b)$ in this multiset with multiplicity 2. Then $(a, b)$ is a cherry of both $T_1$ and $T_2$. Consider the subtrees in $B_{n-1}(\{T_1, T_2\})$ which do not have $a$ is a leaf. As we know that $(a, b)$ is a cherry in both $T_1$ and $T_2$ if we modify these subtrees by subdividing the edge leading to $b$, and attaching $a$ as a child of this new vertex we must obtain $T_1$ and $T_2$, thus $T_1$ and $T_2$ are recoverable from their $(n-1)$-bucket.
    
    Now suppose that no cherries in the multiset of cherries of $T_1$ and $T_2$ have multiplicity 2. Then there must be some cherry $(a, b)$ which is a cherry of exactly one of $T_1$ and $T_2$. Say without loss of generality that $(a, b)$ is a cherry of $T_1$. Then any subtree in $B_{n-1}(\{T_1, T_2\})$ which has both $a$ and $b$ as leaves, but not a cherry must be a subtree of $T_2$. Suppose there is some leaf $c \in X$ such that the subtree of $T_2$ obtained by removing $c$ has $(a, b)$ as a cherry. Then either $(a, c)$ or $(b, c)$ is a cherry of $T_2$, and so $(a, b)$ is not a cherry of any other subtree of $T_2$ with $n-1$ leaves, and so, as $T_2$ has at least 7 leaves, $B_{n-1}(\{T_1, T_2\})$ contains at least $7-3=4$ trees which contain both $a$ and $b$ as leaves, but not a cherry. If no such $c$ exists then $B_{n-1}(\{T_1, T_2\})$ contains at least $4$ trees which contain both $a$ and $b$ as leaves, but not a cherry. Thus there are at least $4$ trees in $B_{n-1}(\{T_1, T_2\})$ which must be subtrees of only $T_2$, and so by Theorem \ref{n-1_labelled_recover_from_4}, $T_2$ is recoverable from $B_{n-1}(\{T_1, T_2\})$. If we then consider the $(n-1)$-multideck of $T_2$, we can determine the $(n-1)$-multideck of $T_1$, which has at least 4 elements as $n>6 \geq 4$ and so by Theorem \ref{n-1_labelled_recover_from_4} $T_1$ is also recoverable, and so $T_1$ and $T_2$ are recoverable from their $(n-1)$-bucket.
\end{proof}

Finally, we consider the remaining possible trees on $n \le 6$ leaves.

\begin{Theorem}
    Let $T_1$ and $T_2$ be labelled phylogenetic trees on $X$. If $n \le 4$, $T_1$ and $T_2$ are recoverable from their $(n-1)$-bucket. Furthermore, if $n=5$ or $6$, $T_1$ and $T_2$ are recoverable from their $(n-1)$-bucket unless they are the pathological examples depicted in Figures \ref{fig:five_leaf_labelled_counterexample_1}, \ref{fig:five_leaf_labelled_counterexample_2} and \ref{fig:six_leaf_labelled_counterexample}, up to relabelling of the leaves.
\end{Theorem}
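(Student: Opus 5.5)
Let $B=B_{n-1}(\{T_1,T_2\})$. I would reuse the machinery of the proof for $n>6$ and examine exactly where it breaks. For $n\le 3$ there is a unique labelled tree on $X$ when $n\le 2$, and for $n=3$ the bucket consists of two rooted cherries (two-leaf subtrees, which carry no information). So I will first check the small cases directly. For $n=3$ I must be careful: the $2$-bucket is six copies of the unique $2$-leaf tree, so recoverability fails unless it is meant trivially. I expect the statement intends the trivial range, and would verify that and phrase it accordingly. For $n=4$ the $3$-bucket is the full $3$-bucket. I would apply Corollary \ref{labelled_triples_same_clusters} to recover the multiset of clusters, and then observe that on four leaves every nontrivial cluster has size $2$ or $3$. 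A subtree swap (Theorem \ref{t:full_subtree_swap}) requires a common cluster $A$ with $|A|\ge 3$ and $n-|A|\ge 2$, which is impossible for $n=4$. Hence the swap sequence is trivial and the pair is recovered.

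For $n=5,6$ I would follow the $n>6$ argument. First I recover the multiset of cherries from the $3$-bucket, via Theorem \ref{smaller_bucket_from_larger} and Corollary \ref{labelled_triples_same_cherries}. If some cherry has multiplicity $2$, the reattachment argument works for any $n$, so the pair is recoverable. Otherwise, take a cherry $(a,b)$ of $T_1$ only. The proof found at least $n-3$ members of $B$ that contain $a,b$ but not as a cherry; these are subtrees of $T_2$ only. When at least four of them are \emph{distinct}, Theorem \ref{n-1_labelled_recover_from_4} recovers $T_2$, and Lemma \ref{labelled_buckets_one_from_other} recovers $T_1$. So the failures are confined to the situation where, for every choice of cherry exclusive to one tree (trying both trees), fewer than four distinct identifiable subtrees arise. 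When $n=5$ we have $n-3=2$, and when $n=6$ we have $n-3=3$, which is why the argument can fail here.

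I would enumerate the remaining configurations up to relabelling. Any identified subtrees give constraints: each determines all triples on its support. Labelled $5$-leaf trees come in three shapes, and $6$-leaf trees in six shapes. Because leaf-deletion subtrees differing in the deleted leaf have different supports, the bucket splits by support into pairs $\{T_1\setminus x, T_2\setminus x\}$ for each $x$. Recovering the pair therefore amounts to choosing, for each $x$, which member of the pair goes to $T_1$, subject to consistency; by Theorem \ref{n-1_labelled_recover_from_4} this is determined once four assignments are fixed. The pair is unrecoverable exactly when some alternative assignment of the supports $X\setminus\{x\}$ that agree with $T_1$ is globally consistent. This is the framework of Observation \ref{o:infer.closed}. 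I would run this systematically (by hand, or by a computer check over the finitely many pairs on $5$ and $6$ labelled leaves) and match the outcomes against Figures \ref{fig:five_leaf_labelled_counterexample_1}, \ref{fig:five_leaf_labelled_counterexample_2} and \ref{fig:six_leaf_labelled_counterexample}.

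The main obstacle is the exhaustive case analysis for $n=6$. There are $945$ labelled trees, so by hand I would first use cluster-multiset recovery (Corollary \ref{labelled_triples_same_clusters}) and Theorem \ref{t:full_subtree_swap}. Together these force any alternative pair to differ from the original by swaps of common clusters of size $3$ or $4$. That leaves only a few shape combinations, for which I would check directly whether the swap also preserves the $(n-1)$-bucket.
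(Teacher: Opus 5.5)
Your plan is essentially the paper's proof: $n=4$ via Corollary~\ref{labelled_triples_same_clusters} and Theorem~\ref{t:full_subtree_swap} (no nontrivial subtree swap exists on four leaves), and $n=5,6$ via exhaustive, computer-assisted enumeration, which is all the paper offers for those cases. Your worry about $n=3$ is justified, and you should state it as a correction rather than hedge: there are three labelled trees on $\{a,b,c\}$, and every pair has the same $2$-bucket (two copies of each labelled two-leaf tree), so for instance $\{ab|c,ab|c\}$ and $\{ac|b,bc|a\}$ cannot be distinguished and the statement is false at $n=3$; the paper's justification that there is only one tree on $n\le 3$ leaves holds only for unlabelled trees, or for $n\le 2$.
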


\begin{proof}
    If $n \le 3$, there is only one tree on $n$ leaves, so all possible pairs of trees are recoverable.  

    For $n=4$, by Theorem \ref{t:full_subtree_swap}, the only possible way to generate two distinct pairs of trees with the same $(n-1)$-bucket is via subtree swaps. However, subtree swaps of subtrees on $1$ or $2$ leaves do not result in a distinct pair of trees (because there is only one possible tree on each of one leaf and two leaves), and if a $3$-leaf subtree is swapped, the fourth leaf must be a leaf child of the root and the unique remaining fourth leaf, and hence any subtree swap results in the same pair of trees in reverse order.

    For $n=5$ and $n=6$, we performed a computer-assisted search, for which our code is provided as supplementary data. Up to relabelling of the leaves there exist $2$ pairs of pairs of trees on $5$ leaves which cannot be recovered from their $4$-bucket, shown in Figures \ref{fig:five_leaf_labelled_counterexample_1} and \ref{fig:five_leaf_labelled_counterexample_2} (to be explicit, by a relabelling of the leaves we mean we consider two pairs of trees on $X$ to be equivalent if there exists a permutation of the leaf labels which transforms one pair into the other). Up to relabelling of the leaves there exists a single pair of pairs of trees on $6$ leaves which cannot be recovered from their $5$-bucket, shown in Figure \ref{fig:six_leaf_labelled_counterexample}.
\end{proof}

\begin{figure}[ht]
    \centering
    \begin{subfigure}[b]{0.4\textwidth}
        \caption*{$A_1$}
        \includegraphics[width=\textwidth]{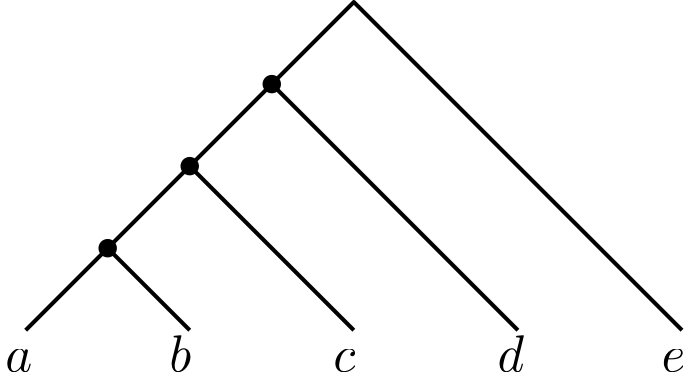}
    \end{subfigure}
    \hfill
    \begin{subfigure}[b]{0.4\textwidth}
        \caption*{$A_2$}
        \includegraphics[width=\textwidth]{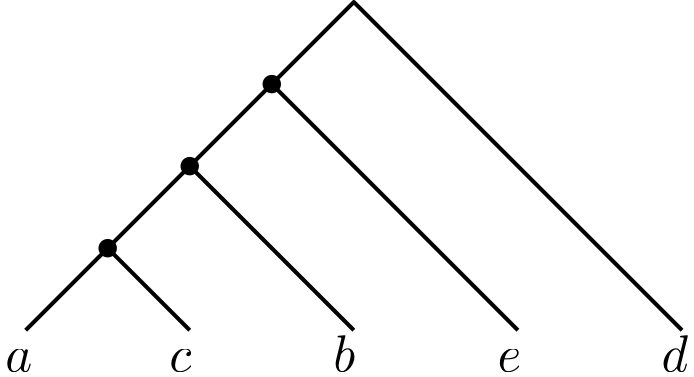}
    \end{subfigure}
    \newline\newline
    \begin{subfigure}[b]{0.4\textwidth}
        \caption*{$B_1$}
        \includegraphics[width=\textwidth]{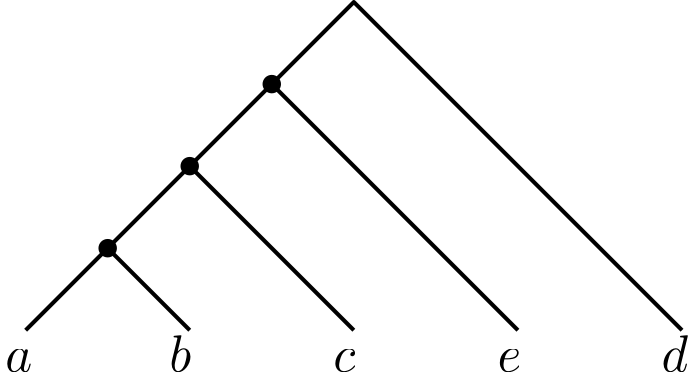}
    \end{subfigure}
    \hfill
    \begin{subfigure}[b]{0.4\textwidth}
        \caption*{$B_2$}
        \includegraphics[width=\textwidth]{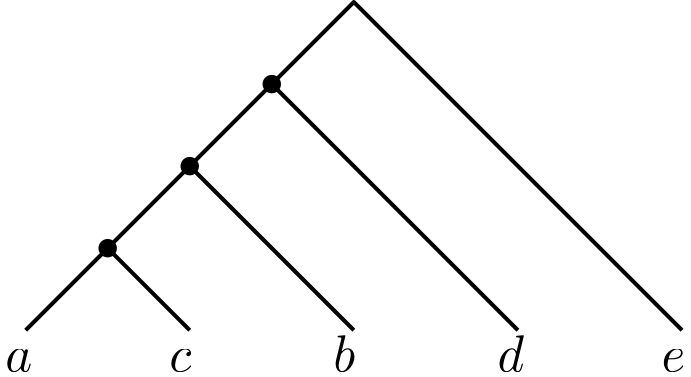}
    \end{subfigure}    
    \caption{The pairs $\{A_1, A_2\}$ and $\{B_1, B_2\}$ of labelled trees on $5$ leaves which have identical $4$-buckets.}
    \label{fig:five_leaf_labelled_counterexample_1}
\end{figure}

\begin{figure}[ht]
    \centering
    \begin{subfigure}[b]{0.4\textwidth}
        \caption*{$C_1$}
        \includegraphics[width=\textwidth]{5_leaf_labelled_counterexample_1_1.png}
    \end{subfigure}
    \hfill
    \begin{subfigure}[b]{0.4\textwidth}
        \caption*{$C_2$}
        \includegraphics[width=\textwidth]{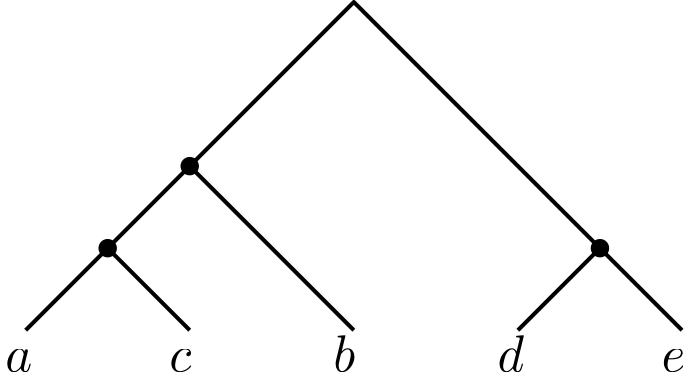}
    \end{subfigure}
    \newline
    \newline
    \begin{subfigure}[b]{0.4\textwidth}
        \caption*{$D_1$}
        \includegraphics[width=\textwidth]{5_leaf_labelled_counterexample_1_4.png}
    \end{subfigure}
    \hfill
    \begin{subfigure}[b]{0.4\textwidth}
        \caption*{$D_2$}
        \includegraphics[width=\textwidth]{5_leaf_labelled_counterexample_2_4.png}
    \end{subfigure}    
    \caption{The pairs $\{C_1, C_2\}$ and $\{D_1, D_2\}$ of labelled trees on $5$ leaves which have identical $4$-buckets.}
    \label{fig:five_leaf_labelled_counterexample_2}
\end{figure}

\begin{figure}[ht]
    \centering
    \begin{subfigure}[b]{0.4\textwidth}
        \caption*{$E_1$}
        \includegraphics[width=\textwidth]{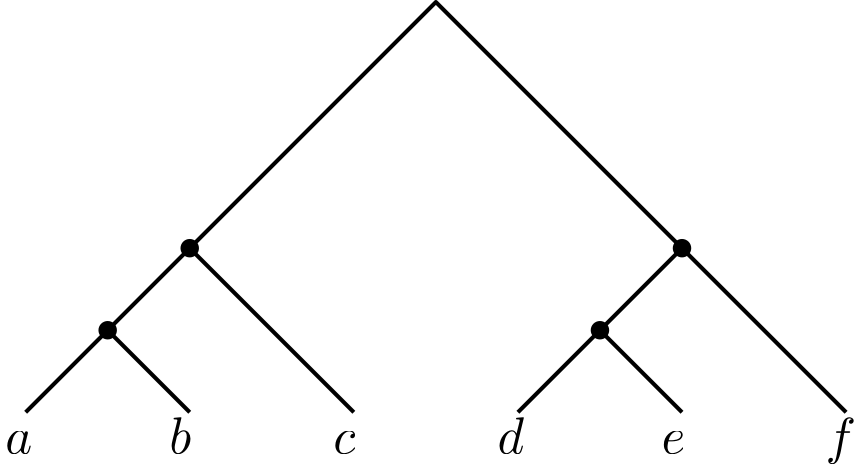}
    \end{subfigure}
    \hfill
    \begin{subfigure}[b]{0.4\textwidth}
        \caption*{$E_2$}
        \includegraphics[width=\textwidth]{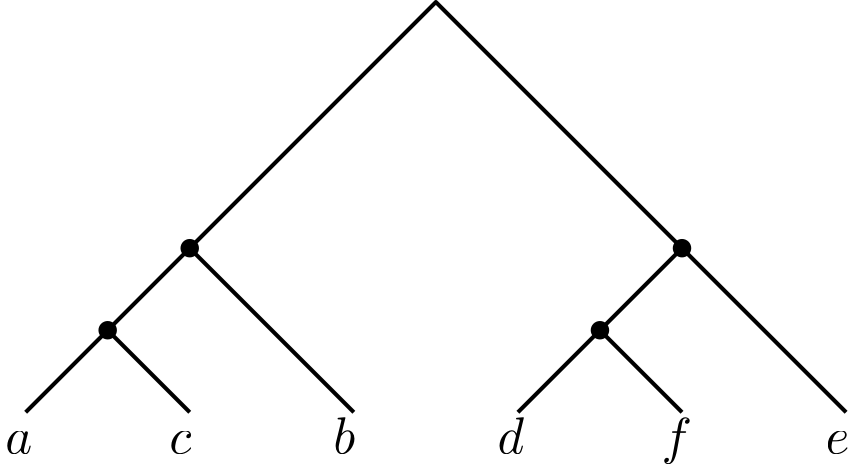}
    \end{subfigure}
    \newline\newline
    \begin{subfigure}[b]{0.4\textwidth}
        \caption*{$F_1$}
        \includegraphics[width=\textwidth]{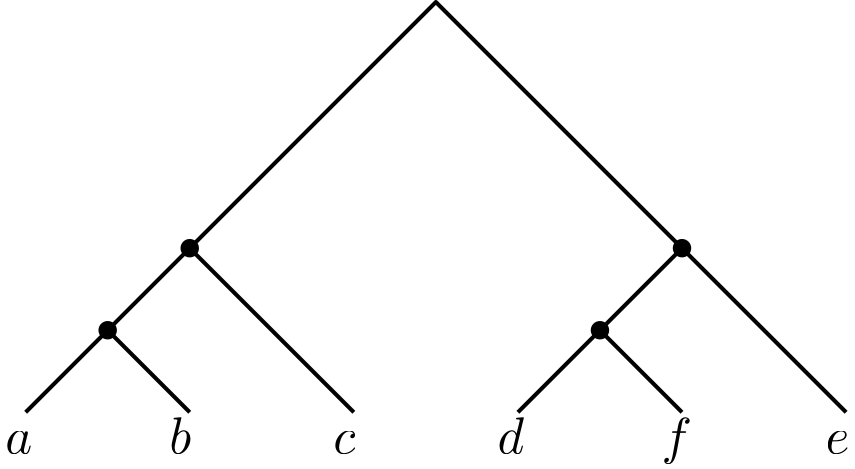}
    \end{subfigure}
    \hfill
    \begin{subfigure}[b]{0.4\textwidth}
        \caption*{$F_2$}
        \includegraphics[width=\textwidth]{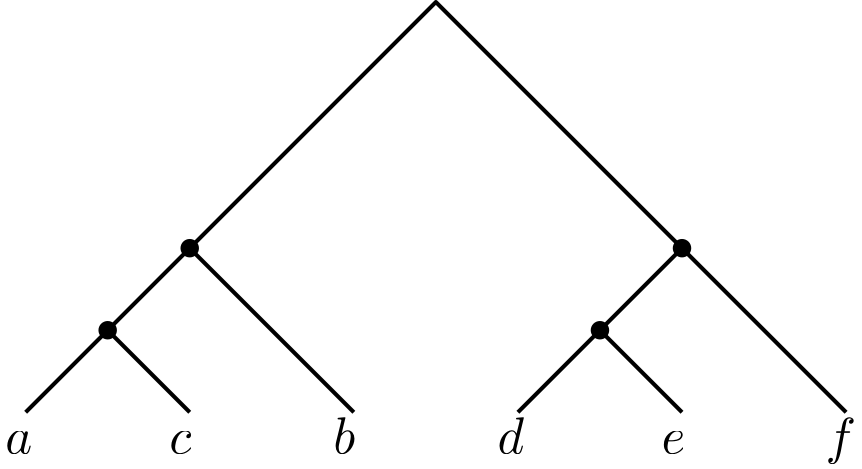}
    \end{subfigure}    
    \caption{The pairs $\{E_1, E_2\}$ and $\{F_1, F_2\}$ of labelled trees on $6$ leaves which have identical $5$-buckets.}
    \label{fig:six_leaf_labelled_counterexample}
\end{figure}

We have thus again shown recoverability for $(n-1)$-buckets of two phylogenetic trees except for a finite number of pathological examples (but this time, in the labelled case).

\clearpage

\section{Discussion}\label{s:discuss}

Our results suggest that reconstruction from mixed phylogenetic signals behaves fundamentally differently from classical phylogenetic reconstruction. While reconstruction from a single tree is unique from the set of its subtrees with $k$ leaves for labelled trees, and with $n-1$ leaves for unlabelled trees, $k$-buckets admit both finite pathological coincidences and infinite families of ambiguities arising from subtree swaps.

More specifically, we have demonstrated that it is always possible to recover these trees from their $(n-1)$-buckets in both the labelled and unlabelled cases, aside from some small pathological counterexamples (restricted to $n<9$ in the unlabelled case and $n<7$ in the labelled case). We have also characterised exactly when a pair of trees cannot be recovered from their $3$-bucket. Finally, we have also considered a number of statistics that can be shown to be recoverable - either always, in the case of root balance and pendant depth, or with sufficiently different input trees, in the case of height.

We anticipate that future research in this area will likely fall under one of three general categories, any of which may prove interesting. 

The first is considering larger tree sets or subtrees with different numbers of leaves. For instance, we have only considered cases with buckets sourced from two labelled trees, and subtrees with $(n-1)$ or $3$ leaves. We have shown that there exist an infinite number of unrecoverable pairs for $3$-buckets on two labelled trees, but a finite number for $(n-1)$-buckets. Discovering exactly when this transition happens would be an intriguing direction. Similarly, in the unlabelled case, we saw that recoverability is completely impossible from the $3$-bucket, but outside this phenomenon, is there $3<k<n-1$ for which recoverability is sometimes possible, but with an infinite family of counterexamples? If so, at what point does this transition occur?

A further natural question, of course, is how the behaviour changes as we modify the number of trees, which we shall denote $m$. Are the characterisations notably different if $m=3$? If this proves tractable, an interesting direction would be asymptotics, as $m$ approaches infinity. It seems likely that for fixed $n$, as $m$ increases recoverability becomes more difficult, but does the proportion of recoverable trees become vanishingly small, or approach a fixed proportion? 

Similar to the question of the phase transition from infinite to finite counterexamples when modifying $k$, if our supposition that increasing $m$ makes recoverability increasingly difficult, one can also ask at what point does recoverability become less likely than unrecoverability? Is there an $m$ at which the finite family of counterexamples for $(n-1)$-buckets becomes an infinite family?

The second promising direction is that of partial information, such as attempting to recover input trees when not all of the subtrees in the bucket are provided (or indeed, if some small number of them are incorrect). This may be done, for instance, by restricting the number to some proportion of $n$, or could also be considered from the perspective of decks instead of multidecks, in which we are not given multiplicities of each subtree (as in \cite{Decks_of_rooted_binary_trees}). For instance, our proof of recoverability for the labelled case of the $(n-1)$-bucket requires leveraging four specific trees in the $(n-1)$-bucket, but the problem may become considerably more difficult if we cannot guarantee that they are contained in our bucket. In particular, it would be striking if one could show a minimum required number of subtrees required to recover our main trees. Such results could provide guidance for biologists regarding the amount of partial information required before attempting reconstruction from mixed phylogenetic signals

Finally, the third possible avenue for research we have identified is of investigating the properties that have already been introduced (although potentially also in the contexts outlined in the other directions above). For instance, what is the algorithmic complexity of recovering the set of input trees given a $k$-bucket in general? Secondly, we did not characterise the image of $k$-buckets in the present manuscript, and any future research would certainly benefit if there was a characterisation beyond something as simple as ``these subtrees can be partitioned into $m$ sets of compatible subtrees''.

\section*{Acknowledgements} 
The authors wish to thank Andrew Francis, Thomas Britz and Kaitlyn O'Reilly for helpful discussions throughout the development of this manuscript. The authors would further like to thank Yukihiro Murakami and Kristina Wicke for helpful initial conversations at the inception of this project, 
at the International Workshop on Probabilistic, Combinatorial, and Algorithmic Phylogenetics, Taipei, 2026. SB's research was supported by the Commonwealth through an Australian Government Research Training Program Scholarship [DOI: \url{https://doi.org/10.82133/C42F-K220}]. MH's research was supported by the Australian Government through the Australian Research Council's Discovery Projects funding scheme (project DP260102678). The views expressed herein are those of the authors and are not necessarily those of the Australian Government or Australian Research Council.

\section*{Conflict of interest} The authors herewith certify that they have no affiliations with or involvement in any
organisation or entity with any financial (such as honoraria; educational grants; participation in speakers’ bureaus;
membership, employment, consultancies, stock ownership, or other equity interest; and expert testimony or patent-licensing
arrangements) or non-financial (such as personal or professional relationships, affiliations, knowledge or beliefs) interest in the subject matter discussed in this manuscript.

\section*{Data availability statement} 
Data sharing is not applicable to this article as no new data were created or analysed in this study.

\bibliographystyle{alpha}
\bibliography{sample}

\end{document}